\documentclass[american,aps,prx,superscriptaddress,longbibliography,floatfix]{revtex4-2}
\usepackage[T1]{fontenc}
\usepackage{inputenc}
\usepackage{color}
\usepackage{babel}
\usepackage{physics}
\usepackage{mathtools}
\usepackage{bbding}
\usepackage{pifont}
\usepackage{textcomp}
\usepackage{wasysym}
\usepackage{amsthm}
\usepackage{nicefrac}
\usepackage{wasysym}
\usepackage{dsfont}
\usepackage{amsmath}
\usepackage{bbm}
\usepackage{amssymb}
\usepackage{bbold}
\usepackage{graphicx}
\usepackage{enumitem}
\definecolor{darkblue}{rgb}{0.0, 0.0, 0.55}

\usepackage[colorlinks=true,
            allcolors=blue]{hyperref}

\hypersetup{
     colorlinks   = true,
     linkcolor    = blue,
     citecolor    = blue,
     urlcolor     = blue,
     }

\usepackage{appendix}
\renewcommand{\selectlanguage}[1]{}
\makeatletter
%%%%%%%%%%%%%%%%%%%%%%%%%%%%%% Textclass specific LaTeX commands.
\@ifundefined{textcolor}{}
{%
	\definecolor{BLACK}{gray}{0}
	\definecolor{WHITE}{gray}{1}
	\definecolor{RED}{rgb}{1,0,0}
	\definecolor{GREEN}{rgb}{0,1,0}
	\definecolor{BLUE}{rgb}{0,0,1}
	\definecolor{CYAN}{cmyk}{1,0,0,0}
	\definecolor{MAGENTA}{cmyk}{0,1,0,0}
	\definecolor{YELLOW}{cmyk}{0,0,1,0}
}

%%%%%%%%%%%%%%%%%%%%%%%%%%%%%% User specified LaTeX commands.
\usepackage{babel}
\usepackage{txfonts}%\usepackage{braket}
\usepackage{colortbl}\definecolor{myurlcolor}{rgb}{0,0,0.7}
\newcommand{\id}{{\operatorname{id}}}

\usepackage[normalem]{ulem}
\usepackage{tikz}
\usetikzlibrary{circuits.ee.IEC}

%Hilbert spaces

% Environments
\newtheorem{theorem}{Theorem}

\newtheorem{lemma}{Lemma}
\newtheorem*{nonnumberedth}{Theorem}
\newtheorem*{nonnumberedprop}{Proposition}
\newtheorem*{nonumberedlemma}{Lemma}

\newtheorem{proposition}{Proposition}

%....................................................................................
\def\tr{\operatorname{tr}}
\def\supp{\operatorname{supp}}
\def\St{\operatorname{St}}
\def\Ch{\operatorname{Ch}}
\def\Pos{\operatorname{Pos}}
\def\id{\operatorname{id}}
\def\d{\operatorname{d}}

\newcommand{\n}{\mathcal{N}}
\newcommand{\m}{\mathcal{M}}
\newcommand{\U}{\mathcal{U}}

\usepackage{tikz}

\makeatletter
\newsavebox\myboxA
\newsavebox\myboxB
\newlength\mylenA
%......................................................................................
\newcommand*\xoverline[2][0.75]{%
    \sbox{\myboxA}{$\m@th#2$}%
    \setbox\myboxB\null% Phantom box
    \ht\myboxB=\ht\myboxA%
    \dp\myboxB=\dp\myboxA%
    \wd\myboxB=#1\wd\myboxA% Scale phantom
    \sbox\myboxB{$\m@th\overline{\copy\myboxB}$}%  Overlined phantom
    \setlength\mylenA{\the\wd\myboxA}%   calc width diff
    \addtolength\mylenA{-\the\wd\myboxB}%
    \ifdim\wd\myboxB<\wd\myboxA%
       \rlap{\hskip 0.5\mylenA\usebox\myboxB}{\usebox\myboxA}%
    \else
        \hskip -0.5\mylenA\rlap{\usebox\myboxA}{\hskip 0.5\mylenA\usebox\myboxB}%
    \fi}
\makeatother

\usepackage{hyperref}
%......................................................................................

\begin{document}
\title{Erasure cost of a quantum process: A thermodynamic meaning of the dynamical min-entropy}

\author{Himanshu Badhani}\email{himanshubadhani@gmail.com}
\affiliation{Center for Security, Theory and Algorithmic Research, International Institute of Information Technology, Hyderabad, Gachibowli, Telangana 500032, India}
\affiliation{Centre for Quantum Science and Technology, International Institute of Information Technology, Hyderabad, Gachibowli, Telangana 500032, India}

\author{Dhanuja GS}
\affiliation{Centre for Quantum Science and Technology, International Institute of Information Technology, Hyderabad, Gachibowli, Telangana 500032, India}
\affiliation{Center for Computational Natural Sciences and Bioinformatics, International Institute of Information Technology, Hyderabad, Gachibowli, Telangana 500032, India}

\author{Swati Choudhary}
\affiliation{Centre for Quantum Science and Technology, International Institute of Information Technology, Hyderabad, Gachibowli, Telangana 500032, India}
\affiliation{Center for Computational Natural Sciences and Bioinformatics, International Institute of Information Technology, Hyderabad, Gachibowli, Telangana 500032, India}
\affiliation{Harish-Chandra Research Institute,  A CI of Homi Bhabha National Institute, Jhunsi, Uttar Pradesh  211 019, India}

\author{Vishal Anand}
\affiliation{Centre for Quantum Science and Technology, International Institute of Information Technology, Hyderabad, Gachibowli, Telangana 500032, India}
\affiliation{Center for Computational Natural Sciences and Bioinformatics, International Institute of Information Technology, Hyderabad, Gachibowli, Telangana 500032, India}

\author{Siddhartha Das}
\email{das.seed@iiit.ac.in}
\affiliation{Center for Security, Theory and Algorithmic Research, International Institute of Information Technology, Hyderabad, Gachibowli, Telangana 500032, India}
\affiliation{Centre for Quantum Science and Technology, International Institute of Information Technology, Hyderabad, Gachibowli, Telangana 500032, India}

\begin{abstract}
The erasure of information is fundamentally an irreversible logical operation, carrying profound consequences for the energetics of computation and information processing. We investigate the thermodynamic costs associated with erasing (and preparing) quantum processes. Specifically, we analyze an arbitrary bipartite unitary gate acting on logical and ancillary input-output systems, where the ancillary input is always initialized in the ground state. We focus on the adversarial erasure cost of the reduced dynamics --- that is, the minimal thermodynamic work cost to erase the logical output of the gate for any logical input, assuming full access to the ancilla but no access to any purifying reference of the logical input state. We determine that this adversarial erasure cost is directly proportional to the negative min-entropy of the reduced dynamics, thereby giving the dynamical min-entropy a clear operational meaning. The dynamical min-entropy can take positive and negative values, depending on the underlying quantum dynamics. The negative value of the erasure cost implies that the extraction of thermodynamic work is possible instead of its consumption during the process. A key foundation of this result is the quantum process decoupling theorem, which quantitatively relates the decoupling ability of a process with its min-entropy. This insight bridges thermodynamics, information theory, and the fundamental limits of quantum computation.
\end{abstract}

\maketitle
\section{Introduction}
Harnessing the quantum properties of physical systems at the microscopic scale is crucial for advancing technology and driving the miniaturization of computational processors. As quantum systems become integral to information processing, understanding their energetics becomes essential for the architecture and operation of efficient quantum processors~\cite{Per85,PSE96,VWI09,Sag12,KBL+19,Auf22,DS25}. In quantum computation and information tasks, circuits operate on input quantum states to produce outputs that depend on the structure and function of the circuit. Quantum algorithms often require systems to begin in fixed, predefined pure states~\cite{Kit97,SV99}. Due to the limited availability of quantum resources, it is essential to reset these output states for reuse in subsequent algorithms or protocols. Resetting of quantum states to a desired pure state is erasure of information~\cite{Shi95, Ben03}, which incurs thermodynamic cost. To assess the energetic efficiency of quantum computational devices, it is then crucial to understand the thermodynamic work required to erase and prepare the output of an arbitrary quantum process.

 \begin{figure}
  \centering
  \includegraphics[width=\linewidth]{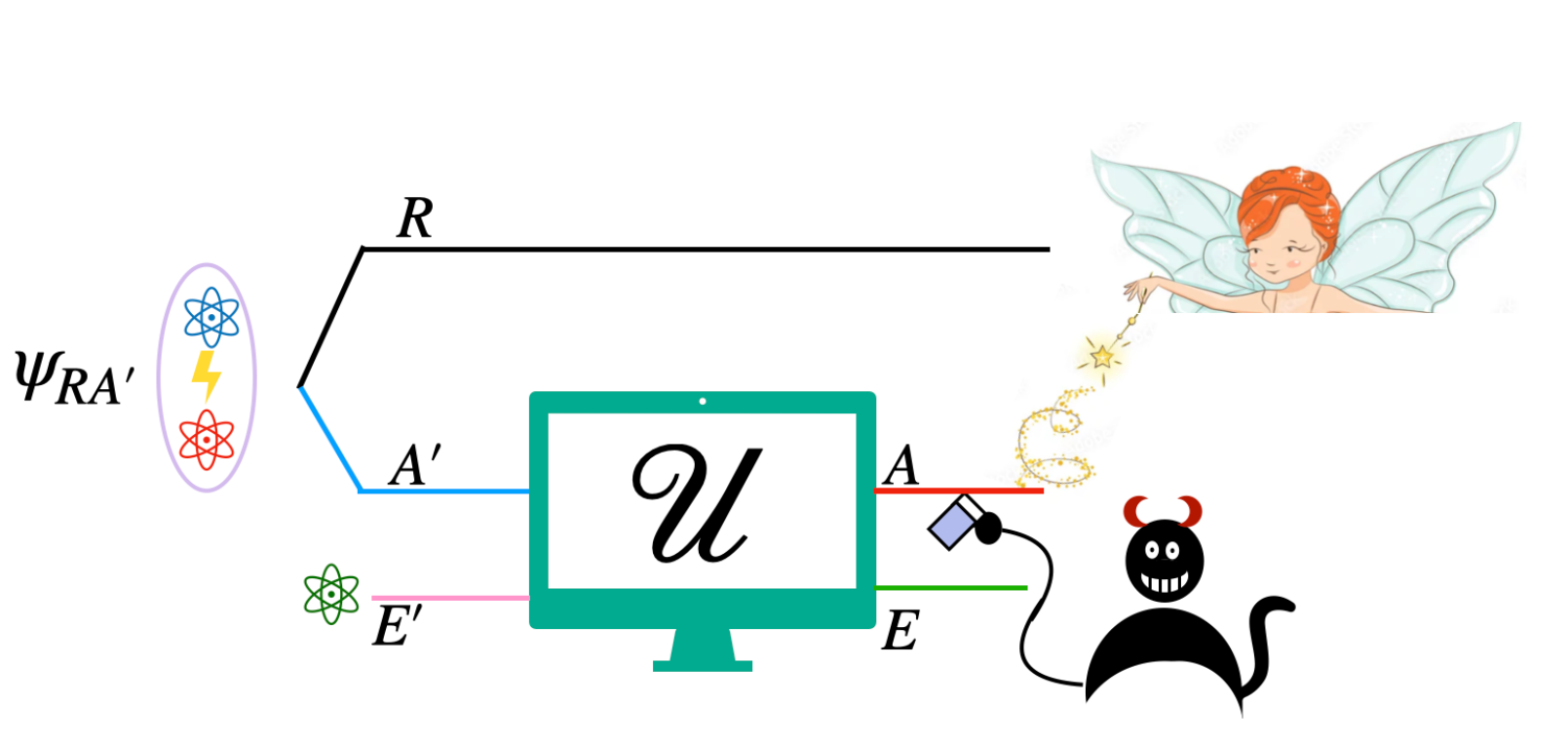}
  \caption{Pictorial representation of the main problem: We consider a bipartite unitary gate (channel) $\mathcal{U}_{A'E'\to AE}$ where $E',E$ are ancilla ($\op{0}_{E'}$ being the initial state). $R$ is a reference that purifies the logical input $A'$. The Hamiltonian of the logical output $A$ is trivial. The goals are to determine: (a) the optimal erasure cost of $A$ of the quantum process $\mathcal{U}$ when the eraser has access to $E$ but not $R$, (b) the optimal preparation cost of $A$ when the preparer has access to $R$ but not ancilla.}
  \label{fig:problem}
\end{figure}
In a seminal work~\cite{Lan61}, Landauer demonstrated that any logically irreversible operation is inherently thermodynamically irreversible, and must be accompanied by a minimum heat dissipation of $k_B T \ln 2$ joules per bit of information~\cite{Ben73, Ben03,BAC+12}. Erasing a single bit of information necessarily induces an entropy change $\Delta S=k_B \ln2$, resulting in heat dissipation $Q=T \Delta S=k_B T \ln2$ into an environment at a temperature $T$. Equivalently, this implies that at least $k_B T \ln2$ of work must be performed on a system to erase one bit of information stored in it. This result, known as Landauer’s principle, forms a foundational correspondence between information theory and thermodynamics~\cite{CGG+16,MSBA16,AH19,SW21,P23}. It also provides a resolution to the famous paradox of Maxwell’s demon. In this, although the demon appears to violate the second law of thermodynamics by using information stored in a finite-size memory register to extract work in a cyclic process, the paradox is resolved when one accounts for the thermodynamic cost of resetting the memory~\cite{Ben03,RW14}. The second law remains intact when the memory register, system, and environment are considered together. Interestingly, in a composite bipartite system, one may aim to erase information from a specific subsystem while leaving the other unaffected. In such scenarios, the thermodynamic work cost of erasure can be significantly reduced by leveraging correlations between the subsystems, particularly when side information or memory in the second subsystem is accessible~\cite{RAR+11, FDOR15, BRLW17}. Under these conditions, the heat dissipated during erasure becomes proportional to the conditional entropy of the subsystem being reset, which is lower than its entropy in general.

Quantum computing and communication devices are composed of several quantum gates~\cite{Kit97,GM02,DBWH21}. Realistically, only finite quantum resources, required quantum states and gates, are available for use. It is desirable to initialize the logical systems to a pure state, at the end or possibly at appropriate stages of the algorithm or protocol run, so that they can be reused~\cite{SV99}. Furthermore, in practice, gates are used only a finite number of times during each computational and information processing tasks~\cite{BV93,Tom21,DBWH21,MY25}. These aspects are important to consider for designing energy-efficient quantum algorithms and processors. We analyze the thermodynamic cost of erasure (and preparation) of an arbitrary quantum gate when the gate is used only once. We show that the erasure cost of a quantum gate is related to its min-entropy. Our framework provides a quantitative method to assess different quantum processors on the basis of the erasure (or preparation) costs of the gate components.

\textit{Problem setup}.--- Our framework centers on bipartite unitary transformations, which induce closed-system evolution in bipartite quantum systems. For an arbitrary bipartite unitary process $\mathcal{U}_{A'E'\to AE}$, we let $A',A$  be logical and $E',E$ be ancillary. We know that the reduced dynamics after locally tracing out $E$ for any state $\omega_{E'}$ uncorrelated with $A'$, i.e., effective process $\mathcal{N}_{A'\to A}(\cdot):=\tr_E\circ\mathcal{U}_{A'E'\to AE}(~\cdot\otimes\omega_{E'})$, is always a completely positive, trace-preserving linear map, also called a \textit{quantum channel}~\cite{Sti55,Lin75}. The ancilla is also called the environment, and without loss of generality, we assume $\omega_{E'}$ to be in the pure, ground state $\op{0}_{E'}$. There always exists an isometric operation $\mathcal{V}^{\mathcal{N}}_{A'\to AE}$ , which is called an isometric extension of a quantum channel $\mathcal{N}_{A'\to A}$ formed by an isometry operator $V_{A'\to AE}=U_{A'E'\to AE}\ket{0}_{E'}$.
\begin{align} 
\mathcal{N}_{A'\to A}(\cdot)&=\tr_E\circ\mathcal{U}_{A'E'\to AE}(~\cdot\otimes\op{0}_{E'}) \label{eq:stines}\\
&=\tr_E(\mathcal{V}^{\mathcal{N}}_{A'\to AE}(\cdot)),\label{eq:iso}
\end{align}
where $\mathcal{V}^\mathcal{N}_{A'\to AE}(\cdot)=V_{A'\to AE}(\cdot)(V_{A'\to AE})^\dag$. 

In this work, we assume that the Hamiltonian of the logical output is trivial. We focus on determining the optimal amount of thermodynamic work required by an eraser to erase the output of a quantum channel for all possible inputs when access to the environment (ancilla) is available and the channel is used only once. We do not allow the eraser to have access to the purifying reference $R$ of the logical input. We refer to this optimal amount of thermodynamic work as the (one-shot) adversarial erasure cost of a given quantum channel $\n_{A'\to A}$. This is exactly the same as the erasure cost of the logical output $A$ from a single use of the quantum circuit $\mathcal{U}_{A'E'\to AE}$~\eqref{eq:stines} when the eraser has access to ancilla $E$ but no access to $R$ (see Fig.~\ref{fig:problem}). We also inspect the preparation cost of a channel, where the objective is to prepare the logical output $A$ when the preparer has access to the reference but no access to the ancilla. The preparation of the channel $\mathcal{N}_{A'\to A}$ requires the preparer to be able to prepare joint reference-logical output state $\mathcal{N}(\psi_{RA'})$ of each possible reference-logical input state $\psi_{RA'}$, from the erased logical system $A$ when $R$ is accessible (see Section~\ref{sec:ap-res} for formal description).

\begin{figure}
  \centering
  \includegraphics[width=\linewidth]{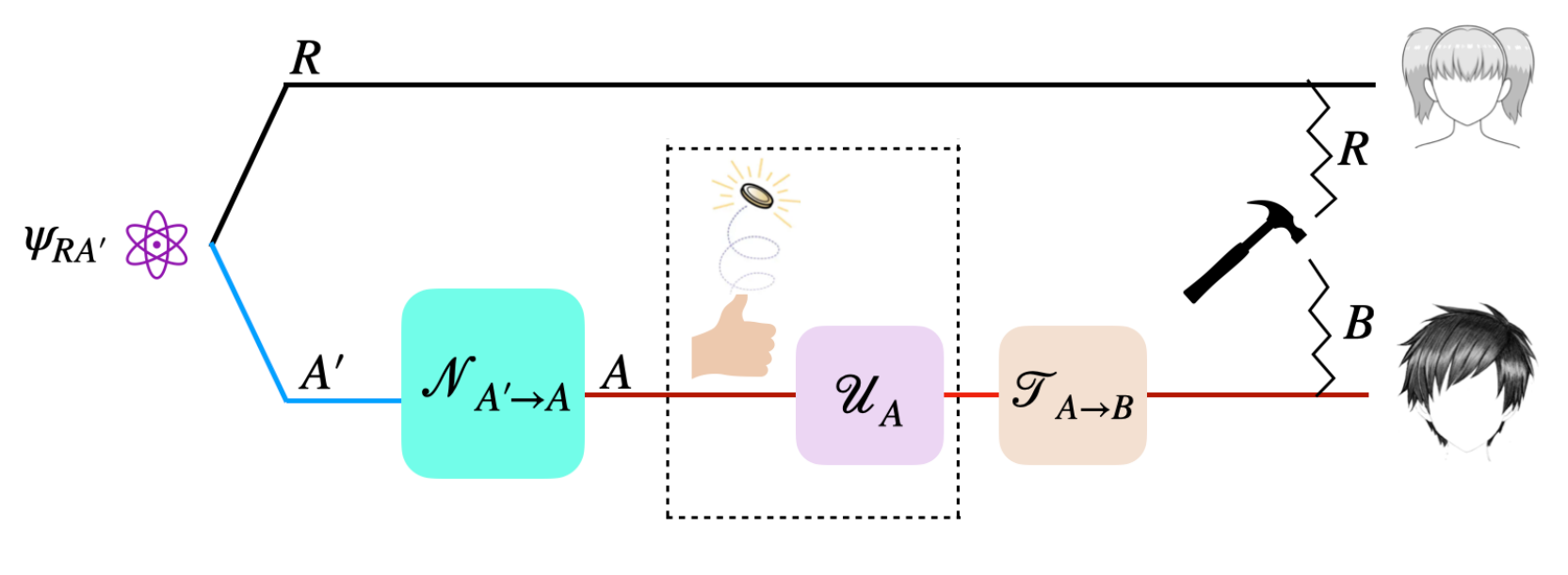}
  \caption{The picture illustrates the mechanism of the decoupling theorem for quantum channels. Decoupling theorem provides a fundamental limit on the ability of a channel $\n_{A'\to A}$ post-processed with $\mathcal{T}_{A\to B}\circ\mathcal{U}_A$, where $\mathcal{U}_A$ is a Haar-random unitary channel, to decouple the output $B$ from its reference $R$. See Theorem~\ref{thm:de-ch} for the formal, precise statement.}
  \label{fig:decoup}
\end{figure}

Main results.--- We provide an operational meaning to the dynamical min-entropy, i.e., min-entropy of a quantum channel~\cite{GW21}, in the context of quantum thermodynamics. We find that the one-shot adversarial erasure cost of a quantum channel, using a reservoir at a fixed temperature $T$, is (approximately) proportional to the negative of its dynamical min-entropy. See Theorem~\ref{thm:workcost} and Proposition~\ref{prop:eras_prep_bounds} for formal, precise statements. We consider the adversarial erasure costs under two different frameworks: (i) a thermodynamic approach~\cite{RAR+11}, and (ii) a resource-theoretic approach (in particular, resource theory of conditional nonuniformity)~\cite{JGW25}. A pivotal underlying concept behind the operational meaning of the dynamical min-entropy is the decoupling theorem for quantum channels, which shows that a channel’s ability to decouple the output from its reference is essentially related to its min-entropy. 

We notice that the adversarial erasure of the output of a channel subsumes the decoupling of the output from the reference as well as the ancilla. This observation provides an insight to the emergence of the dynamical min-entropy as an optimal rate for the one-shot erasure cost under both thermodynamic and resource-theoretic approaches. Bounding the adversarial erasure cost of the logical output of a quantum circuit (gate), in terms of the min-entropy of the reduced dynamics (a quantum channel), allows for quantitative assessment of the energetics of quantum processing and computing devices for the resetting (initializing the states to $\op{0}$) and reuse of logical quantum systems. The adversarial erasure cost of a quantum channel is negative (nonpositive) if it is a PPT channel, i.e., its Choi state remains positive under partial transposition. Entanglement-breaking channels (e.g., measurement, replacer channels) are a kind of PPT channels~\cite{HSR03,HRF20}; though not all PPT channels are entanglement-breaking, all PPT channels have zero quantum capacity under local operations and classical communication~\cite{EVWW01,SY08}. Negative value of the erasure cost implies that we can produce thermodynamic work during the erasure process instead of requiring (consuming) it.

Organization.--- We briefly introduce standard notations, definitions, and facts useful for the derivation of our results and discussion surrounding methods and examples in Section~\ref{sec:prem}. In Section~\ref{sec:er-de}, we discuss the connection between the erasure of a part of a bipartite quantum state and the decoupling of a composite system. We derive the decoupling theorem for quantum channels and dual expressions of the dynamical min-entropy. This provides an intuitive build-up for determining the operational meaning of the dynamical entropy in quantum thermodynamics. We formally introduce the adversarial erasure cost of a quantum process in Section~\ref{sec:er-channel}. In Section~\ref{sec:ap-dec}, we utilize the framework using decoupling of \cite{RAR+11} to define the one-shot adversarial erasure cost of a quantum channel and determine that the erasure cost is approximately upper bounded by the negative of its dynamical min-entropy with high probability. In Section~\ref{sec:ap-res}, we consider the resource theory of conditional nonuniformity~\cite{JGW25} to define the adversarial erasure cost and the preparation cost of a quantum channel. The preparation task is the opposite of the erasure cost. In the preparation task for states, the goal is to prepare a composite state $\rho_{AB}$ from $\op{0}_{A_0}\otimes\rho_B$. In the erasure task for states, the goal is to reset $A$ of a composite state $\rho_{AB}$, i.e., $\rho_{AB}\to \op{0}_{A_0}\otimes\rho_B$. These tasks are performed under conditionally uniformity-covariant channels~\cite{JGW25}. We upper bound the erasure and the preparation costs for quantum channels in terms of the dynamical min-entropies. We show that the zero-error costs are directly proportional to the negative of the dynamical entropy when the reservoir is at a fixed temperature. We discuss the connection between some characteristic properties of quantum channels and their dynamical min-entropies in Section~\ref{sec:examples}. Finally, we discuss the summary of our work and provide concluding remarks in Section~\ref{sec:con}. 

\section{Preliminaries}\label{sec:prem}
\textit{Notations}.--- The separable Hilbert space associated with a system $A$ is denoted as $A$ itself, and its dimension as $|A|$. $\mathbbm{1}_A$ is the identity operator and $\pi_A:=\frac{1}{|A|}\mathbbm{1}_A$ is the maximally mixed state of $A$. For a bipartite operator $\rho_{AB}$, $\rho_A=\tr_B(\rho_{AB})$. Let $\St(A)$ denote the set of all states (density operators) of $A$. $\op{\psi}_A$ denotes that a state $\psi_A=\op{\psi}_A$ is pure. Let $\Ch(A',A)$ denote the set of all quantum channels $\mathcal{N}_{A'\to A}$. The Choi operator of a linear map $\n_{A'\to A}$ is $\Gamma^{\n}_{RA}=\id_R\otimes\n_{A'\to A}(\Gamma_{RA'})$ for $R\simeq A'$, where $\id_R$ is the identity supermap with input-output $R$ and $\Gamma_{RA'}:=\sum_{i,j=0}^{d-1}\ket{ii}\bra{jj}_{RA'}$ with $d=\min\{|R|,|A'|\}$. $\Phi_{RA}:=\frac{1}{d}\Gamma_{RA}$, where, $d=\min\{|R|,|A|\}$, is a maximally entangled state and $\Phi^{\mathcal{N}}_{RA}:=\frac{1}{|A|}\Gamma^{\mathcal{N}}_{RA}$ is a state (Choi state of $\mathcal{N}_{A'\to A}$) if $\mathcal{N}_{A'\to A}$ is a quantum channel. For an isometric channel $\mathcal{V}_{A'\to A}$, we have $\mathcal{V}_{A'\to A}(\cdot):=V_{A'\to A}(\cdot)(V_{A'\to A})^\dag$ and $|A'|\leq |A|$. Wherever it is clear from the context, we will abbreviate our notations, e.g., write $\mathcal{N}_{A'\to A}(\psi_{RA'})$ or $\mathcal{N}(\psi_{RA'})$ for $\id_R\otimes\mathcal{N}_{A'\to A}(\psi_{RA'})$. $\log$ is logarithm with base $2$ and $\ln$ is natural logarithm.

$\norm{\rho}_1:= \tr(\sqrt{\rho^\dag\rho})$ is the trace-norm of an operator $\rho$, and $\norm{\n}_{\diamond}:= \sup_{\rho\in{\St(RA')}}\norm{\id_R\otimes\n_{A'\to A}(\rho_{RA'})}_1$ is the diamond norm of a Hermiticity-preserving map $\n_{A'\to A}$, where it suffices to take $R\simeq A'$ and $\rho_{RA'}$ to be pure. The fidelity between $\rho,\sigma\in\St(A)$ is $F(\rho,\sigma):=\norm{\sqrt{\rho}\sqrt{\sigma}}_1^2$ and $\mathrm{P}(\rho,\sigma):=\sqrt{1-F(\rho,\sigma)}$ is the purified distance. 

\textit{Conditional entropy}.--- Conditional entropy $\mathbf{S}(A|B)_{\rho}$ of a quantum state $\rho_{AB}$ is an entropic function $\mathbf{S}$ that quantifies the uncertainty (randomness) of $A$ when $B$ is accessible. There are several entropic functions derived from different families of relative entropies, see Appendix~\ref{app:entropies} for details. We focus on conditional min-entropies derived from the max-relative entropy $D_{\max}$. The max-relative entropy between a state $\rho_A$ and a positive semidefinite operator $\sigma_A$ is~\cite{Dat09} $D_{\max}(\rho\Vert \sigma)
     =\log\inf_{\lambda}\{\lambda:~ \lambda \sigma\ge \rho\}$. The conditional min-entropy $S_{\min}(A|B)_{\rho}$ of a state $\rho_{AB}$ is defined as~\cite{Ren06,Tom21}
\begin{equation}
   S_{\min}(A|B)_{\rho}:=S^\uparrow_{\min}(A|B)_\rho :=-\inf_{\sigma\in\St(B)}D_{\max}(\rho_{AB}\Vert \mathbbm{1}_A\otimes \sigma_B).
\end{equation}
There's a variant of the conditional min-entropy of a state $\rho_{AB}$ defined as $S^\downarrow_{\min}(A|B)_\rho:=-D_{\max}(\rho_{AB}\Vert \mathbbm{1}_A\otimes \rho_B)$. It follows that $S_{\min}(A|B)_{\rho}\geq S^\downarrow_{\min}(A|B)_\rho$.

\textit{Hypothesis testing conditional entropy.}--- The $\varepsilon$-hypothesis testing relative entropy~\cite{BD10,LR12} between a state $\rho_A$ and $\sigma_A\ge 0$ for $\varepsilon\in[0,1]$ is defined as
    \begin{equation}
D^\varepsilon_{H}(\rho\Vert \sigma)= -\log\inf_{\Lambda}\left\{\tr(\Lambda\sigma) :0\leq \Lambda \leq\mathbbm{1}, \tr(\rho \Lambda) \ge 1-\varepsilon\right\}.
    \end{equation}
For $\varepsilon=0$, $D_{H}^0(\rho\Vert\sigma)=-\log\tr(\Pi_\rho\sigma)$ where $\Pi_\rho$ is the projector onto the support of $\rho$. The $\varepsilon$-hypothesis testing conditional entropy of $\rho_{AB}$ is defined as
\begin{align}
      S^{\varepsilon}_{H}(A|B)_\rho &:=-\inf_{\sigma\in\St(B)}D^\varepsilon_{H}(\rho\Vert \mathbbm{1}_A\otimes \sigma_B).
\end{align}

\textit{Dynamical entropy}.--- A quantum channel is a dynamical process, so we also refer to the entropy of a quantum channel as the dynamical entropy. Let $\mathcal{R}^\omega_{A'\to A}$ denote a replacer map that always outputs a fixed operator $\omega_A$, $\mathcal{R}^\omega_{A'\to A}(\rho_{A'}):= \tr(\rho_{A'})\omega_A$ for all $\rho_{A'}$. The min-entropy $S_{\min}[\mathcal{N}]$ of a quantum channel $\n_{A'\to A}$ is defined as~\cite{GW21}
\begin{align}
    S_{\min}[\n] &:=-\sup_{\psi\in\St(RA')}D_{\max}(\n_{A'\to A}(\psi_{RA'})\Vert\mathcal{R}^{\mathbbm{1}}_{A'\to A}(\psi_{RA'})),\nonumber\\
    &= - D_{\max}(\Phi^{\n}_{RA}\Vert\pi_{R}\otimes\mathbbm{1}_A)= S_{\min}^\downarrow(A|R)_{\Phi^\n},
\end{align}
where $\Phi^{\n}_{RA}$ is the Choi state of $\n$, and it is proven that~\cite{GW21}
\begin{align}\label{eq:smin-uparrow}
    S_{\min}[\n]& =\inf_{\op{\psi}\in\St(RA')}S_{\min}^\downarrow(A|R)_{\n(\psi_{RA'})}\nonumber\\
&=\inf_{\op{\psi}\in\St(RA')}S_{\min}^\uparrow(A|R)_{\n(\psi_{RA'})}.
\end{align}

\textit{Smoothened entropies}.---  There are families of entropies parametrized by a smoothing parameter $\varepsilon\in[0,1]$, which appears as an error probability in various information-theoretic tasks (protocols)~\cite{Ren06}. An $\varepsilon$-ball around a state $\rho_A$ is defined as $\mathcal{B}^\varepsilon(\rho_A)=\{\sigma_A:~ \sigma_A\ge 0, \tr(\sigma_A)\le 1, P(\rho_A,\sigma_A)\le \varepsilon\}$. The smoothened conditional min-entropy $S^\varepsilon_{\min}$ and its variant $S^{\downarrow,\varepsilon}_{\min}$ for state $\rho_{AB}$ are defined as~\cite{Ren06}
    \begin{align}
         S_{\min}^{\varepsilon}(A|B)_\rho&:=\sup\limits_{\widetilde{\rho}\in \mathcal{B}^\varepsilon(\rho_{AB})}  S_{\min}(A|B)_{\widetilde{\rho}}~,\\
     S_{\min}^{\downarrow,\varepsilon}(A|B)_\rho&:=    \sup\limits_{\widetilde{\rho}\in \mathcal{B}^\varepsilon(\rho_{AB})}  -D_{\max}(\widetilde{\rho}_{AB}\Vert\mathbbm{1}_A\otimes\tilde{\rho}_B).
    \end{align}
An $\varepsilon$-ball around a channel $\n_{A'\to A}$ is $\mathcal{B}^\varepsilon[\n]=\{\m_{A'\to A}:~ \mathrm{P}[\n,\m]\le \varepsilon,~\m\in\mathrm{Ch}(A',A)\}$, where \begin{equation}
    \mathrm{P}[\n,\m]:=\sup_{\psi\in\St{(RA')}}\mathrm{P}(\n(\psi_{RA'}),\m(\psi_{RA'}))
\end{equation}
is the purified distance between the two channels $\n,\m\in\Ch(A',A)$ and it suffices to take the supremum over pure states $\psi_{RA'}$ such that $R'\simeq A$. Using this, we define the smoothened min-entropy of a channel $\n_{A'\to A}$ as~\cite{GW21} 
\begin{equation}
    S^\varepsilon_{\min}[\n]=\sup_{\m\in B^\varepsilon[\n]}S_{\min}[\m].
\end{equation}

\section{Erasure and Decoupling}\label{sec:er-de}
Erasure of quantum information (or a system) is to transform the quantum state of a given system to a known pure state. If we consider an arbitrary state $\rho_{AB}$, then the erasure of the state of $A$ would involve transforming the state $\rho_{AB}$ to $\op{0}_A\otimes\rho_B$ which decouples $A$ from $B$ even if $AB$ was initially correlated. In general, for an arbitrary state $\rho_{AB}$, local operations on $A$ is said to decouple it from $B$ if the final state is a product state $\omega_A\otimes\rho_B$ for some state $\omega_A$ while the local (marginal) state of $B$ remains intact. In this sense, erasure is thought to be a particular instance of decoupling where $\omega_A$ has to be a known pure state. An alternate, crude approach could be to think the task of erasing $A$ to include decoupling as a subroutine where $A$ is first brought in contact of a thermal reservoir~(cf.~\cite{Bar00,MDP22}). $A$ gets thermalized and is decoupled from $B$, due to this process $\rho_{AB} \to\gamma^\beta_A\otimes\rho_B$ for a thermal state $\gamma^\beta_A$, where $\beta:=(k_BT)^{-1}$ denotes the inverse temperature of the reservoir. By tuning certain Hamiltonian parameters (associated with the system and reservoir), the state $\gamma^\beta_A$ is transformed approximately to $\op{0}_A$.

The Landauer's principle states that the total (minimum) amount of work (on an average~\footnote{Here, work cost is estimated per copy when asymptotically many copies ($\rho^{\otimes n}$ for $n\to\infty$) of the state $\rho$ is available.}) needed to erase the state $\rho_A$ in contact with a bath at temperature $T$ is lower bounded by $S(A)_{\rho}k_BT\ln 2$, where $S(A)_{\rho}:=S(\rho_A):=-\tr(\rho\log\rho)$ is the von Neumann entropy of $\rho_A$; it holds that $\lim_{n\to \infty}\frac{1}{n}S^\varepsilon_{\min}(\rho^{\otimes n})=S(\rho)$~\cite{Ren06}. The procedure of the erasure of a qubit can be described as follows: Assume a two-level system, with energy eigenstates $\ket{\uparrow}$ and $\ket{\downarrow}$, in a completely mixed state $\pi$. We would like to transform $\pi$ to the state $\ket{\downarrow}$. We couple the system $\pi$ to a bath at temperature $T$ and then manipulate the energy gap $\Delta E$ between the energy levels such that the occupation probability $p_{\downarrow}=(1+e^{-\Delta E/{k_BT}})^{-1}$ of energy level $\ket{\downarrow}$ approaches 1. The process of changing the energy gap between the levels to infinity through an isothermal process requires $k_BT\ln 2$ joules of work.

We now discuss the decoupling theorem which provides insight into the erasure cost of the state of a single copy of a system when side information is available~\cite{RAR+11}. We elaborate on the method and reasoning for the reduction in work cost when compared to the crude (na\"ive) procedure in the next section.

\textit{Decoupling theorem for states}~\cite{DBWR14} (cf.~\cite{MBD+17}): Let $\varphi_{RA}$ be a state, $\varepsilon\in(0,1)$, and $\mathcal{T}_{A\to B}$ is a completely positive map such that its Choi operator $\Gamma^{\mathcal{T}}_{AB}$ satisfies $\tr(\Gamma^{\mathcal{T}}_{AB})\leq |A|$. Under the action of local unitary operators $U_A$ chosen uniformly (with respect to Haar measure over the full unitary group $\mathbb{U}$ on $A$) and followed by $\mathcal{T}_{A\to B}$, we have
\begin{align}
        \int_{\mathbb{U}(A)}\norm{\mathcal{T}_{A\to B}\circ\U_A(\varphi_{RA})-\varphi_R\otimes\Phi^\mathcal{T}_{B}}_1 \d\!U
        \le 2^{-\frac{1}{2} (S^{\varepsilon}_{\min}(A|R)_{\varphi} + S^{\varepsilon}_{\min}(A|B)_{\Phi^\mathcal{T}})} + 12 \varepsilon, \label{eq:decoup}
    \end{align}
where $\U_A(\cdot):= U_A(\cdot)U^\dag_A$ and $\Phi^\mathcal{T}_B=\frac{1}{|A|}\tr_A(\Gamma^\mathcal{T}_{AB})$. $\mathcal{T}$ is a completely positive, trace subpreserving map, e.g., partial trace, measurement operations, etc. By performing a local action $\mathcal{T}_{A\to B}\circ\mathcal{U}_A$ on $A$ of $\varphi_{RA}$, on an average over the choices of $U_A$, $B$ can be decoupled approximately (up to an error $\varepsilon$) from $R$ if 
\begin{equation}
    S^{\varepsilon}_{\min}(A|R)_{\varphi} + S^{\varepsilon}_{\min}(A|B)_{\Phi^\mathcal{T}}\gtrapprox 0.
\end{equation}
The above discussions suggest that the adversarial erasure cost of a quantum channel is delimited by its decoupling capability. A quantum channel $\mathcal{N}_{A'\to A}$ is a good decoupler if we can decouple $A$ from $R$ considerably well for all input states $\rho_{RA'}$. To quantitatively analyze the decoupling capability of a channel, we derive the channel version of the decoupling theorem (see Fig.~\ref{fig:decoup}). We make use of Eq.~\eqref{eq:decoup} and the following lemma to derive the theorem. See Appendices~\ref{lem:smooth-entropy} and \ref{app:proof_de_ch} for the detailed proof of the lemma and theorem below, respectively.
\begin{lemma}\label{lem:smooth-ch}
    For a quantum channel $\n_{A'\to A}$, we have
    \begin{equation}
      S^\varepsilon_{\min}[\n]\leq  \inf_{\op{\psi}\in\St(RA')}S^{\varepsilon}_{\min}(A|R)_{\mathcal{N}(\psi)}.
    \end{equation}
\end{lemma}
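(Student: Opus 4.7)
The plan is to peel the definitions one layer at a time and use the fact (already stated in Eq.~\eqref{eq:smin-uparrow}) that the unsmoothed channel min-entropy admits the ``variational'' form
\[
S_{\min}[\mathcal{M}] \;=\; \inf_{\ket{\psi}\in\St(RA')} S^{\uparrow}_{\min}(A|R)_{\mathcal{M}(\psi)}\;.
\]
The idea is that the only objects in sight are an $\varepsilon$-ball of channels around $\mathcal{N}$ and an $\varepsilon$-ball of states around each $\mathcal{N}(\psi)$, and the channel-ball is transported into the state-ball by evaluating on any fixed pure input, because $\mathrm{P}[\mathcal{N},\mathcal{M}]$ is by definition the supremum of $\mathrm{P}(\mathcal{N}(\psi),\mathcal{M}(\psi))$ over pure $\psi_{RA'}$.

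Concretely, I would fix an arbitrary pure state $\ket{\psi}_{RA'}$ and an arbitrary $\mathcal{M}\in\mathcal{B}^\varepsilon[\mathcal{N}]$. The first step is to observe that $\mathrm{P}(\mathcal{M}(\psi_{RA'}),\mathcal{N}(\psi_{RA'}))\le\mathrm{P}[\mathcal{M},\mathcal{N}]\le\varepsilon$, so that $\mathcal{M}(\psi_{RA'})\in\mathcal{B}^\varepsilon(\mathcal{N}(\psi_{RA'}))$. The second step is to bound $S_{\min}[\mathcal{M}]$ from above by specializing the infimum in the variational formula to this particular $\psi$:
\[
S_{\min}[\mathcal{M}] \;\le\; S^{\uparrow}_{\min}(A|R)_{\mathcal{M}(\psi_{RA'})} \;=\; S_{\min}(A|R)_{\mathcal{M}(\psi_{RA'})}\;,
\]
where the last equality is just the notation $S_{\min}:=S^{\uparrow}_{\min}$ used in the preliminaries.

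The third step is to use the previous observation $\mathcal{M}(\psi)\in\mathcal{B}^\varepsilon(\mathcal{N}(\psi))$ together with the very definition of the smoothed conditional min-entropy of a state to get
\[
S_{\min}(A|R)_{\mathcal{M}(\psi_{RA'})} \;\le\; \sup_{\widetilde{\rho}\in\mathcal{B}^\varepsilon(\mathcal{N}(\psi))} S_{\min}(A|R)_{\widetilde{\rho}} \;=\; S^{\varepsilon}_{\min}(A|R)_{\mathcal{N}(\psi_{RA'})}\;.
\]
Chaining the two inequalities, $S_{\min}[\mathcal{M}]\le S^{\varepsilon}_{\min}(A|R)_{\mathcal{N}(\psi)}$ for every $\mathcal{M}\in\mathcal{B}^\varepsilon[\mathcal{N}]$ and every pure $\psi_{RA'}$. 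Taking the supremum over $\mathcal{M}$ on the left (which yields $S^{\varepsilon}_{\min}[\mathcal{N}]$ by definition) and then the infimum over $\ket{\psi}\in\St(RA')$ on the right gives exactly the claimed inequality.

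I do not expect any serious obstacle here: the argument is a bookkeeping exercise relating two different $\varepsilon$-balls. The only subtlety to check is that the definition of $\mathrm{P}[\mathcal{N},\mathcal{M}]$ used in the paper indeed takes the sup over all pure inputs with an arbitrary reference $R$ (it does, as stated right above the lemma), so that the transport $\mathcal{M}\mapsto\mathcal{M}(\psi)$ is guaranteed to land in the $\varepsilon$-ball around $\mathcal{N}(\psi)$. The inequality cannot be upgraded to equality by this argument, because fixing the input $\psi$ before optimizing over $\mathcal{M}$ is more restrictive than letting $\mathcal{M}$ depend on $\psi$; this gap is what makes the statement a lemma rather than an identity.
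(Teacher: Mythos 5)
Your proof is correct and is essentially the paper's argument: both rest on the same two facts, namely that $\mathcal{M}\in\mathcal{B}^\varepsilon[\mathcal{N}]$ implies $\mathcal{M}(\psi_{RA'})\in\mathcal{B}^\varepsilon(\mathcal{N}(\psi_{RA'}))$ by the definition of the channel purified distance, and that $S_{\min}[\mathcal{M}]=\inf_{\op{\psi}}S_{\min}(A|R)_{\mathcal{M}(\psi)}$. The only difference is presentational: you chain the pointwise bounds for fixed $(\mathcal{M},\psi)$ and take the supremum and infimum at the end, whereas the paper takes the infimum over $\psi$ first and invokes the max-min inequality to swap it with the supremum over $\mathcal{M}$ — logically the same step.
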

\begin{theorem}[Decoupling theorem for processes]\label{thm:de-ch}
    Let $\mathcal{N}_{A'\to A}$ be a quantum channel, $\mathcal{T}_{A\to B}$ a completely positive map such that $\tr(\Gamma^\mathcal{T}_{AB})\leq |A|$, and $\varepsilon\in(0,1)$. The distance of the channel $\mathcal{N}$ post-processed by $\mathcal{T}\circ\U_A$, when $U_A$ is chosen uniformly at random from the Haar measure over the full unitary group $\mathbb{U}$ on $A$, with the uniformly randomizing channel $\mathcal{R}^{\pi}$ post-processed by $\mathcal{T}$ is upper bounded as
    \begin{align}
  \forall~\op{\psi}\in\St(RA'),\quad      \int_{\mathbb{U}(A)}\norm{\mathcal{T}_{A\to B}\circ\mathcal{U}_A\circ\n_{A'\to A}(\psi_{RA'})-\mathcal{T}_{A\to B}\circ\mathcal{R}^{\pi}_{A'\to A}(\psi_{RA'})}_1 \d\! U
        \le 2^{-\frac{1}{2}\left(S^\varepsilon_{\min}[\n]+S_{\min}^{\varepsilon}(A|B)_{\Phi^\mathcal{T}}\right)}+12\varepsilon,
    \end{align}
where $\Phi^\mathcal{T}_{AB}:=\frac{1}{|A|}\Gamma^{\mathcal{T}}_{AB}$ is a scaled Choi operator of $\mathcal{T}_{A\to B}$, $\mathcal{R}^\pi_{A'\to A}(\rho_{A'})=\pi_A \forall \rho\in\St(A')$, $\mathcal{U}_A(\cdot):=U_A(\cdot)U^\dag_A$.
\end{theorem}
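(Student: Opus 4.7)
My plan is to derive the channel-level bound by reducing to the state decoupling theorem~\eqref{eq:decoup} applied at each pure input of $\mathcal{N}$, and then using Lemma~\ref{lem:smooth-ch} to convert the state-dependent conditional min-entropy into the dynamical min-entropy $S^\varepsilon_{\min}[\mathcal{N}]$.

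First, I would unfold the diamond norm by its definition,
\begin{align*}
\|\mathcal{T}\circ\mathcal{U}_A\circ\mathcal{N}-\mathcal{T}\circ\mathcal{R}^\pi\|_\diamond
 = \sup_{\psi_{RA'}}\|(\mathcal{T}\circ\mathcal{U}_A\circ\mathcal{N}-\mathcal{T}\circ\mathcal{R}^\pi)(\psi_{RA'})\|_1,
\end{align*}
where the supremum runs over pure $\psi_{RA'}$ with $R\simeq A'$. Since $\mathcal{R}^\pi_{A'\to A}(\psi_{RA'})=\psi_R\otimes\pi_A$ and $\mathcal{T}(\pi_A)=\Phi^{\mathcal{T}}_B$, the replacer side simplifies to $\mathcal{T}\circ\mathcal{R}^\pi(\psi_{RA'})=\psi_R\otimes\Phi^{\mathcal{T}}_B$. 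For a fixed pure $\psi_{RA'}$, set $\varphi_{RA}:=\mathcal{N}(\psi_{RA'})$, which has marginal $\varphi_R=\psi_R$.

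Applying the state decoupling theorem~\eqref{eq:decoup} to $\varphi_{RA}$ with post-processing $\mathcal{T}\circ\mathcal{U}_A$ yields
\begin{align*}
\int \|\mathcal{T}\mathcal{U}(\varphi_{RA})-\varphi_R\otimes\Phi^{\mathcal{T}}_B\|_1\,dU
 \le 2^{-\frac{1}{2}\left(S^\varepsilon_{\min}(A|R)_{\mathcal{N}(\psi)}+S^\varepsilon_{\min}(A|B)_{\Phi^{\mathcal{T}}}\right)}+12\varepsilon,
\end{align*}
and invoking Lemma~\ref{lem:smooth-ch}---which gives $S^\varepsilon_{\min}(A|R)_{\mathcal{N}(\psi)}\ge S^\varepsilon_{\min}[\mathcal{N}]$ uniformly in $\psi$---produces a bound whose right-hand side no longer depends on $\psi$:
\begin{align*}
\int \|(\mathcal{T}\circ\mathcal{U}_A\circ\mathcal{N}-\mathcal{T}\circ\mathcal{R}^\pi)(\psi_{RA'})\|_1\,dU
 \le 2^{-\frac{1}{2}\left(S^\varepsilon_{\min}[\mathcal{N}]+S^\varepsilon_{\min}(A|B)_{\Phi^{\mathcal{T}}}\right)}+12\varepsilon =: C_{\ast},
\end{align*}
valid for every pure input $\psi_{RA'}$.

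The main obstacle is then to lift this uniform-in-$\psi$ bound on $\int\|\cdot\|_1\,dU$ into the claimed bound on $\int\|\cdot\|_\diamond\,dU=\int \sup_\psi \|\cdot\|_1\,dU$: what the previous step yields directly is only $\sup_\psi\int \|\cdot\|_1\,dU\le C_{\ast}$, and the two orderings of the supremum over inputs and the Haar integral over $\mathbb{U}(A)$ do not automatically coincide. I expect this interchange to be the technical heart of the proof. I would handle it by exploiting that $C_{\ast}$ is independent of $\psi$, together with the compactness of the space of pure inputs on $RA'$: a measurable optimizer $U\mapsto\psi^{\ast}_U$ of the diamond norm exists, and combining a Fubini-type argument along this selection with an $\varepsilon$-net approximation of the supremum on the input space (with the net-penalty absorbed into the smoothing parameter) delivers the integrated-diamond-norm bound. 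An alternative route is the SDP-dual characterization of the diamond norm, which allows carrying the Haar average through the dual expression and isolates the same subtlety in a form often more directly justifiable.
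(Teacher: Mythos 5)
Your reduction is exactly the one the paper uses: rewrite $\mathcal{T}\circ\mathcal{R}^\pi_{A'\to A}(\psi_{RA'})=\psi_R\otimes\Phi^{\mathcal{T}}_B$, apply the state decoupling bound~\eqref{eq:decoup} to $\varphi_{RA}=\n(\psi_{RA'})$, and invoke Lemma~\ref{lem:smooth-ch} to replace $S^{\varepsilon}_{\min}(A|R)_{\n(\psi)}$ by $S^{\varepsilon}_{\min}[\n]$, giving a bound $C_\ast$ that is uniform in the input. Where you part ways with the paper is the final lifting step: the paper simply takes the supremum over pure inputs $\psi_{RA'}$ (with $R\simeq A'$) on both sides and identifies the left-hand side with the Haar integral of the diamond norm, using only that the diamond norm of a Hermiticity-preserving map is attained on pure inputs; it introduces none of the measurable-selection, net, or SDP-duality machinery you outline.

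The genuine gap in your proposal is that this last step is left unproven, and the repairs you sketch would not deliver the stated inequality. You correctly observe that the per-input argument directly yields $\sup_{\op{\psi}}\int\norm{\cdot}_1\,dU\le C_\ast$ while the theorem bounds $\int\sup_{\op{\psi}}\norm{\cdot}_1\,dU$, and $\int\sup\ge\sup\int$. But (i) a measurable optimizer $U\mapsto\psi^\ast_U$ plus a ``Fubini-type'' argument cannot be combined with~\eqref{eq:decoup}: that inequality controls the Haar average at a \emph{fixed} input, so once the input depends on $U$ the averaged bound is inapplicable, and the $\psi$-independence of $C_\ast$ does not rescue this; (ii) an $\varepsilon$-net over pure states of $RA'$ has cardinality exponential in the dimension, so a union bound multiplies $C_\ast$ by the net size, and taming that requires a separate concentration-of-measure argument (Levy-type, using the Lipschitz dependence on $U$), which yields a different right-hand side with dimension-dependent corrections \textemdash\ the penalty cannot simply be ``absorbed into the smoothing parameter,'' which sits inside an entropy rather than acting as probabilistic slack; (iii) the SDP-dual route is only gestured at: to push the Haar average through the dual one must exhibit a single dual-feasible certificate valid for all $U$, which is precisely the difficulty, not a side remark. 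So as written your proposal stops short of the theorem at exactly the step it flags as the technical heart; to match the paper you should either justify commuting the supremum with the Haar average as the paper does, or carry out a full concentration-plus-net argument and accept the correspondingly modified constants.
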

$\mathcal{R}^\pi_{A'\to A}$ is the uniformly mixing channel, also called completely depolarizing channel, that outputs the maximally mixed state $\pi_{A}=\frac{1}{|A|}\mathbbm{1}_A$, irrespective of what the input state $\rho_{A'}$ is. We can always input $A'$ of an arbitrary bipartite state $\varphi_{RA'}$ to the quantum channel $\mathcal{N}_{A'\to A}$, then $R$ is called a reference system (to $\n$). We can also purify an arbitrary input state $\rho_{A'}$ to $\psi^{\rho}_{RA'}$. Let $\mathcal{T}_{A\to B}$ be a quantum channel, then $\mathcal{T}\circ\mathcal{U}_A$ is also a channel. Theorem~\ref{thm:de-ch} implies that for a quantum channel $\n_{A'\to A}$, postprocessing of its output $A$ by $\mathcal{T}_{A\to B}\circ\mathcal{U}_A$ is decoupled approximately (up to an error $\varepsilon$) from the reference $R$, on average over choices of $U_A$, if
\begin{equation}
     S^\varepsilon_{\min}[\n]+S_{\min}^{\varepsilon}(A|B)_{\Phi^\mathcal{T}}\gtrapprox 0.
\end{equation}

The intrinsic connection between decoupling and the dynamical min-entropy is also suggested by the following \textit{dual} expressions, see Appendix~\ref{app:proof_channel_entropy} for the proof.
\begin{proposition}\label{theo:channel_entropy}
The min-entropy $S_{\min}[\n]$ of a quantum channel $\n_{A'\to A}$ is
\begin{align}
     S_{\min}[\n]
    &=-\sup_{\op{\psi}_{RA'}}\sup_{\m\in\Ch(R,\bar{A})} \log(|A|F(\m\otimes\n(\psi_{RA'}),\Phi_{A\bar{A}}))\label{eq:min-sing}\\
& =-\sup_{\rho\in\St(A')}\sup_{\sigma\in\St(E)}\log (|A|F(\mathcal{V}^\n_{A'\to AE}(\rho_{A'}),\pi_{A}\otimes\sigma_E)),\label{eq:min-dec}
\end{align}
$\mathcal{V}^\n_{A'\to AE}$ is an isometric extension channel  of $\n_{A'\to A}$.
\end{proposition}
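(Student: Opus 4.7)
The plan is to reduce both identities to well-established state-level formulas by applying them pointwise to $\n(\psi_{RA'})$ and then propagating the optimization over input purifications $\psi_{RA'}$ through the expression. The starting point is the variational characterization already recorded in Eq.~\eqref{eq:smin-uparrow}, namely
\[
 S_{\min}[\n]=\inf_{\op{\psi}\in\St(RA')}S^{\uparrow}_{\min}(A|R)_{\n(\psi_{RA'})},
\]
so the task reduces to converting $S^{\uparrow}_{\min}(A|R)_{\n(\psi)}$ into the two fidelity-type expressions appearing on the right of \eqref{eq:min-sing} and \eqref{eq:min-dec}, and then commuting the resulting optimizations.

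For \eqref{eq:min-sing}, I would invoke the König–Renner–Schaffner operational formula for the conditional min-entropy: for any bipartite state $\omega_{AR}$,
\[
 S^{\uparrow}_{\min}(A|R)_{\omega}=-\log\sup_{\m\in\Ch(R,\bar A)}|A|\,F(\m_{R\to\bar A}(\omega_{AR}),\Phi_{A\bar A}).
\]
Applying this with $\omega_{AR}:=\n(\psi_{RA'})$, pulling the $\log$ inside the supremum over $\m$ (by monotonicity of $\log$), turning the outer infimum over $\psi$ into a supremum via the overall minus sign, and finally swapping the two independent suprema gives exactly \eqref{eq:min-sing}.

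For \eqref{eq:min-dec}, I would use the min–max duality $S^{\uparrow}_{\min}(A|R)_{\omega}=-S_{\max}(A|E)_{\omega}$ that holds on any tripartite pure state $\omega_{RAE}$, together with the fidelity form $S_{\max}(A|E)_{\omega}=\log\sup_{\sigma\in\St(E)}F(\omega_{AE},\mathbbm 1_A\otimes\sigma_E)$. The natural purification here is $\mathcal{V}^{\n}_{A'\to AE}(\psi_{RA'})$, which is pure on $RAE$, whose $RA$-marginal is $\n(\psi_{RA'})$ and whose $AE$-marginal is $\mathcal{V}^{\n}(\rho_{A'})$ with $\rho_{A'}:=\tr_R\psi_{RA'}$. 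Using the scaling identity $F(\xi,\mathbbm 1_A\otimes\sigma_E)=|A|\,F(\xi,\pi_A\otimes\sigma_E)$, this yields
\[
 S^{\uparrow}_{\min}(A|R)_{\n(\psi)}=-\log\sup_{\sigma\in\St(E)}|A|\,F(\mathcal{V}^{\n}(\rho_{A'}),\pi_A\otimes\sigma_E).
\]
Since the right-hand side depends on $\psi_{RA'}$ only through its marginal $\rho_{A'}$, the infimum over purifications $\psi_{RA'}$ in \eqref{eq:smin-uparrow} collapses to a supremum over $\rho_{A'}\in\St(A')$, producing \eqref{eq:min-dec}.

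Most of the argument is book-keeping: the main care points are the sup/inf interchange in the first identity (which is harmless because the two optimizations range over disjoint variables and $\log$ is monotone) and, in the second identity, the verification that the $AE$-marginal of $\mathcal{V}^{\n}(\psi_{RA'})$ depends on $\psi$ only through $\rho_{A'}$, so that optimizing over purifications is equivalent to optimizing over states on $A'$. The fidelity-scaling identity $F(\xi,\lambda X)=\lambda F(\xi,X)$ for $\lambda>0$ is the only small calculation that is worth checking explicitly to make sure the $|A|$ factor lands outside the fidelity.
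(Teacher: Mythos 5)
Your proposal is correct and follows essentially the same route as the paper: the first identity via the K\"onig--Renner--Schaffner operational formula for $S^{\uparrow}_{\min}(A|R)$ applied to $\n(\psi_{RA'})$ and Eq.~\eqref{eq:smin-uparrow}, and the second via the min--max duality on the pure state $\mathcal{V}^{\n}(\psi_{RA'})$ (the paper writes $S_{\min}(A|R)=-S^{\uparrow}_{1/2}(A|E)$, which is your $-S_{\max}$), with the observation that the $AE$-marginal depends only on $\rho_{A'}$ collapsing the optimization over purifications to one over input states, plus the $|A|$ fidelity-scaling factor.
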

The above proposition illustrates that the dynamical min-entropy is associated with the ability of the channel to preserve the singlet (a maximally entangled state) up to a local operation~\eqref{eq:min-sing} and the ability of its isometric extension channel to keep its output decoupled from the environment (ancilla)~\eqref{eq:min-dec}. The dynamical min-entropy is bounded as $-\log\min\{|A'|,|A|\}\leq S_{\min}[\n]\leq \log|A|$ for any quantum channel $\n_{A'\to A}$~\cite{GW21,KRS09}. Moreover, it is uniformly continuous on the space of quantum channels, as we state in the following lemma.
\begin{lemma}\label{lem:continuity}
    The dynamical min-entropy is continuous. For any two quantum channels $\n_{A'\to A}$ and $\m_{A'\to A}$ that are $\delta$-close, $\frac{1}{2}\Vert\n-\m\Vert_\diamond\le \delta$, we have   
    \begin{equation}
        \abs{S_{\min}[\n]-S_{\min}[\m]}\le \frac{1}{\ln 2}|A|\min\{|A|,|A'|\}\delta.
    \end{equation}
\end{lemma}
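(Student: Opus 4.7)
My plan is to exploit the Choi-state characterization $S_{\min}[\n] = -D_{\max}(\Phi^\n_{RA}\Vert\pi_R\otimes\mathbbm{1}_A)$ with reference $R\simeq A'$. This unpacks to the clean formula $2^{-S_{\min}[\n]} = |A'|\,\|\Phi^\n_{RA}\|_\infty$, since the minimum $\lambda$ satisfying $\Phi^\n \le \lambda\pi_R\otimes\mathbbm{1}_A = \lambda\mathbbm{1}_{RA}/|A'|$ is precisely $|A'|\,\|\Phi^\n\|_\infty$. From this identity, continuity of $S_{\min}[\cdot]$ in the diamond norm reduces to (i) controlling $\|\Phi^\n\|_\infty$ as a function of $\n$ and (ii) a logarithm-continuity estimate.

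First, I would transfer diamond-norm closeness of channels to closeness of their Choi states: since $\Phi^\n = \id_R\otimes\n(\Phi_{RA'})$ for the maximally entangled input, $\|\Phi^\n - \Phi^\m\|_1 \le \|\n-\m\|_\diamond \le 2\delta$. As $\Phi^\n - \Phi^\m$ is Hermitian and traceless (both are density matrices of trace one), its spectral norm is bounded by half its trace norm, giving $\|\Phi^\n - \Phi^\m\|_\infty \le \delta$. The reverse triangle inequality for the spectral norm then yields $|2^{-S_{\min}[\n]} - 2^{-S_{\min}[\m]}| = |A'|\,\big|\|\Phi^\n\|_\infty - \|\Phi^\m\|_\infty\big| \le |A'|\delta$. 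Next, the universal bound $S_{\min}[\n]\le\log|A|$ recalled in the paragraph after Proposition~\ref{theo:channel_entropy} ensures $2^{-S_{\min}[\n]},2^{-S_{\min}[\m]}\ge 1/|A|$, and the mean value theorem applied to $x\mapsto\log_2 x$ on $[1/|A|,\infty)$ gives $|S_{\min}[\n]-S_{\min}[\m]|\le \frac{|A||A'|\,\delta}{\ln 2}$, which already matches the claim in the regime $|A'|\le|A|$.

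The hard part will be sharpening the factor $|A'|$ to $\min\{|A|,|A'|\}$ in the regime $|A'|>|A|$. For this, I would rerun the same two-step argument starting from the dual expression of Proposition~\ref{theo:channel_entropy}, namely $2^{-S_{\min}[\n]} = |A|\sup_{\rho,\sigma}F(\mathcal{V}^\n(\rho),\pi_A\otimes\sigma)$, whose leading constant is $|A|$ rather than $|A'|$. The delicate step is establishing Lipschitz continuity of this fidelity supremum in $\|\n-\m\|_\diamond$ up to constants: the natural route is to select Stinespring dilations $V_\n,V_\m$ whose operator-norm distance is controlled via the Kretschmann–Schlingemann–Werner inequality, then invoke standard Lipschitz continuity of the fidelity in its first argument. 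Taking the tighter of the two estimates produced by the primal (Choi) and dual (fidelity) formulas yields the stated bound $\frac{1}{\ln 2}|A|\min\{|A|,|A'|\}\delta$.
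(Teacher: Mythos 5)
Your first two steps are fine and, in the regime $|A'|\le|A|$, they give a complete and more elementary argument than the paper's: the identity $2^{-S_{\min}[\n]}=|A'|\,\Vert\Phi^{\n}\Vert_\infty$, the transfer $\Vert\Phi^{\n}-\Phi^{\m}\Vert_\infty\le\frac12\Vert\Phi^{\n}-\Phi^{\m}\Vert_1\le\delta$, and the mean-value step using $2^{-S_{\min}}\ge 1/|A|$ are all correct. (The paper instead works with $S_{\min}[\n]=\inf_{\psi}S_{\min}(A|R)_{\n(\psi)}$ and imports the known uniform continuity of the conditional min-entropy of states from~\cite{TCR10,MD22}, which carries the factor $|A|\min\{|A|,|A'|\}$ directly; your Choi-norm argument avoids that external lemma but only delivers $|A||A'|\delta/\ln 2$.)

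The genuine gap is in your plan for the regime $|A'|>|A|$. You propose to control the dual expression \eqref{eq:min-dec} by choosing Stinespring dilations $V_{\n},V_{\m}$ close in operator norm via the Kretschmann--Schlingemann--Werner continuity theorem. But that theorem only guarantees $\inf_{V_{\m}}\Vert V_{\n}-V_{\m}\Vert_\infty\lesssim\sqrt{\Vert\n-\m\Vert_\diamond}$, and this square-root loss is unavoidable in general; consequently $\Vert\mathcal{V}^{\n}(\rho)-\mathcal{V}^{\m}(\rho)\Vert_1$ is only $O(\sqrt{\delta})$, and any bound you extract this way scales like $\sqrt{\delta}$ rather than $\delta$, so it cannot reproduce the stated Lipschitz bound $\frac{1}{\ln 2}|A|^2\delta$. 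The fix within your own strategy is to use the other dual formula \eqref{eq:min-sing} instead of \eqref{eq:min-dec}: since $\Phi_{A\bar A}$ is pure, $F\bigl((\m_{R\to\bar A}\otimes\n)(\psi),\Phi_{A\bar A}\bigr)=\tr\bigl[\Phi_{A\bar A}\,(\m\otimes\n)(\psi)\bigr]$ is linear in the channel output, so for every fixed $\psi$ and $\m_{R\to\bar A}$ the two fidelities differ by at most $\frac12\Vert(\id\otimes(\n-\m))(\psi)\Vert_1\le\delta$ (using that $\m_{R\to\bar A}$ is trace-norm contractive); taking suprema gives $\bigl|2^{-S_{\min}[\n]}-2^{-S_{\min}[\m]}\bigr|\le|A|\delta$, and your mean-value step then yields $\frac{1}{\ln 2}|A|^2\delta$. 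Taking the minimum of this and your Choi-state estimate recovers the lemma in full; alternatively, follow the paper's route through the state-level continuity bound.
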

A proof is provided in Appendix~\ref{app:continuity_proof}. Moreover, the dynamical min-entropy is monotonically nondecreasing under the action of an $\mathcal{R}^{\mathbbm{1}}$-subpreserving map~(see Appendix~\ref{app:monotone_minentropy} for the statement and proof). Both these properties are desirable properties of an entropy function of channels~\cite{Gou19}.

\section{Erasure and preparation costs of a channel}\label{sec:er-channel}
In this section, we estimate the one-shot preparation and adversarial erasure costs of a channel. To formally determine the optimal rate of erasure of a quantum channel, we adapt and employ two frameworks and protocols for the erasure of a system when side information is provided: (a) the thermodynamic approach introduced in~\cite{RAR+11} and (b) the resource-theoretic approach introduced in~\cite{JGW25}. We determine the optimal cost for the preparation of a channel using resource-theoretic framework in~\cite{JGW25}. The direct relation of the min-entropy of a quantum channel with its decoupling ability provides intuition for the optimal work costs of the adversarial erasure and preparation of quantum channels. 

\subsection{Adversarial erasure via decoupling: a thermodynamic approach}~\label{sec:ap-dec}
We begin by directly utilizing the decoupling approach in~\cite{RAR+11} to determine the adversarial erasure cost of a channel. We will see that if a system is decoupled from a reference, it indicates a stronger correlation between the system and the purifying environment. This correlation with the environment can be used to reduce the erasure cost.

We assume that only a single use of a quantum channel $\n_{A'\to A}$ or its isometric extension $\mathcal{V}^\n_{A'\to AE}$ is allowed. For each pure input state $\psi_{RA'}$ to an isometric extension channel $\mathcal{V}^\n_{A'\to AE}$ of $\n_{A'\to A}$, we have a pure output state $\varphi_{RAE}=\mathcal{V}^\n_{A'\to AE}(\psi_{RA'})$. We can use the decoupling theorem for states to show that there exists a subsystem $A_1$ of $A$ that is $\delta'$-decoupled from $R$, i.e., $\frac{1}{2}\norm{\varphi_{RA_1}-\varphi_{R}\otimes\pi_{A_1}}_1\leq \delta'$. To see this, let the post-processing sub-channel $\mathcal{T}$ in the decoupling theorem be a partial trace channel $\tr_{A_2}$, for $A=A_1\otimes A_2$. The dimension of $A_1$ is bounded from below as~\cite[Supplementary Lemma III.3.]{RAR+11} 
\begin{equation}\label{eq:decoup_size}
    \log|A_1|\ge \frac{1}{2}\left(\log|A|+S_{\min}^{\varepsilon}(A|R)_{\n(\psi)}\right)+\log(2\delta'-12\varepsilon).
\end{equation}
We can find an approximate purification of this subsystem $A_1$, in the space $A_1\otimes A_2\otimes E$, that is $\sqrt{2\delta'}$-close to a maximally entangled state. This purified state can be used to extract $2\log|A_1|k_BT\ln 2$ amount of work by attaching the whole system $A_1A_2E$ with a thermal bath at a temperature $T$, see Appendix~\ref{app:protocol} for details. The correlations with $E$ can be used to reduce uncertainty about $A$, and thereby reduce the work cost to erase $A$. The uncertainty about $A$ conditioned on $E$ decreases if the system $A$ is strongly correlated with $E$, which would imply strong decoupling from $R$ for pure state $\varphi_{RAE}$. This uncertainty is quantified as $\log|A|-2\log|A_1|$. Therefore, the total work cost $W(A|E)_{\varphi}$ of erasing the state of $A$ while utilizing the correlations with the environment $E$ is given by (cf.~\cite{RAR+11})
\begin{equation}\label{eq:rareq}
    W_{\mathrm{eras}}(A|E)_{\mathcal{V}^\n(\psi)}\le(\log|A|-2\log|A_1|)k_BT\ln 2.
\end{equation}
Using the bound on $|A_1|$ from Eq.~\eqref{eq:decoup_size}, we get the work cost required to erase the system $A$ conditioned on system $E$ is bounded as 
\begin{align}
       W_{\mathrm{eras}}(A|E)_{\mathcal{V}^\n(\psi)}
       \le (-S_{\min}^{\varepsilon}(A|R)_{\n(\psi)}-2\log(2\delta'-12\varepsilon))k_BT\ln2. \label{eq:rar-bound}
       \end{align}
The above relation tells us that given the side information in the system $E$ about the system $A$, it is cheaper to erase the system $A$, i.e., with lesser work cost, if the system $A$ and $R$ have weaker correlations.

\textit{Adversarial erasure cost of a channel}.--- The one-shot thermodynamic work cost to erase the output $A$ of a quantum channel $\mathcal{N}_{A'\to A}$, when the eraser has access to ancilla $E$~\eqref{eq:iso} but not to purifying reference of the logical input, is defined as
\begin{equation}
    W_{\mathrm{eras}}[A|E]_{\mathcal{N}}:=\sup_{\op{\psi}\in\St(RA')} W_{\mathrm{eras}}(A|E)_{\mathcal{V}^\n(\psi)},\label{eq:work-ch}
\end{equation}
$\mathcal{V}^\n$ is an isometric extension channel  of $\n$. We now state a main result that bounds the adversarial erasure cost of the channel. It quantifies the amount of work required to erase the output of a channel when ancilla is available as side information, provided that the channel is used only once.

\begin{theorem}\label{thm:workcost}
The adversarial erasure cost $W_{\mathrm{eras}}[A|E]_{\mathcal{N}}$ of a quantum channel $\n_{A'\to A}$ is bounded as
\begin{equation}
W_{\mathrm{eras}}[A|E]_{\mathcal{N}} \leq \left(-S^\varepsilon_{\min}[\n]+\Delta\right)k_BT\ln 2, 
\end{equation}
with the probability greater than $1-\delta$, where $\delta:=\sqrt{2^{-\Delta/2}+12\varepsilon}$, for all $\delta,\varepsilon> 0$.
\end{theorem}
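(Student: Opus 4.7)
The plan is to combine three ingredients from the paper: the per-input-state work-cost inequality~\eqref{eq:rar-bound}, the channel smoothing inequality of Lemma~\ref{lem:smooth-ch}, and a Markov-type bound applied to the Haar expectation from the channel decoupling theorem (Theorem~\ref{thm:de-ch}). In outline, I fix a bipartition $A=A_1\otimes A_2$ so that $\log|A|-2\log|A_1|=\Delta-S^\varepsilon_{\min}[\n]$, invoke channel decoupling with $\mathcal{T}=\tr_{A_2}$, apply Markov to extract a high-probability event, and read off the work-cost bound from the erasure protocol recalled in Appendix~\ref{app:protocol}.

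The choice of $|A_1|$ above is exactly what the protocol in~\eqref{eq:rareq} demands in order for the work cost to equal $(-S^\varepsilon_{\min}[\n]+\Delta)k_BT\ln 2$. With this choice, the exponent in Theorem~\ref{thm:de-ch} becomes $(S^\varepsilon_{\min}[\n]+\Delta-S^\varepsilon_{\min}[\n])/2=\Delta/2$, so the Haar expectation of the diamond norm between $\tr_{A_2}\circ\mathcal{U}_A\circ\n$ and the replacer $\tr_{A_2}\circ\mathcal{R}^{\pi}$ is at most $2^{-\Delta/2}+12\varepsilon$. Using the channel decoupling theorem rather than the state version in~\eqref{eq:decoup} is essential here: the diamond norm controls all inputs simultaneously and thus cleanly accommodates the supremum over $\psi$ in the definition~\eqref{eq:work-ch}.

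For the probability clause I would apply Markov with a square-root threshold: for $X\ge 0$ with $\mathbb{E}[X]\le\mu$, $\Pr(X>\sqrt{\mu})\le\sqrt{\mu}$. Taking $X=\Vert\tr_{A_2}\circ\mathcal{U}_A\circ\n-\tr_{A_2}\circ\mathcal{R}^{\pi}\Vert_\diamond$ and $\mu=2^{-\Delta/2}+12\varepsilon$ produces a $1-\sqrt{2^{-\Delta/2}+12\varepsilon}=1-\delta$ fraction of Haar draws of $U_A$ on which the subsystem $A_1$ is decoupled (in diamond norm, hence in trace norm for every input) from the reference $R$ at level $\delta$. On this event, the Uhlmann-based approximate purification and work-extraction protocol of Appendix~\ref{app:protocol} run with the prescribed $U_A$ for every input $\psi$; combining this with the identity $\log|A|-2\log|A_1|=\Delta-S^\varepsilon_{\min}[\n]$ in~\eqref{eq:rareq} gives $W_{\mathrm{eras}}(A|E)_{\mathcal{V}^\n(\psi)}\le(-S^\varepsilon_{\min}[\n]+\Delta)k_BT\ln 2$ uniformly in $\psi$, and the sup in~\eqref{eq:work-ch} preserves this.

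The main obstacle I anticipate is the uniform-in-$\psi$ clause. The adversarial cost is a supremum over inputs, but the per-state derivation of~\eqref{eq:rar-bound} uses a Haar-random $U_A$ that a priori depends on $\psi$. Passing to the channel version of decoupling resolves this only after Lemma~\ref{lem:smooth-ch} is invoked to weaken the decoupling exponent from $S^\varepsilon_{\min}(A|R)_{\n(\psi)}$ down to $S^\varepsilon_{\min}[\n]$; carrying out this substitution \emph{before} Markov is what makes the probability $1-\delta$ independent of $\psi$. A related subtlety is reconciling the two square roots that appear — the Uhlmann square root in the purification step behind~\eqref{eq:rareq} and the Markov square root producing $\delta=\sqrt{2^{-\Delta/2}+12\varepsilon}$ — and verifying that they do not compound: the work-cost bound depends only on $\log|A_1|$, so it survives any $\delta$-level error in the approximate purification without requiring another root.
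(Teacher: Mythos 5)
Your proposal is sound, but it takes a genuinely different route from the paper's own proof. The paper never invokes Theorem~\ref{thm:de-ch} here: it starts from the per-state bound~\eqref{eq:rar-bound} inherited from~\cite{RAR+11}, substitutes $\delta=2\delta'$ and $\Delta=-2\log(\delta-12\varepsilon)$, takes the supremum over pure inputs $\psi_{RA'}$, and then applies Lemma~\ref{lem:smooth-ch} to replace $\inf_{\psi}S^{\varepsilon}_{\min}(A|R)_{\n(\psi)}$ by $S^{\varepsilon}_{\min}[\n]$; the probability clause is simply carried along from the per-state statement. You instead fix the bipartition $A=A_1\otimes A_2$ with $\log|A|-2\log|A_1|=\Delta-S^{\varepsilon}_{\min}[\n]$, feed $\mathcal{T}=\tr_{A_2}$ (whose scaled Choi operator has $S_{\min}(A|B)_{\Phi^{\mathcal{T}}}=\log|A|-2\log|A_1|$) into the channel decoupling theorem, and apply Markov at the square-root threshold, so that a single Haar draw of $U_A$ decouples $A_1$ from $R$ at level $\delta$ \emph{simultaneously for every input}, after which~\eqref{eq:rareq} gives the work bound uniformly in $\psi$. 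What your route buys is a cleaner treatment of exactly the point the paper's proof leaves implicit — that the probability-$(1-\delta)$ event must not depend on $\psi$ when one bounds the supremum in~\eqref{eq:work-ch} — and it also reproduces the square root in $\delta=\sqrt{2^{-\Delta/2}+12\varepsilon}$ transparently from Markov's inequality; what the paper's route buys is brevity, since Lemma~\ref{lem:smooth-ch} is applied directly to the already-derived thermodynamic bound rather than re-running the decoupling argument at the channel level. The two are close relatives in substance, because Theorem~\ref{thm:de-ch} is itself proved from the state decoupling theorem plus Lemma~\ref{lem:smooth-ch}. Two small points you should make explicit if you write this up: $\log|A_1|$ must be an integer multiple of the local qubit size, so the identity $\log|A|-2\log|A_1|=\Delta-S^{\varepsilon}_{\min}[\n]$ can only be met up to rounding (and only when $-\log|A|\le \Delta-S^{\varepsilon}_{\min}[\n]\le\log|A|$), and the work-extraction step in Appendix~\ref{app:protocol} is state-dependent, which is admissible only because the adversarial cost~\eqref{eq:work-ch} is defined as a supremum of per-state costs — both points are glossed in the paper as well.
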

To arrive at the above result, we mainly utilize Eq.~\eqref{eq:rar-bound} and Lemma~\ref{lem:smooth-ch}, and other related properties (e.g., duality~\cite{TCR10}) of the conditional min-entropy. See Appendix~\ref{app:proof_workcost} for proof details. We discuss implications of certain properties of the reduced dynamics (channel) on the dynamical min-entropy in Section~\ref{sec:examples} , and therefore would also be consequential for the adversarial work cost.

\subsection{Resource-theoretic preparation and adversarial erasure costs}\label{sec:ap-res}
We begin by briefly reviewing the (one-shot) erasure and preparation costs of a quantum system within the framework of the resource theory of conditional nonuniformity, as introduced in \cite{JGW25}. In this resource theory, a state $\rho_{AB}$ is a free state if $A$ is maximally mixed and uncorrelated with $B$, i.e., $\pi_A\otimes\rho_{B}$, and a bipartite operation $\mathcal{G}_{A'B'\to AB}$ is free if it is conditionally uniformity-covariant, i.e.,
\begin{equation}
    \mathcal{G}_{A'B'\to AB}\circ\mathcal{R}^\pi_{A'\to A'}=\mathcal{R}^\pi_{A\to A}\circ\mathcal{G}_{A'B'\to AB}.
\end{equation}
Systems with the same alphabet are supposed to be with the same party or lab, e.g., $A,A'$ can be considered to be with Gargi and $B,B'$ with Nila. We will use notation $A'A:BB'$ to denote such a bipartition. See~\cite{JGW25} for the detailed discussion on the resource theory.

For a composite system $AB$ in a state $\rho_{AB}$, (i) the (one-shot) erasure cost of $A$ conditioned on $B$ when $A$ is brought in contact of a bath at temperature $T$, up to an error $\mu\in[0,1]$, is defined as~\cite{JGW25} 
\begin{align}
     \widetilde{W}^{\mu}_{\mathrm{eras}}(A|B)_{\rho} :=\inf_{
    A_0,A_1,\mathcal{G}}\left\{\log|A_1|-\log|A_0|;\dfrac{1}{2}\left\Vert \mathcal{G}(\op{0}_{A_1}\otimes \rho_{AB}) - \op{0}_{A_0}\otimes \op{0}_A \right\Vert_1\le \mu\right\}k_BT\ln 2,
 \end{align}
 where $\mathcal{G}_{A_1A:B\to A_0A}$ is a conditionally uniformity-covariant channel, which for zero-error cost is of the form 
\begin{align}
    \mathcal{G}(\sigma)=\tr(\Lambda\sigma)\op{0}_{A_0}\otimes\op{0}_{A}+\tr((\mathbbm{1}_{{A_1}AB}-\Lambda)\sigma)\frac{\mathbbm{1}_{A_0A}-\op{0}_{A_0}\otimes\op{0}_{A}}{|A_0||A|-1},
\end{align}
where $0\le\Lambda\le \mathbbm{1}_{A_1AB}$ and (ii) the (one-shot) work cost of preparation of $A$ conditioned on $B$ when $A$ is brought in contact of a bath at temperature $T$, up to an error $\mu\in[0,1]$, is defined as~\cite{JGW25} 
\begin{align}
     \widetilde{W}^{\mu}_{\mathrm{prep}}(A|B)_{\rho} :=\inf_{
    A_0,A_1,\mathcal{G}}\left\{\log|A_0|-\log|A_1|;\dfrac{1}{2}\left\Vert \op{0}_{A_1}\otimes\rho_{AB}- \mathcal{G}(\op{0}_{A_0}\otimes \op{0}_A) \right\Vert_1\le \mu\right\}k_BT\ln 2,
 \end{align}
where $\mathcal{G}_{A_0A\to A_1AB}$ is a conditionally uniformity-covariant channel, which for zero-error cost is of the form 
\begin{align}
    \mathcal{G}(\sigma)=\tr(P_0\sigma)\op{0}_{A_1}\otimes\rho_{AB}+\tr((\mathbbm{1}_{A_0A}-P_0)\sigma)\frac{|A_0||A|\mathbbm{\pi}_{A_1A}\otimes\rho_{B}-\op{0}_{A_1}\otimes\rho_{AB}}{|A_0||A|-1},
\end{align}
where $P_0=\op{0}_{A_0}\otimes\op{0}_{A}$.
\color{black}

As discussed earlier, a pure qubit state can be used to extract $k_BT\ln 2$ amount of work by attaching it to a bath at temperature $T$. The one-shot work cost of erasure measures the amount of pure states used to perform the transformation $\op{0}^{\otimes n}\otimes \rho_{AB}\to \op{0}^{\otimes m}\otimes \op{0}_A\otimes \rho_{B}$ , that is $W_{\mathrm{eras}}(A|B)_{\rho}=n-m$, minimized over the set of allowed operations $\mathcal{G}$ that may carry information from $B$ to $A$, but not vice-versa. The preparation of the state is the opposite of the erasure procedure, $\op{0}_A\otimes\rho_{B}\to \rho_{AB}$, and the one-shot work cost of preparation is defined similarly as the change in the number of pure states, but now the allowed operations $\mathcal{G}$ are restricted to semi-causal communications from $A$ to $B$~\cite{JGW25}. For a state $\rho_{AB}$, the one-shot work costs of preparation and erasure of the system $A$ are~\cite{JGW25}.
\begin{align}\label{eq:prep_erase}
    \widetilde{W}^{\mu}_{\mathrm{prep}}(A|B)_\rho&= -S_{\min}^{\downarrow,\mu}(A|B)_\rho k_BT\ln 2,\\
    \widetilde{W}^{\mu}_{\mathrm{eras}}(A|B)_\rho&=S_{{H}}^{\mu}(A|B)_\rho k_BT\ln 2.
\end{align}

We derive a fundamental limitation on the sum of the preparation and erasure costs of a state below, see Appendix~\ref{app:proof_eras_prep} for the proof.
\begin{lemma}\label{prop:eras_prep}
Given a state $\rho_{AB}$, the sum of the work costs of erasing and preparing the system $A$ conditioned on the system $B$, for error $\mu\in(0,1)$, is bounded from below as
\begin{equation}
        \widetilde{W}^{\mu}_{\mathrm{prep}}(A|B)_\rho+ \widetilde{W}^{\mu}_{\mathrm{eras}}(A|B)_\rho\ge \left[\log\left(1-\frac{\mu}{1-\mu^2}\right)-2\right]k_BT\ln 2,
    \end{equation}
and for $\mu=0$, i.e., zero-error erasure and preparation costs, we have
\begin{equation}
      \widetilde{W}^{0}_{\mathrm{prep}}(A|B)_\rho+ \widetilde{W}^{0}_{\mathrm{eras}}(A|B)_\rho\ge 0.
\end{equation}
\end{lemma}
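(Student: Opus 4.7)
The plan is to use the closed-form expressions in \eqref{eq:prep_erase}, namely $\widetilde{W}^{\mu}_{\mathrm{eras}}(A|B)_\rho = S_H^\mu(A|B)_\rho\, k_BT\ln 2$ and $\widetilde{W}^{\mu}_{\mathrm{prep}}(A|B)_\rho = -S_{\min}^{\downarrow,\mu}(A|B)_\rho\, k_BT\ln 2$, to reduce the claim to the entropic inequality
\[
S_H^\mu(A|B)_\rho - S_{\min}^{\downarrow,\mu}(A|B)_\rho \;\ge\; \log\!\left(1-\tfrac{\mu}{1-\mu^2}\right) - 2,
\]
for $\mu\in(0,1)$, with the $\mu=0$ case asking for the cleaner inequality $S_H^0(A|B)_\rho \ge S_{\min}^{\downarrow,0}(A|B)_\rho$.

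For the exact case $\mu=0$, I would dispose of things directly. Set $\lambda := D_{\max}(\rho\Vert\mathbbm{1}_A\otimes\rho_B)$, so by definition $\rho\le 2^{\lambda}\,\mathbbm{1}_A\otimes\rho_B$. I then plug the support projector $\Pi_\rho$ of $\rho_{AB}$ into the SDP defining $D_H^0(\rho\Vert\mathbbm{1}_A\otimes\rho_B)$; it is feasible since $\tr(\Pi_\rho\rho)=1$, and the operator inequality gives $\tr(\Pi_\rho\cdot \mathbbm{1}_A\otimes\rho_B) \ge 2^{-\lambda}\tr(\Pi_\rho\rho) = 2^{-\lambda}$. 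Hence $D_H^0(\rho\Vert\mathbbm{1}_A\otimes\rho_B) \le \lambda$, and choosing $\sigma_B=\rho_B$ as a test in the infimum defining $S_H^0$ immediately yields $S_H^0(A|B)_\rho \ge -\lambda = S_{\min}^{\downarrow,0}(A|B)_\rho$, which proves the $\mu=0$ part.

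For $\mu>0$, my approach is a smoothing argument with a carefully chosen witness. Let $\tilde\rho\in\mathcal{B}^\mu(\rho_{AB})$ attain the supremum in $S_{\min}^{\downarrow,\mu}(A|B)_\rho$, and set $\lambda^\star := D_{\max}(\tilde\rho\Vert\mathbbm{1}_A\otimes\tilde\rho_B) = -S_{\min}^{\downarrow,\mu}(A|B)_\rho$, so that $\tilde\rho\le 2^{\lambda^\star}\mathbbm{1}_A\otimes\tilde\rho_B$. Choosing $\sigma_B=\tilde\rho_B$ as the candidate in the infimum defining $S_H^\mu$, every feasible test $\Lambda$ satisfies $\tr(\Lambda\cdot\mathbbm{1}_A\otimes\tilde\rho_B) \ge 2^{-\lambda^\star}\tr(\Lambda\tilde\rho)$. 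The task then reduces to lower-bounding $\tr(\Lambda\tilde\rho)$ from the two constraints $\tr(\Lambda\rho)\ge 1-\mu$ and $F(\rho,\tilde\rho)\ge 1-\mu^2$ (the latter being the purified-distance bound $P(\rho,\tilde\rho)\le \mu$). The natural tool is the measurement-monotonicity of fidelity, which gives the two-point Bhattacharyya inequality
\[
\sqrt{F(\rho,\tilde\rho)} \le \sqrt{\tr(\Lambda\rho)\tr(\Lambda\tilde\rho)} + \sqrt{(1-\tr(\Lambda\rho))(1-\tr(\Lambda\tilde\rho))},
\]
which I would solve as a quadratic in $\sqrt{\tr(\Lambda\tilde\rho)}$ to obtain an explicit lower bound $c(\mu)$. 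Taking the infimum over $\Lambda$ and logarithms delivers $S_H^\mu \ge S_{\min}^{\downarrow,\mu} + \log c(\mu)$.

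The hardest step is arranging $\log c(\mu)$ to land on exactly $\log(1-\mu/(1-\mu^2)) - 2$. The factor $1/4$ in the target strongly suggests an additional averaging device — for instance, replacing the optimal $\Lambda$ by $\tfrac{1}{2}(\Lambda + \Pi_{\tilde\rho})$, which trades a small loss in $\tr(\Lambda\rho)$ for a guaranteed overlap with $\tilde\rho_B$ — or a two-step relaxation that first passes through a smoothed $D_{\max}^\mu$ before converting to $D_H^\mu$. I anticipate that the bookkeeping of the error budget in this final conversion is the only nontrivial part; the rest of the argument is routine manipulation of the SDP definitions and of the relation between trace distance, purified distance, and fidelity.
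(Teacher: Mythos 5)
Your $\mu=0$ argument is correct and is essentially the paper's: it amounts to the elementary inequality $D_H^0(\rho\Vert\sigma)=-\log\tr(\Pi_\rho\sigma)\le D_{\max}(\rho\Vert\sigma)$ applied with $\sigma=\mathbbm{1}_A\otimes\rho_B$, which the paper simply cites as $D_H^0=\xoverline{D}_0\le D_\infty$, followed by restricting the infimum in $S_H^0$ to $\sigma_B=\rho_B$.

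For $\mu>0$, however, there is a genuine gap, and it is exactly at the step you flag as "the hardest". Carrying out your two-point Bhattacharyya optimization exactly, with $p=\tr(\Lambda\rho)\ge 1-\mu$ and $\sqrt{F(\rho,\tilde\rho)}\ge\sqrt{1-\mu^2}$, the best possible lower bound on $q=\tr(\Lambda\tilde\rho)$ (worst case $p=1-\mu$) is $c(\mu)=\bigl(\sqrt{(1-\mu)(1-\mu^2)}-\mu^{3/2}\bigr)^2$. This is stronger than the target constant $\tfrac{1-\mu-\mu^2}{4(1-\mu^2)}$ for small $\mu$, but it drops below it for $\mu\gtrsim 0.47$ (e.g.\ at $\mu=0.5$ one has $c\approx 0.067$ versus $0.083$), while the lemma's bound remains nontrivial up to $\mu=(\sqrt{5}-1)/2\approx 0.618$. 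So the deferred "bookkeeping" is not routine: with the witness $\sigma_B=\tilde\rho_B$ your single-shot fidelity argument provably cannot reach the claimed constant on part of the relevant range; what you obtain is a different, incomparable bound. Your first proposed fix, averaging the test operator $\Lambda\mapsto\tfrac12(\Lambda+\Pi_{\tilde\rho})$, is not available in this direction of the bound: to upper-bound $D_H^\mu(\rho\Vert\mathbbm{1}_A\otimes\sigma_B)$ you must lower-bound $\tr\bigl(\Lambda(\mathbbm{1}_A\otimes\sigma_B)\bigr)$ for \emph{every} feasible $\Lambda$; the test is adversarial and cannot be modified. Your second guess is in fact what the paper does: it invokes the known conversion between the smooth max-relative entropy and the hypothesis-testing relative entropy, $D_H^{1-\varepsilon^2-\delta}(\rho\Vert\sigma)\le D_{\max}^{\varepsilon}(\rho\Vert\sigma)-\log\tfrac{\delta^2}{4(1-\varepsilon^2)}$ from \cite{ABJT19} (see also \cite{RLD25}), with $\sigma=\mathbbm{1}_A\otimes\rho_B$ and the substitution $\delta=1-\mu-\mu^2$ (so that $1-\mu^2-\delta=\mu$), and then uses $\inf_{\varphi}D_H^\mu(\rho\Vert\mathbbm{1}_A\otimes\varphi_B)\le D_H^\mu(\rho\Vert\mathbbm{1}_A\otimes\rho_B)$ to pass to the conditional quantities; that conversion result (or an equivalent derivation of it) is the missing ingredient that produces the factor $4$ and the $1-\mu-\mu^2$ in the stated constant.
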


We now introduce the definition and discuss results pertaining to the adversarial erasure cost and the preparation cost of a quantum channel within the framework of the resource theory of conditional nonuniformity.

\textit{Resource-theoretic erasure and preparation costs of a channel}.--- Within the resource theory of conditional nonuniformity, the (one-shot) adversarial erasure cost of a quantum channel $\n_{A'\to A}$ with an error $\mu\in[0,1]$, is defined as
\begin{equation}\label{eq:rt-e-cost}
    \widetilde{W}^{\mu}_{\mathrm{eras}}[A|E]_\n:= \sup_{\rho\in\St(A')}\widetilde{W}^{\mu}_{\mathrm{eras}}(A|E)_{\mathcal{V}^{\n}(\rho)}.
\end{equation}
The eraser has access to the ancilla $E$ but no access to purifying reference $R$ of the logical input. The (one-shot) preparation cost of a quantum channel $\n_{A'\to A}$ with an error $\mu\in[0,1)$, is defined as
\begin{equation}
    \widetilde{W}^{\mu}_{\mathrm{prep}}[A|R]_\n:= \sup_{\op{\psi}\in\St(RA')}\widetilde{W}^{\mu}_{\mathrm{prep}}(A|R)_{{\n}(\psi)}.
\end{equation}

Now we state the bounds on the work costs for erasure and preparation of a quantum channel in terms of the dynamical min-entropy. See Appendix~\ref{app:eras_prep_bounds_proof} for proof.
\begin{proposition}\label{prop:eras_prep_bounds}
Given a quantum channel $\n_{A'\to A}$ with an isometric extension channel  $\mathcal{V}^\n_{A'\to AE}$ and purifying reference $R$ of the logical input, the work costs of preparing and erasing the channel, with a reservoir (bath) at a fixed temperature $T$, are bounded from above for an error $\mu\in(0,1)$ as
\begin{align}
    \widetilde{W}^\mu_{\mathrm{prep}}[A|R]_\n &\le -S^\mu_{\min}[\n]k_BT\ln 2,\\
\widetilde{W}^\mu_{\mathrm{eras}}[A|E]_\n&\le  \left(-S_{\min}[\n]+\log(1-\mu)\right)k_BT\ln 2.
\end{align}
The bounds saturate for the zero-error work costs, i.e., when $\mu=0$, 
\begin{equation}\label{eq:zero_error_eras_prep}
      \widetilde{W}^0_{\mathrm{prep}}[A|R]_\n=\widetilde{W}^0_{\mathrm{eras}}[A|E]_\n= -S_{\min}[\n]k_BT\ln 2.
\end{equation}
\end{proposition}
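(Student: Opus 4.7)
The plan is to split the proof into the preparation bound, the erasure bound, and the zero-error saturation, reducing each channel-level cost to the state-level identities $\widetilde{W}^\mu_{\mathrm{prep}}(A|B)_\rho = -S^{\downarrow,\mu}_{\min}(A|B)_\rho\,k_B T\ln 2$ and $\widetilde{W}^\mu_{\mathrm{eras}}(A|B)_\rho = S^\mu_H(A|B)_\rho\,k_B T\ln 2$ of \cite{JGW25}, and then invoking Eq.~\eqref{eq:smin-uparrow} and Proposition~\ref{theo:channel_entropy} to rewrite the optimized quantities in terms of the dynamical min-entropy.

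For the preparation bound, I would take the supremum over pure $\ket{\psi}_{RA'}$ of the state identity, giving $\widetilde{W}^\mu_{\mathrm{prep}}[A|R]_\n = -\inf_\psi S^{\downarrow,\mu}_{\min}(A|R)_{\n(\psi)}\,k_B T\ln 2$, and show $\inf_\psi S^{\downarrow,\mu}_{\min}(A|R)_{\n(\psi)} \ge S^\mu_{\min}[\n]$. For any $\m\in\mathcal{B}^\mu[\n]$, the definition of the channel purified distance $\mathrm{P}[\n,\m]$ forces $\m(\psi)\in\mathcal{B}^\mu(\n(\psi))$; since $S^{\downarrow,\mu}_{\min}$ is a supremum over the smoothing ball, $S^{\downarrow,\mu}_{\min}(A|R)_{\n(\psi)} \ge S^\downarrow_{\min}(A|R)_{\m(\psi)}$. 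Taking $\inf_\psi$ and applying Eq.~\eqref{eq:smin-uparrow} on the right gives $S_{\min}[\m]$, and taking $\sup_{\m\in\mathcal{B}^\mu[\n]}$ then produces $S^\mu_{\min}[\n]$, closing the preparation bound.

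For the erasure bound, I would proceed in two substeps. First, the smoothing estimate $S^\mu_H(A|E)_\sigma \le S^0_H(A|E)_\sigma + \log(1-\mu)$ follows from the feasible choice $\Lambda = (1-\mu)\Pi_\sigma$ in the definition of $D^\mu_H(\sigma\Vert\mathbbm{1}_A\otimes\tau)$, which yields $D^\mu_H \ge D^0_H + \log[1/(1-\mu)]$. Second, I would identify $\sup_\rho S^0_H(A|E)_{\mathcal{V}^\n(\rho)} = -S_{\min}[\n]$: since $V_{A'\to AE}$ is an isometry, $\Pi_{\mathcal{V}^\n(\rho)} = V \Pi_\rho V^\dag$, and cycling the trace gives $\tr(\Pi_{\mathcal{V}^\n(\rho)}(\mathbbm{1}_A\otimes\sigma_E)) = \tr(\n^c(\Pi_\rho)\sigma_E)$, with $\n^c(\cdot) := \tr_A(V(\cdot)V^\dag)$ the complementary channel, so $S^0_H(A|E)_{\mathcal{V}^\n(\rho)} = \log\Vert\n^c(\Pi_\rho)\Vert_\infty$. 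Since $\n^c$ is completely positive and $\Pi_\rho \le \mathbbm{1}_{A'}$, with equality attained at any full-support input (e.g., $\rho = \pi_{A'}$), the supremum over $\rho$ equals $\log\Vert\n^c(\mathbbm{1}_{A'})\Vert_\infty$. Using the Choi purification $\mathcal{V}^\n(\Phi_{RA'}) = \op{\varphi}_{RAE}$, a direct computation gives $\n^c(\mathbbm{1}_{A'}) = |A'|\varphi_E$ with $\varphi_{RA} = \Phi^\n_{RA}$; Schmidt duality across the $RA|E$ cut implies $\Vert\varphi_E\Vert_\infty = \Vert\Phi^\n\Vert_\infty$, and evaluating $D_{\max}(\Phi^\n\Vert\pi_R\otimes\mathbbm{1}_A) = \log(|A'|\Vert\Phi^\n\Vert_\infty) = -S_{\min}[\n]$ completes the identification. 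Combined with the smoothing estimate, this yields the erasure upper bound.

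The zero-error saturation then follows immediately: Eq.~\eqref{eq:smin-uparrow} makes the preparation inequality an equality at $\mu = 0$, while the channel-level identification $\sup_\rho S^0_H(A|E)_{\mathcal{V}^\n(\rho)} = -S_{\min}[\n]$ established above does the same for the erasure inequality. The main obstacle I anticipate is this channel-level evaluation: while the reduction to $\log\Vert\n^c(\Pi_\rho)\Vert_\infty$ is mechanical, recognizing that the supremum is attained at a full-support (typically mixed) input, and that $\Vert\n^c(\mathbbm{1}_{A'})\Vert_\infty$ matches $D_{\max}(\Phi^\n\Vert\pi_R\otimes\mathbbm{1}_A)$ via the purity of the Choi extension, is the non-trivial ingredient that ties the adversarial erasure cost to the dynamical min-entropy.
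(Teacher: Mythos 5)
Your proposal is correct, and while its skeleton (reduce the channel costs to the state-level identities of \cite{JGW25}, bound the optimized state entropies by the dynamical min-entropy, then check saturation at $\mu=0$) matches the paper, the way you handle the erasure bound is genuinely different. The paper invokes the inequality $D_H^{\mu}(\rho\Vert\sigma)\ge\xoverline{D}_0(\rho\Vert\sigma)+\log\frac{1}{1-\mu}$ from the literature and then passes from $\sup_\rho \xoverline{S}^{\uparrow}_0(A|E)_{\mathcal{V}^{\n}(\rho)}$ to $-S_{\min}[\n]$ via the duality relations between Petz- and sandwiched R\'enyi conditional entropies for the pure state $\mathcal{V}^{\n}(\psi_{RA'})$ together with Eq.~\eqref{eq:smin-uparrow}; you instead (i) re-derive the hypothesis-testing inequality elementarily from the feasible choice $\Lambda=(1-\mu)\Pi$, and (ii) evaluate $\sup_{\rho}S^0_H(A|E)_{\mathcal{V}^\n(\rho)}$ by an explicit computation, $S^0_H(A|E)_{\mathcal{V}^\n(\rho)}=\log\Vert\n^c(\Pi_\rho)\Vert_\infty$, monotonicity of the positive map $\n^c$ under $\Pi_\rho\le\mathbbm{1}_{A'}$, and the spectral identification $\n^c(\mathbbm{1}_{A'})=|A'|\varphi_E$ with $\Vert\varphi_E\Vert_\infty=\Vert\Phi^{\n}_{RA}\Vert_\infty$, so that $\log(|A'|\Vert\Phi^\n\Vert_\infty)=-S_{\min}[\n]$ follows from the Choi-state formula for $S_{\min}[\n]$ rather than from entropy duality. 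Your route is more self-contained and makes the role of the complementary channel and the Choi spectrum transparent, at the cost of a slightly longer computation; the paper's route is shorter given the cited duality machinery. Your preparation argument coincides with the paper's (it is essentially Lemma~\ref{lem:smooth-ch} specialized to the $\downarrow$ variant, which you handle directly and, if anything, a bit more carefully than the paper's appeal to that lemma), and your zero-error saturation is the same observation in both cases.
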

The identity~\eqref{eq:zero_error_eras_prep} can be understood to follow from Proposition~\ref{theo:channel_entropy} (explainable using the decoupling theorem of a channel). This identity is a channel-analogue of observation made in \cite[Eq.~(75)]{JGW25}. The zero-error preparation and adversarial erasure costs of a quantum channel are continuous. Given any two quantum channels $\n_{A'\to A}$ and $\m_{A'\to A}$ that are $\nu$-close, $\frac{1}{2}\Vert\n-\m\Vert_\diamond\le \nu$, it follows from Lemma~\ref{lem:continuity} that the zero-error adversarial work costs are close as well,
    \begin{equation}
        \abs{\widetilde{W}^0_{\mathrm{eras}}[A|E]_\n-\widetilde{W}^0_{\mathrm{eras}}[A|E]_\m}\le \nu |A|\min\{|A|,|A'|\}k_BT.
    \end{equation}

Consider a smoothened zero-error adversarial erasure cost of a channel, defined as $\widetilde{W}^{0,\varepsilon}_{\mathrm{eras}}[A|E]_\n:=\sup_{\m\in B^\varepsilon[\n]}\widetilde{W}^{0}_{\mathrm{eras}}[A|E]_\m$. The resource-theoretic smoothened zero-error adversarial erasure cost $\widetilde{W}^{0,\varepsilon}_{\mathrm{eras}}[A|E]_\n$ of a quantum channel $\n_{A'\to A}$ is bounded by the thermodynamic adversarial erasure cost $ W[A|E]_{\mathcal{N}}$~\eqref{eq:work-ch},
\begin{equation}\label{eq:recon}
     W[A|E]_{\mathcal{N}}\leq \widetilde{W}^{0,\varepsilon}[A|E]_{\n}+ k_BT\Delta\ln 2,
\end{equation}
with the probability $1-\delta$, where $\delta:=\sqrt{2^{-\Delta/2}+12\varepsilon}$ for all $\delta,\varepsilon>0$. It follows from the observation that $\sup_{\m\in B^\varepsilon[\n]}\widetilde{W}^{0}_{\mathrm{eras}}[A|E]_\m=-S_{\min}^{\varepsilon}[\n]k_BT\ln 2$ and employing Theorem~\ref{thm:workcost}. The inequality~\eqref{eq:recon} provides a quantitative relation between the adversarial erasure costs defined utilizing thermodynamic~\cite{RAR+11} and resource-theoretic~\cite{JGW25} frameworks.

\section{Properties and min-entropies of channels}~\label{sec:examples}
In previous sections, we argued that the decoupling theorem provides insight or justification for the optimal erasure cost being directly related to the dynamical min-entropy of a channel and formally introduced the results. Informally, we have $W_{\rm eras}[A|E]_{\n}\propto_{\approx} -S_{\min}[\n]$ and $\widetilde{W}^0_{\rm eras}[A|E]_{\n}\propto - S_{\min}[\n]$ for a quantum channel $\n_{A'\to A}$, when the reservoir is at a fixed temperature $T$. Now we discuss the dynamical min-entropy for different classes of quantum channels that are of practical interest. We look into some of their properties and implications on the dynamical min-entropy.  

\textit{PPT channels}.-- A quantum channel $\n_{A'\to A}$ is called a PPT channel if $T_{A}\circ\n_{A'\to A}$, where $T_A$ denotes the transposition map on $A$, is a quantum channel~\cite{EVWW01}. A state $\rho_{AB}$ is called a PPT state if it remains positive under partial transposition~\cite{Rai01,EVWW01}, $T_B(\rho_{AB})\geq 0$; for a PPT state $\rho_{AB}$, $F(\rho_{AB},\Phi_{AB})=\tr[\rho_{AB}\Phi_{AB}]\leq {d}^{-1}$ where $d=\min\{|A|,|B|\}$. A quantum channel $\n_{A'\to A}$ is a PPT channel if and only if its Choi state $\Phi^{\n}_{RA}$ is a PPT state. $S_{\min}[\n]\geq 0$ for all PPT channels $\n_{A'\to A}$, and $S_{\min}[\n]<0$ necessarily implies that the channel $\n_{A'\to A}$ is NPT (not PPT). It follows from the fact that $S^{\downarrow}_{\min}(A|B)_{\rho}\geq 0$ for a PPT state $\rho_{AB}$. PPT channels have vanishing quantum capacity~\cite{EVWW01,SY08}. Proposition~\ref{theo:channel_entropy} shows that the dynamical entropy is directly related to the singlet fidelity of a quantum channel and hence consequential for its one-shot quantum capacity~\cite{BD10,KDWW19}.

\textit{Markovian dynamics}.-- For any quantum dynamics $\n$ of the form $\n_{A'\to A}=\n^2_{A_1\to A}\circ\n^1_{A_0\to A_1}\circ\n^0_{A'\to A_0}$~\cite{Lin75,DKSW18,HRF20}, where $\n^2,\n^1, \n^0$ are quantum channels, such that $\mathcal{R}^{\mathbbm{1}}_{A_0\to A_1}\circ\n^0\leq \mathcal{R}^{\mathbbm{1}}_{A'\to A_1}$ and $\n^2\circ\mathcal{R}^{\mathbbm{1}}_{A_0\to A_1}\leq \mathcal{R}^{\mathbbm{1}}_{A_0\to A}$ (see \cite[Definition 5]{SPSD25}), then $S_{\min}[\n]\geq \max\{S_{\min}[\n^2\circ\n^1],S_{\min}[\n^1\circ\n^0]\}\geq \min\{S_{\min}[\n^2\circ\n^1],S_{\min}[\n^1\circ\n^0]\}\geq S_{\min}[\n^1]$. This is a consequence of the monotonicity of the entropy of a quantum channel under the action of an $\mathcal{R}^{\mathbbm{1}}$-subpreserving quantum superchannel, i.e., $S_{\min}[\Theta(\n)]\geq S_{\min}[\n]$ for a quantum channel $\n$ and an $\mathcal{R}^{\mathbbm{1}}$-subpreserving superchannel $\Theta$ (see Lemma~\ref{lem:monotone_minentropy}).

\textit{Replacer channels}.-- A replacer channel $\mathcal{R}^{\omega}_{A'\to A}$ outputs a fixed state irrespective of what the input state is, $\mathcal{R}^{\omega}_{A'\to A}(\rho_{A'})=\omega_A~ \forall \rho\in\St(A')$ for a fixed state $\omega_A$; we have $S_{\min}[\mathcal{R}^{\omega}]=S_{\min}(A)_{\omega}\geq 0$ for all $\omega\in{\St}(A)$ (also see~\cite{DS25}). We can simulate a replacer channel by utilizing the SWAP gate $\mathcal{S}$,
\begin{align}
    \mathcal{R}^{\omega}_{A'\to A}(\cdot)=\tr_E\circ \mathcal{S}_{A'E'\to AE}(~\cdot\otimes\omega_{E'}),
\end{align}
where $\mathcal{S}_{AE'\to AE}(\cdot)={\rm SWAP}(\cdot){\rm SWAP}^\dag$ and ${\rm SWAP}\ket{i}_{A'}\ket{j}_{E'}=\ket{j}_A\ket{i}_E$. Consider $\phi^{\omega}_{E'Q}$ and $\psi^{\rho}_{RA'}$ to be purified states of $\omega_{E'}$ and $\rho_{A'}$, respectively, then $\mathcal{S}_{A'E'\to AE}(\psi^{\rho}_{RA'}\otimes\phi^{\omega}_{E'Q})=\psi^{\rho}_{RE}\otimes\phi^{\omega}_{AQ}$ is a pure state. $S_{\min}[\mathcal{R}^{\omega}]=0$ if and only if $S_{\min}(A)_{\omega}=0$, and $S_{\min}(A)_{\omega}=0$ if and only if $\omega_A$ is a pure state. $S_{\min}[\n]$ of a quantum channel attains the maximum $S_{\min}[\n]=\log|A|$ if and only if the channel is uniformly mixing (depolarizing) $\n_{A'\to A}=\mathcal{R}_{A'\to A}^{\pi}$, $S_{\min}[\mathcal{R}^\pi]=\log|A|$.  A replacer channel is a perfect decoupler in the sense that it absolutely decorrelates the system with its reference while the state of the reference remains locally preserved, cf.~Proposition~\ref{theo:channel_entropy}. The SWAP operation decouples $A'$ from $R$ and if $A$ is not in a pure state, it gets coupled with a part $Q$ of  the environment (ancilla) $EQ$. If $A$ is in a pure state, then it cannot be coupled with any other system (here, environment). In a dual picture, for an isometric extension channel  $\mathcal{V}^{\mathcal{R}^{\omega}}_{A'\to AE''}$ of a replacer channel $\mathcal{R}^{\omega}_{A'\to A}$, $\mathcal{V}^{\mathcal{R}^{\omega}}_{A'\to AE''}(\psi_{RA'})$ is pure if $\psi_{RA'}$ is a pure state; that is, while $A$ gets decoupled from $R$, $A$ and $R$ both get coupled with the environment $E''$ ($EQ$) if $\rho_{A'}$ is not a pure state.

\textit{Measurement}.-- A general measurement of a quantum state is given by a positive operator-valued measure (POVM). A POVM is a set $\{\Lambda^x\}_{x\in\mathcal{X}}$ such that $\sum_{x\in\mathcal{X}}\Lambda^x=\mathbbm{1},~ \Lambda^x\geq 0~\forall x\in\mathcal{X}$. A quantum measurement is a kind of quantum channel. We can describe a quantum channel $\mathcal{N}_{A'\to X}$ corresponding to a POVM $\{\Lambda^x\}_{x\in\mathcal{X}}$ on $A'$ as
\begin{equation}
    \mathcal{N}_{A'\to X}(\cdot)=\sum_{x\in\mathcal{X}}\op{x}_X\otimes\tr[\Lambda^x_{A'}(\cdot)],
\end{equation}
where $X$ is a classical register and $\{\ket{x}\}_{x\in\mathcal{X}}$ forms an orthonormal basis. The dynamical min-entropy is given by $S_{\min}[\n]=\inf_{\psi\in{\St}(RA')}S_{\min}(X|R)_{\n(\psi)}$, where $\n_{A'\to X}(\psi_{RA'})=\sum_{x\in\mathcal{X}}p_X(x)\op{x}_{X}\otimes\rho^x_{R}$ is a classical-quantum state for some probability distribution $p_X$, $\rho^x_R=\frac{1}{p_X(x)}\tr[\Lambda^x_{A'}(\psi_{RA'})]$, $p_X(x)=\tr[\Lambda^x(\psi_{A'})]$. Alternatively, the dynamical min-entropy is given by $S_{\min}[\n]=S^{\downarrow}_{\min}(X|R)_{\Phi^{\n}}$, where the Choi state $\Phi^\n_{RX}=\sum_{x\in\mathcal{X}}q_X(x)\sigma^x_{R}\otimes\op{x}_X$ is a quantum-classical state for $q_X(x)=\tr[\Lambda^x\pi_{A'}]=\frac{1}{|A'|}\tr[\Lambda^x]$ and $\sigma^x_R=\frac{1}{q_X(x)}\tr[\Lambda^x_{A'}(\Phi_{RA'})]$. The measurement channel $\n_{A'\to X}$ is an entanglement-breaking channel~\cite{HSR03}, as it breaks entanglement between the reference and the input; that is, the Choi state of a measurement channel is a separable state, a PPT state with no entanglement.

\textit{Unitary gates}.-- Gates are channels, and unitary gates are unitary channels. For any unitary gate $\mathcal{U}_{A'\to A}$, we have $S_{\min}[\mathcal{U}]=-\log|A|$. For an isometric channel $\mathcal{V}_{A'\to A}$, $S_{\min}[\mathcal{V}]=-\log|A'|$. An isometric channel is unitary if the input dimension is equal to its output dimension. In general, $S_{\min}[\mathcal{N}]$ attains the minimum, i.e., $S_{\min}[\mathcal{N}]=-\log\min\{|A'|,|A|\}$, if and only if the channel $\mathcal{N}_{A'\to A}$ is an isometric channel.

To simplify our discussion, let us recall Eq.~\eqref{eq:zero_error_eras_prep}: for a quantum channel $\n_{A'\to A}$, the resource-theoretic zero-error preparation and adversarial erasure costs are proportional to $-S_{\min}[\n]$, to be precise, $\widetilde{W}^0_{\mathrm{prep}}[A|R]_\n=\widetilde{W}^0_{\mathrm{eras}}[A|E]_\n= -S_{\min}[\n]k_BT\ln 2$. $S_{\min}[\n]<0$ necessarily implies that the channel $\n_{A'\to A}$ is a NPT channel, and $\n$ being a PPT channel sufficiently guarantees that $S_{\min}[\n]\geq 0$. The negative value of the adversarial erasure cost implies that the thermodynamic work is extractable during the erasure process instead of requiring work. That is, the thermodynamic work is extractable during the erasure process of a PPT channel. If a PPT channel $\n_{A'\to A}$ is such that $S_{\min}[\n]=0$ then, the thermodynamic work is neither gained nor consumed during the erasure and preparation processes as $\widetilde{W}^0_{\mathrm{prep}}[A|R]_\n=\widetilde{W}^0_{\mathrm{eras}}[A|E]_\n= 0$. The zero-error preparation and adversarial erasure costs obtain (a) maximum if and only if the quantum channel is an isometric channel, i.e., $\widetilde{W}^0_{\mathrm{prep}}[A|R]_\n=\widetilde{W}^0_{\mathrm{eras}}[A|E]_\n= (\log d) k_BT\ln 2$, where $d=\min\{|A'|,|A|\}$, if and only if $\n_{A'\to A}$ is an isometric channel, and (b) minimum if and only if the quantum channel is uniformly mixing, i.e., $\widetilde{W}^0_{\mathrm{prep}}[A|R]_\n=\widetilde{W}^0_{\mathrm{eras}}[A|E]_\n=-(\log |A|)k_BT\ln 2$, if and only if $\n_{A'\to A}=\mathcal{R}_{A'\to A}^\pi$.

\textit{Remark}. We notice that the erasure of a quantum channel $\mathcal{N}_{A'\to A}$ is equivalent to transforming the channel to the replacer channel $\mathcal{R}^{\op{\psi}}_{A'\to A}$ that only outputs a fixed pure state, here $\op{\psi}_A=\op{0}_A$ (see also \cite{BGD25a,BGD25b}).

\begin{figure}[h]
    \centering
    \includegraphics[width=1.0\linewidth]{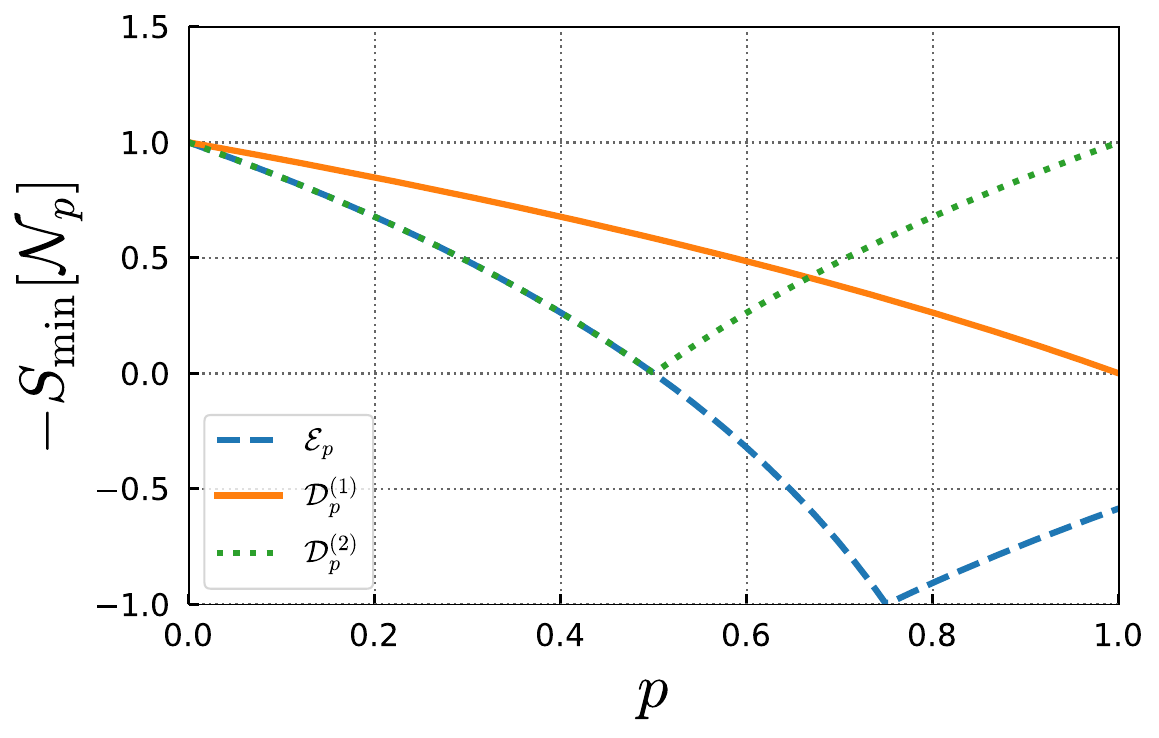}
    \caption{We plot the numerical values for the negative of the dynamical min-entropy $-S_{\min}[\n_p]$ of a $p$-parametrized quantum channel $\mathcal{N}_p$ vs the parameter $p\in[0,1]$, for three families of qubit channels: depolarizing channels $\mathcal{E}_p$~\eqref{eq:polch}, first-kind dephasing channels $\mathcal{D}^{(1)}_p$~\eqref{eq:dep1}, and second-kind dephasing channels $\mathcal{D}^{(2)}_p$~\eqref{eq:dep2}.}
    \label{fig:channel_entropy}
\end{figure}

\subsection{Some numerical examples}
We plot the numerical values of the negative of the dynamical min-entropy for three simple families of qubit quantum channels of interest in Fig.~\ref{fig:channel_entropy}: (a) the depolarizing channel $\mathcal{E}^{A'\to A}_p$ with the probability parameter $p\in[0,1]$,
\begin{equation}  \mathcal{E}_p(\rho):=(1-p)\rho+\frac{p}{3}\sum_{i} \sigma_i\rho\sigma_i, \label{eq:polch}\end{equation}
where $\sigma_i$ are $2\times 2$ Pauli matrices (operators); (b) the first-kind dephasing channel $\mathcal{D}_p^{(1), A'\to A}$ with the probability parameter $p\in[0,1]$,
\begin{equation}   \mathcal{D}^{(1)}_p(\rho):=(1-p)\rho+p~\mathrm{diag}(\rho),\label{eq:dep1} \end{equation} 
where $\mathrm{diag}(\rho)$ denotes the diagonal matrix of $\rho$, $[\mathrm{diag}(\rho)]_{ij}=\delta_{ij}\rho_{ij}$ ($\delta_{ij}$ is the Kronecker delta function); (c) the second-kind dephasing channel $\mathcal{D}_p^{(2), A'\to A}$ with the probability parameter $p\in[0,1]$,
\begin{equation}  \mathcal{D}_p^{(2)}(\rho):=(1-p)\rho+p~ \sigma_3\rho \sigma_3, \label{eq:dep2}\end{equation}

These channels are widely used toy models for open quantum dynamics of qubit systems~\cite{DS05,YE03,Div97,DBWH21}. The depolarizing channel $\n_p=\mathcal{E}_p(\cdot)$ captures symmetric decoherence from disordered media, e.g., randomly oriented birefringent elements. The dephasing channel of the first kind $\n_p=\mathcal{D}_p^{(1)}$ models the decoherence, where, we can study the interaction of the system with the environment using master equations. The dephasing channel of the second kind $\n_p=\mathcal{D}_p^{(2)}$ can be seen as a bit-flip channel in the Hadamard basis~\cite{Div97,YE03}.

 All three channels reduce to the identity channel at $p=0$ and $\mathcal{D}_p^{(2)}$ is a unitary channel also at $p=1$. The zero-error adversarial erasure cost is maximized for unitary channels, since there is no correlation generated between the channel output and the environment. This is reflected in Fig.~\ref{fig:channel_entropy} at $p=0$ for all three channels and also at $p=1$ for $\mathcal{D}_p^{(2)}$, as for a qubit unitary channel $\mathcal{U}_{A'\to A}$, $\widetilde{W}^0_{\rm eras}[A|E]_{\mathcal{U}}=-S_{\min}[\mathcal{U}]=1$. At $p=1$, $\mathcal{D}_p^{(1)}$ is a completely decohering channel and its output is a of the form $\sum_ip_i\op{i}_A$, where $\{p_i\}_i$ is some probability distribution and $\{\ket{i}\}_i$ forms an orthonormal basis. That is, $\mathcal{D}_p^{(1)}$ is an entanglement-breaking channel at $p=1$ and its Choi state $\Phi^{\mathcal{D}^{(1)}_{p=1}}_{RA}=\frac{1}{2}\sum_i\op{i}_R\otimes\op{i}_A$ is a maximally classically correlated state and its purified state $\Phi^{\mathcal{D}^{(1)}_{p=1}}_{RAE}=\frac{1}{2}\sum_{i,j}\ket{iii}\bra{jjj}_{RAE}$, yielding both $\widetilde{W}^0_{\mathrm{prep}}[A|R]_{\mathcal{D}^{(1)}_{p=1}}=\widetilde{W}^0_{\mathrm{eras}}[A|E]_{\mathcal{D}^{(1)}_{p=1}}= 0$. At $p=1/2$, the channel $D_p^{(2)}$ is also a completely decohering channel implying a vanishing adversarial erasure cost.

\section{Discussion}~\label{sec:con}
We find an operational interpretation of the dynamical min-entropy in the context of quantum thermodynamics. We show that the one-shot adversarial erasure cost of a quantum channel is (approximately) proportional to its min-entropy. Adversarial erasure cost is evaluated in a genuinely quantum framework since we assume the evolution of the composite system (system+ancilla) to be closed, i.e., unitary transformation. Our work provides a quantitative way to assess the one-shot erasure and preparation costs of a physical transformation (quantum channel). One-shot costs are important since, in practice, we are only allowed to have a finite use of gates. This makes the framework of determining one-shot costs of erasing and preparing quantum channels instrumental and close to realistic situations.

One of the future directions is to consider the erasure and preparation costs of bipartite quantum channels, when a pair of input and output logical systems are accessible to the observer. It would be interesting to see how these costs would relate to the conditional entropy of a bipartite quantum channel~\cite{DGP24}. The adversarial erasure cost of a quantum channel appears to be related to the quantum capacity and the private randomness capacity of the channel (cf.~\cite{BD10,SD22,DGP24,YHW19}). It could be meaningful to rigorously analyze the relation and explore cryptography or secure communication protocols (cf.~\cite{Ren06,KR11,Das19}) based on the thermodynamic work consumed or gained from the erasure of quantum channels. We believe that the dynamical decoupling theorem, i.e., decoupling theorem for channels, could find some other applications of interest, given that the decoupling theorem for states has found wide applications in quantum science and technology~\cite{DBWR14,MBD+17}.

\begin{acknowledgments}
The authors thank Manabendra Nath Bera for the extensive discussions and useful suggestions on the manuscript. S D thanks Karol Horodecki and Uttam Singh for the discussions. S D and H B thank IIIT Hyderabad for the Seed Grant and the Ministry of Electronics and Information Technology (MeitY), Government of India for the Visvesvaraya Post-Doctoral Fellowship. D G S thanks the Department of Science and Technology, Government of India, for the INSPIRE fellowship. S C acknowledges partial support from the Department of Science and Technology, Government of India, through the QuEST grant with Grant No. DST/ICPS/QUST/Theme-3/2019/120 via I-HUB QTF of IISER Pune, India. S D acknowledges support from the Science and Engineering Research Board (now ANRF), Department of Science and Technology, Government of India, under Grant No. SRG/2023/000217 and the Ministry of Electronics and Information Technology (MeitY), Government of India, under Grant No. 4(3)/2024-
ITEA. S D also acknowledges support from the National Science Centre, Poland, grant Opus 25, 2023/49/B/ST2/02468. V A thanks IIIT Hyderabad for the institutional PhD fellowship. H B thanks the
Institute of Mathematical Sciences (IMSc), Chennai for the hospitality during his visit.
\end{acknowledgments}

\section*{Appendix}
\begin{appendix}
The appendix is organized as follows. In Appendix~\ref{app:prem}, we introduce standard notations, definitions, and facts. We primarily focus on the two well-known families of the relative entropies and conditional entropies derived from them. We review their properties and recall results related to the dynamical entropy that are useful to derive our results. In Appendix~\ref{app:proofs}, we provide detailed proofs of the main results and observations made in the main content. We briefly describe the work extraction protocol in Appendix~\ref{app:protocol}.

\section{Standard notations, definitions, and facts}\label{app:prem}
Let $\St(A)$ and $\Pos(A)$ denote the sets of all quantum states and positive semidefinite operators, respectively, defined on $A$. For an operator $\rho$, $\rho\geq 0$ denotes that $\rho$ is positive semidefinite. An isometry operator $V_{A'\to A}$ satisfies $V^\dag V=\mathbbm{1}_{A'}$ and $VV^\dag=\Pi_{A}$, where $\Pi_A$ is the projector operator, i.e., $\Pi_A^2=\Pi_A$; $|A'|\leq |A|$ for an isometry operator $V_{A'\to A}$, and a unitary operator $V_{A'\to A}$ is an isometry with $|A'|=|A|$.  A quantum channel $\mathcal{N}_{A'\to A}$ is a completely positive, trace-preserving linear map $\mathcal{N}: \St(A')\to \St(A)$. Any quantum channel $\mathcal{N}_{A'\to A}$ can be expressed in terms of Kraus operators $\{K^i_{A'\to A}\}_i$,
\begin{equation}
    \mathcal{N}(\cdot)=\sum_{i}K_i(\cdot)K^\dag_i,
\end{equation}
such that $\sum_iK^\dag_iK_i=\mathbbm{1}_{A'}$. An isometric channel $\mathcal{V}_{A'\to A}$ is a quantum channel with a single Kraus operator $V_{A'\to A}$, and for an isometric channel $|A'|\leq |A|$. A unitary channel $\mathcal{U}_{A'\to A}$ is an isometric channel with the input and output dimensions being the same, i.e., $|A'|=|A|$. A linear map $\mathcal{M}_{A'\to A}$ that doesn't increase the trace, i.e., $\tr(\m(\rho_{A'}))\leq \tr(\rho_{A'})$ for all positive operators $\rho_{A'}$, is called a trace subpreserving map.

A bipartite state $\rho_{AB}$ is called a separable state if it can be expressed as a convex mixture of product states between $A,B$,
\begin{equation}
    \rho_{AB}=\sum_{x}p_X(x){\psi}^x_{A}\otimes{\varphi}^x_B,
\end{equation}
where $p_X(x)\geq 0, \sum_xp_X(x)=1$, and $\psi^x\in\St(A)$ and $\varphi^x\in\St(B)$ for all $x$. If a bipartite state $\rho_{AB}$ is not separable then it is entangled. Let $\Phi_{AB}=\frac{1}{d}\sum_{i,j=0}^{d-1}\ket{ii}\bra{jj}_{AB}$, where $d=\min\{|A|,|B|\}$, i.e., $\Phi_{AB}$ is a maximally entangled state. The Choi state $\Phi^\n_{A'A}$ of a quantum channel $\n_{A'\to A}$ is $\Phi^\n_{A'A}:=\id_{R\to A'}\otimes\n_{A'\to A}(\Phi_{RA'})$.

$k_B$ denotes the Boltzmann constant and the temperature $T$ of a reservoir is an absolute temperature (in units of Kelvin).

\subsection{Review of R\`eyni entropies}\label{app:entropies}
In this appendix, we recall the entropic quantities and their properties from~\cite{P85,P86,Ren06,BD10,WWY14,TBH14,Tom21} necessary to understand the results.

\textit{Sandwiched R\'enyi relative entropies}.-- The family of quantum sandwiched R\'{e}nyi relative entropy between $\rho\in\St(A)$ and $\sigma\in\Pos(A)$ is given by~\cite{MDSFT13,WWY14},
    \begin{align}
        {D}_\alpha(\rho\Vert \sigma) & = \dfrac{1}{\alpha-1}\log \tr\left\{\left( \sigma^{\frac{1-\alpha}{2\alpha}}\rho \sigma^{{\frac{1-\alpha}{2\alpha}}}\right)^\alpha\right\}=\dfrac{1}{\alpha-1}\log \norm{ \sigma^{\frac{1-\alpha}{2\alpha}}\rho \sigma^{{\frac{1-\alpha}{2\alpha}}}}_\alpha^\alpha
    \end{align}
for $\alpha\in(0,1)$ and for $\alpha\in(1,\infty)$ if $\mathrm{supp}(\rho)\subseteq\mathrm{supp}(\sigma)$, else it is set to $+\infty$. For $\alpha\in [\frac{1}{2},1)\cup(1,\infty)$, the sandwiched R\'enyi relative entropy between states is monotonically nonincreasing under the action of quantum channels~\cite{MDSFT13,FL13}. We have, for $\rho,\sigma\in\St(A)$,
\begin{equation}
     {D}_\infty(\rho\Vert \sigma)= \log \norm{\sigma^{-1/2}\rho\sigma^{-1/2} }_\infty
     =\log\inf_{\lambda}\{\lambda:~ \lambda \sigma\ge \rho\},
\end{equation}
\begin{equation}
{D}_{\frac{1}{2}}(\rho\Vert \sigma)=-\log F(\rho,\sigma),
\end{equation}
where $F(\rho,\sigma):=\norm{\sqrt{\rho}\sqrt{\sigma}}_1^2$ is the (Uhlmann) fidelity between quantum states. $D_{\max}(\rho\Vert\sigma):=D_{\infty}(\rho\Vert\sigma)$ is also called the max-relative entropy. We also have $D_\alpha(\rho\Vert\sigma)\leq D_{\beta}(\rho\Vert\sigma)$ for all $\alpha\leq \beta\in(0,1)\cup(1,\infty)$.

\textit{Petz-R\'enyi relative entropy}.-- The family of quantum Petz-R\`enyi relative entropy~\cite{P85,P86} between $\rho\in\St(A)$ and $\sigma\in\Pos(A)$ is given by
    \begin{equation}
        \xoverline{D}_\alpha(\rho\Vert \sigma)= \dfrac{1}{\alpha-1}\log \tr\left\{\rho^\alpha \sigma^{(1-\alpha)}\right\}
    \end{equation}
     for $\alpha\in(0,1)$ and for $\alpha\in(1,\infty)$ if $\mathrm{supp}(\rho)\subseteq\mathrm{supp}(\sigma)$, else it is set to $+\infty$. $\xoverline{D}_\alpha(\rho\Vert \sigma)$ is monotonically nonincreasing under the action of quantum channels for $\alpha\in(0,2]$.

For any $\rho\in\St(A)$ and $\sigma\in\Pos(A)$, we have that both the sandwiched R\'enyi and Petz-R\'enyi relative entropy approaches the (Umegaki) quantum relative entropy,
\begin{equation}
    \lim_{\alpha\to 1}D_{\alpha}(\rho\Vert\sigma)=\lim_{\alpha\to 1}\xoverline{D}_\alpha(\rho\Vert \sigma)=D(\rho\Vert\sigma),
\end{equation}
where $D(\rho\Vert\sigma):=\tr\left[\rho(\log\rho-\log\sigma)\right]$ if $\supp(\rho)\subseteq\supp(\sigma)$ else it is $+\infty$.

We use $\textbf{D}_\alpha$ to denote both the families, sandwiched R\'enyi and Petz-R\'enyi, of quantum relative entropies. Based on the relative entropies, the quantum conditional entropies of a bipartite state $\rho\in \St(AB)$ can be defined as~\cite{Ren06,TBH14}
\begin{align}
    \textbf{S}_\alpha(A|B)_\rho &:=   \textbf{S}^\uparrow_\alpha(A|B)_\rho :=-\inf_{\sigma\in\St(B)}\textbf{D}_\alpha(\rho\Vert \mathbbm{1}_A\otimes \sigma_B),\\
        \textbf{S}^\downarrow_\alpha(A|B)_\rho &:=-\textbf{D}_\alpha(\rho\Vert \mathbbm{1}_A\otimes \rho_B).
\end{align}
The conditional entropy $\textbf{S}(A|B)_{\rho}$ of a state $\rho_{AB}$ quantifies uncertainty in $A$ when side information $B$ is accessible. For a bipartite state $\rho_{AB}$, the von Neumann conditional entropy $S(A|B)_{\rho}=-D(\rho_{AB}\Vert\mathbbm{1}_A\otimes\rho_B)=S(AB)_{\rho}-S(B)_{\rho}$, where $S(A)_{\rho}:=-\tr[\rho\log\rho]$ is the von Neumann entropy of $\rho\in\St(A)$. The conditional min-entropy $S_{\min}(A|B)_{\rho}:=S_{\infty}(A|B)_{\rho}:=\lim_{\alpha\to\infty}S^{\uparrow}
_{\alpha}(A|B)_{\rho}$. 

For a pure quantum state $\psi\in\St(ABC)$, the conditional entropies follow the \textit{duality relations}~\cite{TBH14}
\begin{align}
    {S}_\alpha (A|B)_\psi &= -{S}_{\frac{\alpha}{2\alpha-1}}(A|C)_\psi, ~\text{for}~\alpha\in\left[\frac{1}{2},\infty\right),\\
    {S}_\alpha^\downarrow (A|B)_\psi &= -\xoverline{S}_{\frac{1}{\alpha}}(A|C)_\psi,~\text{for}~\alpha\in(0,\infty),\\
    \xoverline{S}_\alpha^\downarrow (A|B)_\psi &= -\xoverline{S}_{2-\alpha}^\downarrow(A|C)_\psi,~\text{for}~\alpha\in[0,2].
\end{align}

The $\varepsilon$-hypothesis testing relative entropy between $\rho\in\St(A)$ and $\sigma\in\Pos(A)$ is defined as~\cite{BD10,LR12}
    \begin{equation}
D^\varepsilon_{H}(\rho\Vert \sigma)= -\log\inf\{\tr(\Lambda\sigma) :0\leq \Lambda \leq \mathbbm{1}, \tr(\Lambda\rho) \ge 1-\varepsilon\}.
    \end{equation}
For $\varepsilon=0$, $D_{H}^0(\rho\Vert\sigma)=\xoverline{D}_{0}(\rho\Vert \sigma)=-\log\tr(\Pi_\rho\sigma)$, where $\Pi_\rho$ is the projector onto the support of $\rho$.

The $\varepsilon$-hypothesis testing conditional entropy is defined as
\begin{align}
      S^{\varepsilon}_{H}(A|B)_\rho &:=-\inf_{\sigma_B}D^\varepsilon_{H}(\rho\Vert \mathbbm{1}_A\otimes \sigma_B),\\
        S^{\downarrow,\varepsilon}_{H}(A|B)_\rho &:=-D^\varepsilon_{H}(\rho\Vert \mathbbm{1}_A\otimes \rho_B).
\end{align}

\textit{Smoothened entropies}.-- An $\varepsilon$-ball $\mathcal{B}^\varepsilon(\rho)$ around a state $\rho$ is defined as the set of subnormalized states $\sigma$, $\tr(\sigma)\leq 1$ for $\sigma\geq 0$, 
\begin{equation}
    \mathcal{B}^\varepsilon(\rho):=\{\sigma:~ \sigma\ge 0, \tr(\sigma)\le 1, P(\rho,\sigma)\le \varepsilon\},
\end{equation}
where $P(\rho,\sigma):=\sqrt{1-F(\rho,\sigma)}$. We define an $\varepsilon$-ball around a quantum channel $\n_{A'\to A}$ as 
\begin{equation}
    \mathcal{B}^\varepsilon[\n]=\{\m_{A'\to A}:~ P[\n,\m]\le \varepsilon,~\m\in\mathrm{Ch(A',A)}\},
\end{equation}
where $P[\n,\m]:=\sup_{\psi\in\St{RA'}}P(\n(\psi_{RA'}),\m(\psi_{RA'}))$ is the purified distance between two channels $\n_{A'\to A}$ and $\m_{A'\to A}$, and it suffices to take the supremum over pure states.

The smoothened conditional entropies are defined as~\cite{Ren06,BD10} 
\begin{align}
    \textbf{S}_\alpha^{\uparrow,\varepsilon}(A|B)_\rho=\begin{cases}
      \inf\limits_{{\rho}_{AB}\in \mathcal{B}^\varepsilon(\rho)} \textbf{S}_\alpha^{\uparrow}(A|B)_{{\rho}},~ \text{for}~\alpha\in [0,1) \\
      \sup\limits_{{\rho}_{AB}\in \mathcal{B}^\varepsilon(\rho)}  \textbf{S}_\alpha^{\uparrow}(A|B)_{{\rho}},~ \text{for}~\alpha\in (1,\infty)
    \end{cases},
\end{align}
and
\begin{align}
    \textbf{S}_\alpha^{\downarrow,\varepsilon}(A|B)_\rho=\begin{cases}
      \inf\limits_{{\rho}_{AB}\in \mathcal{B}^\varepsilon(\rho)} -\textbf{D}_\alpha({\rho}_{AB}\Vert\mathbbm{1}_A\otimes\rho_B),~ \text{for}~\alpha\in [0,1)  \\
      \sup\limits_{{\rho}_{AB}\in \mathcal{B}^\varepsilon(\rho)}  -\textbf{D}_\alpha({\rho}_{AB}\Vert\mathbbm{1}_A\otimes\rho_B),~ \text{for}~\alpha\in (1,\infty)
    \end{cases}.
\end{align}

For a pure state $\psi\in\St(ABC)$ we have 
\begin{equation}
\begin{aligned}
    {S}_\alpha^{\uparrow,\varepsilon}(A|B)_\psi &= - {S}_\beta^{\uparrow,\varepsilon}(A|C)_\psi, 
   \quad 
    \text{for}~\alpha, \beta \in \left[\dfrac{1}{2}, \infty\right) ~\text{such that} \quad 
    \dfrac{1}{\alpha} + \dfrac{1}{\beta} = 2.
\end{aligned}
\end{equation}

\subsubsection{Relations between the smooth entropies}
Consider $\rho\in\St(A)$ and $\sigma\in\Pos(A)$. For $\varepsilon\in(0,1)$ and $\alpha\in[0,1)\cup(1,2]$, we have \cite{RLD25}
\begin{equation}
  \xoverline{D}_\alpha(\rho\Vert\sigma)-\dfrac{\alpha}{1-\alpha}\log\dfrac{1}{\varepsilon}+\log \dfrac{1}{1-\varepsilon}\le D^\varepsilon_{H}(\rho\Vert \sigma).
\end{equation}
For $\varepsilon\in[0,1)$ and $\alpha\in (\frac{1}{2},1)\cup(1,\infty)$, we have
\begin{equation}
 D^\varepsilon_{H}(\rho\Vert \sigma)\le {D}_{\alpha}(\rho\Vert\sigma)-\dfrac{\alpha}{1-\alpha}\log \dfrac{1}{1-\varepsilon}.
\end{equation}
In particular, following relation is relevant for our work: for $\varepsilon\in[0,1)$,
\begin{equation}
    \xoverline{D}_0(\rho\Vert\sigma)+\log \dfrac{1}{1-\varepsilon}\le D^\varepsilon_{H}(\rho\Vert \sigma)\le {D}_{\infty}(\rho\Vert\sigma)+\log \dfrac{1}{1-\varepsilon}.
    \end{equation}
    
The smooth max-relative entropy is defined as $D^\varepsilon_{\max}(\rho\Vert\sigma):={D}_\infty^\varepsilon(\rho\Vert\sigma):=\inf_{\rho'\in\mathcal{B}^\varepsilon(\rho)}{D}_\infty^\varepsilon(\rho'\Vert\sigma)$. It is related to the hypothesis testing relative entropy in the following ways, see~\cite[Theorem 4]{ABJT19} and \cite{RLD25,DMHB+13,DKF+14}, for $\varepsilon\in(0,1)$ and $\mu\in (0,1-\varepsilon^2)$
\begin{align}
    D_{{H}}^{1-\varepsilon^2-\mu}(\rho\Vert\sigma)&\le {D}_\infty^{\varepsilon}(\rho\Vert\sigma)-\log\frac{\mu^2}{4(1-\varepsilon^2)}~,\\
    D_{{H}}^{1-\varepsilon^2}(\rho\Vert\sigma)&\ge {D}_\infty^{\varepsilon}(\rho\Vert\sigma)-\log\frac{1}{1-\varepsilon^2}.
\end{align}

\subsubsection{Dynamical entropy}
Any desirable dynamical entropy function $\mathbf{S}$ of an arbitrary quantum channel should satisfy following three properties~\cite{Gou19,GW21} (also see~\cite{SPSD25,DGP24}):
\begin{enumerate}
    \item Monotonically nondecreasing under the action of $\mathcal{R}$-preserving superchannels $\Omega$, i.e., $\mathbf{S}[\n]\leq \mathbf{S}[\Omega(\n)]$, for an arbitrary quantum channel $\n$.
    \item Let $\m,\n$ be a pair of quantum channels, then $\mathbf{S}[\n\otimes\m]=\mathbf{S}[\n]+\mathbf{S}[\m]$.
    \item For a replacer channel $\mathcal{R}^{\omega}$, $\mathbf{S}[\mathcal{R}^\omega]=\mathbf{S}(\omega)$.
\end{enumerate}

The sandwiched R\`enyi relative channel entropy~\cite{CMW16} between two quantum channels $\n_{A'\to A}$ and $\m_{A'\to A}$ is given by
\begin{equation}
{D}_\alpha[\n\Vert\m]=\sup_{\psi\in\St{RA'}}{D}_\alpha(\id_R\otimes\n(\psi_{RA'})\Vert\id_R\otimes\m(\psi_{RA'})),
\end{equation}
where it suffices to optimize over pure states $\psi_{RA'}$ and $R\simeq A'$.

The sandwiched R\'enyi entropy of a quantum channel $\n_{A'\to A}$ is defined as~\cite{GW21}, for $\alpha\in(1,\infty)$
    \begin{equation}
        {S}_\alpha[\n]=- {D}_\alpha[\n\Vert \mathcal{R}],
    \end{equation}
where $\mathcal{R}^{\mathbbm{1}}_{A'\to A}$ is a uniformly randomizing map, $\mathcal{R}_{A'\to A}(\sigma_{A'})=\tr(\sigma_{A'})\mathbbm{1}_A$.
We define the smooth channel entropy of a quantum channel as (see \cite{GW21} for $\alpha\to \infty$)
\begin{equation}
    {S}^\varepsilon_\alpha[\n]=\sup_{\m\in B^\varepsilon[\n]}{S}_\alpha[\m]~\text{for}~\alpha\in (1,\infty).
\end{equation}
The entropy of a quantum channel is also called dynamical entropy as a quantum channel is a dynamical process. Properties of the dynamical entropy are discussed in some details in \cite{Yua19,GW21,SPSD25,DGP24,DS25}.

The min-entropy $S_{\min}[\n]:=S_{\infty}[\n]$ of a quantum channel $\n_{A'\to A}$ is~\cite{GW21}
\begin{align}
    S_{\min}[\n]:=\lim_{\alpha\to \infty}S_{\alpha}[\n] &=\inf_{\op{\psi}\in\St(RA')}S_{\min}^\downarrow(A|R)_{\n(\psi)}\\
   &=S_{\min}^\downarrow(A|R)_{\Phi^\n}\\
&=\inf_{\op{\psi}\in\St(RA')}S_{\min}(A|R)_{\n(\psi)}. \label{eq:smin-uparrow-1}
\end{align}

The von Neumann entropy $S[\n]:=\lim_{\alpha\to 1}S_{\alpha}[\n]$ of a quantum channel is~\cite{GW21} (see also \cite[Definition~6]{SPSD25})
\begin{equation}
    S[\mathcal{N}]=-D[\mathcal{N}\Vert\mathcal{R}]=\inf_{\psi\in\St(RA')}S(A|R)_{\n(\psi)},
\end{equation}
where it suffices to optimize over pure states $\psi_{RA'}$.

\textit{Asymptotic equipartition property}.-- For all $\varepsilon\in(0,1)$, the following relations hold~\cite{GW21},
\begin{align}
    \lim_{n\to \infty}\frac{1}{n}S^\varepsilon_{\min}\left[\n^{\otimes n}\right]&\geq S[\n]\\
    \lim_{\varepsilon\to 0}\lim_{n\to \infty}\frac{1}{n}S^\varepsilon_{\min}\left[\n^{\otimes n}\right]&\leq S[\n].
\end{align}
\section{Monotonicity of dynamical min-entropy}\label{app:monotone_minentropy}
We now prove an important monotonicity behavior of the dynamical min-entropy. In particular, we show that the dynamical min-entropy of a quantum channel is nondecreasing under the action of an $\mathcal{R}^{\mathbbm{1}}$-subpreserving superchannel~(cf.~\cite{SPSD25}). The action of a quantum superchannel on a quantum channel can be expressed as concatenation of the channel with preprocessing and postprocessing quantum channels~\cite{Gou19}. Given two uniformly mixing maps $\mathcal{R}^{\mathbbm{1}}_{A\to B}$ and $\mathcal{R}^{\mathbbm{1}}_{C\to D}$, a superchannel $\Theta:\mathrm{Ch}(A,B)\to \mathrm{Ch}(C,D)$ is called $\mathcal{R}^{\mathbbm{1}}$-subpreserving if the linear supermap $\Theta(\mathcal{R}^{\mathbbm{1}}_{A\to B})-\mathcal{R}^{\mathbbm{1}}_{C\to D}$ is completely positive~\cite[Definition 5]{SPSD25}. Under the action of such superchannels, we prove that the dynamical min-entropy is nondecreasing.
\begin{lemma}[cf. Proposition 8 of~\cite{SPSD25}]\label{lem:monotone_minentropy}
Under the action of an $\mathcal{R}^\mathbbm{1}$-subpreserving superchannel $\Theta$,  the dynamical min-entropy of a quantum channel is monotonically nondecreasing:
\begin{equation}
    S_{\min}[\Theta(\n)]\ge     S_{\min}[\n].
\end{equation}
\end{lemma}
\begin{proof}
    Let us first recall the following property of the max-relative entropy between two positive semidefinite operators: For any $\rho\in\St(A)$ and for all $\sigma'\in \mathrm{Pos}(A)$ such that $\sigma'\ge \sigma\in\Pos(A)$, we have 
    \begin{equation}
        D_{\max}(\rho\Vert\sigma')\le         D_{\max}(\rho\Vert\sigma).
    \end{equation}
This can be shown using the definition of the max relative entropy. If $\lambda_0$ is the optimal solution of the semidefinite optimization problem,  $\lambda_0=\inf_{\lambda\ge 0}\{\lambda: \lambda\sigma'\ge \rho\}$, then it is also a feasible solution of the inequality $\lambda_0\sigma\ge \rho$, given $\sigma'\ge\sigma$.  Using this, we see that the max relative entropy of channels, given $\Theta(\mathcal{R}^\mathbbm{1})\le \mathcal{R}^{\mathbbm{1}}$, satisfies 
      \begin{equation}
        D_{\max}[\Theta(\n)\Vert \mathcal{R}]\le         D_{\max}[\Theta(\n)\Vert \Theta(\mathcal{R})].
    \end{equation}
We also know that under any completely positivity preserving (CPP) supermap $\Theta$, the max-relative entropy of channel is monotonically nonincreasing~\cite{Gou19}, therefore
\begin{equation}
    D_{\max}[\Theta(\n)\Vert \mathcal{R}]\le D_{\max}[\Theta(\n)\Vert \Theta(\mathcal{R})] \leq D_{\max}[\n\Vert \mathcal{R}].
\end{equation}
Thus, we conclude that
\begin{align} \label{eq:maxmonotonicity}
      S_{\min}[\Theta(\n)]&\ge S_{\min}[\n].
\end{align}

\end{proof}

\section{Detailed proofs of the results}\label{app:proofs}
\subsection{Proof of Lemma~\ref{lem:smooth-ch}}\label{lem:smooth-entropy}
\begin{nonumberedlemma} 
    Given a quantum channel $\n_{A'\to A}$ and $\lim\alpha\in\{1,\infty\}$, we have 
    \begin{equation}
      {S}_\alpha^\varepsilon[\n] \le  \inf_{\op{\psi}\in\St(RA')}{S}_\alpha^{\varepsilon}(A|R)_{\n(\psi)}.
    \end{equation}
\end{nonumberedlemma}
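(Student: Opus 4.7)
The plan is to reduce the statement to a straightforward sup–inf inequality by lifting the channel-level $\varepsilon$-ball to a state-level $\varepsilon$-ball at every pure input. First I would unfold both sides using the definitions: on the LHS, $S^\varepsilon_{\min}[\n]=\sup_{\m\in\mathcal{B}^\varepsilon[\n]}S_{\min}[\m]$, together with the characterization in Eq.~\eqref{eq:smin-uparrow}, $S_{\min}[\m]=\inf_{\op{\psi}\in\St(RA')}S_{\min}(A|R)_{\m(\psi)}$; on the RHS, $S^\varepsilon_{\min}(A|R)_{\n(\psi)}=\sup_{\widetilde{\rho}\in\mathcal{B}^\varepsilon(\n(\psi))}S_{\min}(A|R)_{\widetilde{\rho}}$. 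In particular, both sides reduce to an optimization over pure input states $\psi_{RA'}$ followed by an entropic quantity.

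The crucial step is the following inclusion: for any $\m\in\mathcal{B}^\varepsilon[\n]$ and any pure $\psi_{RA'}$, by the very definition $P[\n,\m]=\sup_{\psi_{RA'}}P(\n(\psi_{RA'}),\m(\psi_{RA'}))\le\varepsilon$, one has $P(\m(\psi_{RA'}),\n(\psi_{RA'}))\le\varepsilon$, so $\m(\psi_{RA'})\in\mathcal{B}^\varepsilon(\n(\psi_{RA'}))$. Substituting into the supremum that defines the smoothened conditional min-entropy yields, pointwise in $\psi$,
\begin{equation}
S_{\min}(A|R)_{\m(\psi)}\le S^\varepsilon_{\min}(A|R)_{\n(\psi)}.
\end{equation}

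Finally, taking the infimum over pure $\psi_{RA'}$ on both sides gives $S_{\min}[\m]\le\inf_{\op{\psi}}S^\varepsilon_{\min}(A|R)_{\n(\psi)}$, and since the RHS no longer depends on $\m$, taking the supremum over $\m\in\mathcal{B}^\varepsilon[\n]$ on the LHS preserves the inequality and produces the claimed bound. For the $\alpha=1$ (von Neumann) variant listed in the appendix version, the same ball-inclusion argument applies verbatim once continuity of the von Neumann conditional entropy in the $\varepsilon$-ball is invoked to relate the smoothed and unsmoothed quantities in the same direction.

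I do not anticipate a genuine obstacle: the entire content of the lemma is that a sup–inf can be upper bounded by an inf–sup after noting that $\mathcal{B}^\varepsilon[\n]$ is, by construction, contained in the pointwise $\varepsilon$-balls $\mathcal{B}^\varepsilon(\n(\psi))$. The only bookkeeping to be careful about is that it suffices to restrict to pure $\psi_{RA'}$ on both sides (which is already built into the definitions used above), and that the purified distance between channels correctly dominates the state-level purified distance on every pure input.
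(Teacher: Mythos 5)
Your proposal is correct and follows essentially the same route as the paper: the key step in both is the inclusion $\m(\psi_{RA'})\in\mathcal{B}^\varepsilon(\n(\psi_{RA'}))$ for every $\m\in\mathcal{B}^\varepsilon[\n]$, after which the paper invokes the max--min inequality while you equivalently establish the pointwise bound $S_{\min}(A|R)_{\m(\psi)}\le S^{\varepsilon}_{\min}(A|R)_{\n(\psi)}$ and then take the infimum over $\psi$ and the supremum over $\m$. The two arguments are the same in substance, differing only in whether the sup--inf exchange is packaged as an explicit max--min inequality or carried out by hand.
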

\begin{proof}
We note that if ${\m}\in \mathcal{B}^\varepsilon[\n]$ for a quantum channel $\n_{A'\to A}$, then $ {\m}(\varphi_{RA'})\in \mathcal{B}^\varepsilon(\n(\varphi_{RA'}))$. This follows from the definition of the purified distance. For an arbitrary state $\varphi_{RA'}$ and $\alpha\in(1,\infty)$, we have
\begin{align}
{S}_\alpha^{\varepsilon}(A|R)_{\n(\varphi)}&=\sup_{\rho\in\mathcal{B}^\varepsilon(\n(\varphi_{RA'}))} {S}_\alpha(A|R)_{\rho}\\
&\ge \sup_{{\mathcal{M}}\in\mathcal{B}^\varepsilon[\n]} {S}_\alpha(A|R)_{{\mathcal{M}}(\varphi)}.
\end{align}
We take the infimum over pure states $\varphi\in\St(RA')$ on both sides, for $\lim\alpha\in\{1,\infty\}$,
    \begin{align}
\inf_{\op{\varphi}_{RA'}}{S}_\alpha^\varepsilon(A|R)_{\n(\varphi)}&\ge \inf_{\op{\varphi}_{RA'}}\sup_{{\m}\in\mathcal{B}^\varepsilon[\n]} {S}_\alpha(A|R)_{{\m}(\varphi)}\\
& \ge \sup_{{\m}\in\mathcal{B}^\varepsilon[\n]} \inf_{\op{\varphi}_{RA'}} {S}_\alpha(A|R)_{{\m}(\varphi)}\\
&= \sup_{{\m}\in\mathcal{B}^\varepsilon[\n]} {S}_\alpha[{\m}]\\
&= {S}_\alpha^\varepsilon[{\n}],
\end{align}
where we applied max-min inequality~\cite{Fan53,Dan66} to arrive at the second inequality.
\end{proof}

   \subsection{Proof of Lemma~\ref{lem:continuity}}\label{app:continuity_proof}
\begin{nonumberedlemma}
    Given two quantum channels $\n_{A'\to A}$ and $\m_{A'\to A}$ such that $\frac{1}{2}\Vert\n-\m\Vert_\diamond\le \delta$, the respective dynamical min-entropies satisfy  
    \begin{equation}
        \abs{S_{\min}[\n]-S_{\min}[\m]}\le \dfrac{1}{\ln 2}|A|\min\{|A|,|A'|\}\delta.
    \end{equation}
\end{nonumberedlemma}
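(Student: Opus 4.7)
The plan is to reduce the lemma to a scalar continuity problem and then bound the resulting quantity in terms of the diamond distance. Using the representation $S_{\min}[\n] = -D_{\max}(\Phi^{\n}_{RA} \Vert \pi_R \otimes \mathbbm{1}_A)$ with $R \simeq A'$, and substituting $\pi_R \otimes \mathbbm{1}_A = |A'|^{-1}\mathbbm{1}_{RA}$ into the explicit form $D_{\max}(\rho \Vert \sigma) = \log \norm{\sigma^{-1/2}\rho\sigma^{-1/2}}_\infty$, the min-entropy collapses to the clean identity
\begin{equation}
    S_{\min}[\n] = -\log \norm{\Gamma^{\n}_{RA}}_\infty.
\end{equation}
The goal is thus to bound $\bigl|\log \norm{\Gamma^{\n}}_\infty - \log \norm{\Gamma^{\m}}_\infty\bigr|$.

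Next I would apply the elementary inequality $|\log a - \log b| \le |a-b|/(\min(a,b)\,\ln 2)$ for $a,b>0$, combined with the reverse triangle inequality $|\norm{X}_\infty - \norm{Y}_\infty| \le \norm{X-Y}_\infty$. This splits the task into two sub-estimates: (i) a universal lower bound on $\norm{\Gamma^{\n}}_\infty$ for any CPTP $\n_{A'\to A}$, and (ii) an upper bound on $\norm{\Gamma^{\n} - \Gamma^{\m}}_\infty$ in terms of $\delta$. For (i), every CPTP map satisfies $\tr(\Gamma^{\n}) = |A'|$ and $\mathrm{rank}(\Gamma^{\n}) \le |A'||A|$, so $\norm{\Gamma^{\n}}_\infty \ge \tr(\Gamma^{\n})/\mathrm{rank}(\Gamma^{\n}) \ge 1/|A|$, producing the leading $|A|$ factor in the claim.

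For (ii), I would observe that $\Gamma^{\n} - \Gamma^{\m} = |A'|(\Phi^{\n} - \Phi^{\m})$ is Hermitian and traceless (both Choi states are normalised), so $\norm{\,\cdot\,}_\infty \le \tfrac{1}{2}\norm{\,\cdot\,}_1$; combined with $\norm{\Phi^{\n} - \Phi^{\m}}_1 \le \norm{\n - \m}_\diamond \le 2\delta$, this yields the base estimate $\norm{\Gamma^{\n} - \Gamma^{\m}}_\infty \le |A'|\delta$. To sharpen the input-side prefactor $|A'|$ to the symmetric $\min\{|A|,|A'|\}$ required by the lemma, I would exploit the Choi--Jamiolkowski correspondence to construct the analogous estimate from the output side of the channels (using the adjoint/complementary description), obtaining a parallel bound with $|A|$ in place of $|A'|$; taking the minimum of the two recovers the stated prefactor.

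Assembling the pieces produces $|S_{\min}[\n] - S_{\min}[\m]| \le |A|\min\{|A|,|A'|\}\delta / \ln 2$. The main obstacle I anticipate is step (ii): the naive trace-norm route always picks up the input dimension $|A'|$, so producing the symmetric $\min\{|A|,|A'|\}$ factor requires either a rank/support argument that constrains the Choi difference by the smaller ambient dimension, or the input/output symmetry of the Choi construction exploited through the adjoints of the channels.
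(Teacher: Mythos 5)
Your route is sound and genuinely different from the paper's. The paper proves continuity by writing $S_{\min}[\n]=\inf_{\op{\psi}}S_{\min}(A|R)_{\n(\psi)}$, invoking a cited uniform-continuity bound for the conditional min-entropy of states (Lemma~21 of Tomamichel--Colbeck--Renner and follow-up work), and comparing the two infima at each other's optimizing input states; your argument is instead self-contained: the identity $S_{\min}[\n]=-\log\norm{\Gamma^{\n}}_\infty$ (any normalization convention for the Choi state only shifts both entropies by the same constant, so it cancels in the difference), the scalar estimate $\abs{\log a-\log b}\le\abs{a-b}/(\min(a,b)\ln 2)$, the lower bound $\norm{\Gamma^{\n}}_\infty\ge 1/\abs{A}$, and the traceless-Hermitian bound $\norm{\Gamma^{\n}-\Gamma^{\m}}_\infty\le\abs{A'}\delta$ are all correct, and together they already give the stated inequality whenever $\abs{A'}\le\abs{A}$. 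What your approach buys is elementarity and an explicit origin for each dimension factor; what the paper buys is brevity by outsourcing the hard step to a known state-level lemma. The one incomplete piece is the sharpening to $\min\{\abs{A},\abs{A'}\}$ when $\abs{A}<\abs{A'}$, which you only sketch. Of your two candidate fixes, commit to the rank/support argument rather than the adjoint route: $\n^\dag-\m^\dag$ is not a difference of channels and its natural norm is the completely bounded $\infty\to\infty$ norm, so that path is murkier. The projection argument, by contrast, closes the gap in three lines: for a unit vector $\ket{\phi}_{RA}$ let $\Pi_R$ be the projector onto the support of $\phi_R$, of rank $r\le\abs{A}$; since $\Delta:=\n-\m$ acts only on $A'$ and $(\Pi_R\otimes\mathbbm{1}_{A'})\ket{\Gamma}_{RA'}=(\mathbbm{1}_R\otimes\Pi^{T}_{A'})\ket{\Gamma}_{RA'}$ has squared norm $r$, one finds
\begin{equation}
\abs{\bra{\phi}\Gamma^{\Delta}\ket{\phi}}
= r\,\abs{\tr\!\left[(\id_R\otimes\Delta)(\omega_{RA'})\,\phi_{RA}\right]}
\le \frac{r}{2}\,\norm{\n-\m}_\diamond\le \abs{A}\,\delta,
\end{equation}
where $\omega_{RA'}$ is the normalized maximally entangled state on the $r$-dimensional supported subspace and the middle inequality is your own traceless-Hermitian estimate applied to $(\id_R\otimes\Delta)(\omega_{RA'})$. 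This yields $\norm{\Gamma^{\n}-\Gamma^{\m}}_\infty\le\min\{\abs{A},\abs{A'}\}\delta$, and assembling your pieces then reproduces the lemma with exactly the claimed constant $\frac{1}{\ln 2}\abs{A}\min\{\abs{A},\abs{A'}\}\delta$.
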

\begin{proof}
    Note that $\dfrac{1}{2}\Vert\n-\m\Vert_\diamond\le \delta$ implies $\dfrac{1}{2}\Vert\n(\rho_{RA'})-\m(\rho_{RA'})\Vert_1 \le \delta$ for all states $\rho_{RA'}$, and it suffices to consider $R\simeq A'$ (in general $|R|\geq |A'|$). From the uniform continuity of conditional min-entropy as shown in~\cite[Lemma 21]{TCR10} and~\cite{MD22}, we have the following inequality for all $\rho\in\St(RA')$ and $|R|\geq |A'|$,
    \begin{equation}
        \abs{{S}_\infty(A|R)_{\n(\rho)}-{S}_\infty(A|R)_{\m(\rho)}} \le \dfrac{1}{\ln 2}|A|\min\{|A|,|A'|\}\delta.
    \end{equation}
Therefore, the difference between the channel min-entropies can be bounded as  
    \begin{align}
            {S}_\infty[\n]-{S}_{\infty}[\m] 
            &=\inf_{\op{\psi}\in\St(RA')} {S}_\infty(A|R)_{\n(\psi)}-\inf_{\op{\varphi}\in\St(RA')} {S}_\infty(A|R)_{\m(\varphi)}\\
            &\le {S}_\infty(A|R)_{\n(\varphi_0)}-{S}_\infty(A|R)_{\m(\varphi_0)}\\
            &\le \dfrac{1}{\ln 2}|A|\min\{|A|,|A'|\}\delta,
        \end{align}
where $\varphi_0\in\St(RA')$ is the optimal pure state that minimizes ${S}_\infty^\uparrow(A|R)_{\m(\varphi)}$.

On the other hand, we have 
    \begin{align}
            {S}_\infty[\n]-{S}_{\infty}[\m]
           &=\inf_{\op{\psi}\in\St(RA')} {S}_\infty(A|R)_{\n(\psi)}-\inf_{\op{\varphi}\in\St(RA')} {S}_\infty(A|R)_{\m(\varphi)}\\
            &\ge {S}_\infty(A|R)_{\n(\psi_0)}-{S}_\infty(A|R)_{\m(\psi_0)}\\
            &\ge -\dfrac{1}{\ln 2}|A|\min\{|A|,|A'|\}\delta,
        \end{align}
where $\psi_0\in\St(RA')$ is the optimal pure state that minimizes ${S}_\infty^\uparrow(A|R)_{\n(\psi)}$. This proves the lemma. 
\end{proof}    

\subsection{Proof of Theorem~\ref{thm:de-ch}}\label{app:proof_de_ch}
\begin{nonnumberedth}[Decoupling theorem for processes]
    Let $\mathcal{N}_{A'\to A}$ be a quantum channel, $\mathcal{T}_{A\to B}$ a completely positive map such that $\tr(\Gamma^\mathcal{T}_{AB})\leq |A|$, and $\varepsilon\in(0,1)$. The distance of the channel $\mathcal{N}$ post-processed by $\mathcal{T}\circ\U_A$, when $U_A$ is chosen uniformly at random from the Haar measure over the full unitary group $\mathbb{U}$ on $A$, with the uniformly randomizing channel $\mathcal{R}^{\pi}$ post-processed by $\mathcal{T}$ is upper bounded as
    \begin{align}
  \forall~\op{\psi}\in\St(RA'),\quad      \int_{\mathbb{U}(A)}\norm{\mathcal{T}_{A\to B}\circ\mathcal{U}_A\circ\n_{A'\to A}(\psi_{RA'})-\mathcal{T}_{A\to B}\circ\mathcal{R}^{\pi}_{A'\to A}(\psi_{RA'})}_1 \d\! U
        \le 2^{-\frac{1}{2}\left(S^\varepsilon_{\min}[\n]+S_{\min}^{\varepsilon}(A|B)_{\Phi^\mathcal{T}}\right)}+12\varepsilon,
        \end{align}
where $\Phi^\mathcal{T}_{AB}:=\frac{1}{|A|}\Gamma^{\mathcal{T}}_{AB}$ is a scaled Choi operator of $\mathcal{T}_{A\to B}$, $\mathcal{R}^\pi_{A'\to A}(\rho_{A'})=\pi_A \forall \rho\in\St(A')$, $\mathcal{U}_A(\cdot):=U_A(\cdot)U^\dag_A$.
\end{nonnumberedth}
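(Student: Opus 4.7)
The strategy is to lift the state-level decoupling bound~\eqref{eq:decoup} to the channel level by probing the diamond norm with arbitrary pure inputs $\psi_{RA'}$ (with $R\simeq A'$) to $\mathcal{N}$, and then uniformizing the resulting estimate over $\psi$ via Lemma~\ref{lem:smooth-ch}.

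First, I unpack the two branches of the diamond norm on a pure input $\psi_{RA'}$. Since $\mathcal{R}^{\pi}_{A'\to A}(\psi_{RA'})=\psi_R\otimes\pi_A$, one has
\[
\mathcal{T}_{A\to B}\circ\mathcal{R}^{\pi}_{A'\to A}(\psi_{RA'})=\psi_R\otimes\mathcal{T}(\pi_A)=\psi_R\otimes\Phi^{\mathcal{T}}_B
\]
by the definition $\Phi^{\mathcal{T}}_B=\frac{1}{|A|}\tr_A\Gamma^{\mathcal{T}}_{AB}$. Writing $\varphi_{RA}:=\mathcal{N}(\psi_{RA'})$, trace-preservation of $\mathcal{N}$ forces $\varphi_R=\psi_R$, so the $1$-norm of $\Lambda_U:=\mathcal{T}\circ\mathcal{U}_A\circ\mathcal{N}-\mathcal{T}\circ\mathcal{R}^{\pi}$ evaluated at $\psi$ is precisely the quantity controlled by~\eqref{eq:decoup} with $\varphi=\mathcal{N}(\psi)$.

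The hypothesis $\tr(\Gamma^{\mathcal{T}})\le|A|$ holds by assumption, so~\eqref{eq:decoup} applies, giving for every pure $\psi_{RA'}$
\[
\int_{\mathbb{U}(A)}\|\Lambda_U(\psi_{RA'})\|_1\,dU \le 2^{-\frac{1}{2}\left(S^{\varepsilon}_{\min}(A|R)_{\mathcal{N}(\psi)}+S^{\varepsilon}_{\min}(A|B)_{\Phi^{\mathcal{T}}}\right)}+12\varepsilon.
\]
Lemma~\ref{lem:smooth-ch} then replaces $S^{\varepsilon}_{\min}(A|R)_{\mathcal{N}(\psi)}$ by the dynamical quantity $S^{\varepsilon}_{\min}[\mathcal{N}]$ uniformly in $\psi$, making the right-hand side free of $\psi$.

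The main obstacle is to convert this input-wise estimate into a bound on $\int\|\Lambda_U\|_\diamond\,dU=\int\sup_\psi\|\Lambda_U(\psi)\|_1\,dU$, since the naive inequality goes the wrong way ($\int\sup_\psi\ge\sup_\psi\int$). To close this gap I would first observe that $\Lambda_U$ acts trivially on $R$, so $\|\Lambda_U(\psi)\|_1$ depends on $\psi$ only through its $A'$-marginal, reducing the $\psi$-supremum to a supremum over the compact set $\St(A')$. Combined with the $\psi$-independence of the derived bound, one can attempt to close the argument either by (a) selecting a measurable maximizer $\psi^{\ast}(U)$ of the diamond norm for each $U$ and invoking a Fubini-type argument to dominate the Haar expectation of $\|\Lambda_U(\psi^{\ast}(U))\|_1$ by the uniform state-decoupling bound, or by (b) invoking the dual SDP characterization of the diamond norm applied to the Choi operator of $\Lambda_U$ and pushing the Haar average through the dual sup, reducing the proof once again to the state-decoupling inequality on a suitable ancilla-extended input. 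This final interchange is the technical heart of the argument.
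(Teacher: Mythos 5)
Your reduction is the same as the paper's: you substitute $\varphi_{RA}=\n(\psi_{RA'})$ into the state-level decoupling bound \eqref{eq:decoup}, use $\mathcal{T}_{A\to B}\circ\mathcal{R}^{\pi}_{A'\to A}(\psi_{RA'})=\psi_R\otimes\Phi^{\mathcal{T}}_B$ together with $\mathcal{R}^{\pi}_{A\to A}\circ\n_{A'\to A}=\mathcal{R}^{\pi}_{A'\to A}$, and then make the exponent $\psi$-independent via Lemma~\ref{lem:smooth-ch}. The problem is that you stop there: the passage from the pointwise estimate $\int_{\mathbb{U}(A)}\norm{\Lambda_U(\psi_{RA'})}_1\,\d U\le 2^{-\frac{1}{2}(S^{\varepsilon}_{\min}[\n]+S^{\varepsilon}_{\min}(A|B)_{\Phi^{\mathcal{T}}})}+12\varepsilon$, valid for each fixed $\psi$, to the claimed bound on $\int_{\mathbb{U}(A)}\norm{\Lambda_U}_\diamond\,\d U$ is left as two unexecuted options, so the theorem is not actually proved. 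Worse, your option (a) would fail as described: \eqref{eq:decoup} controls the Haar average only for an input chosen independently of $U$, so it says nothing about $\int_{\mathbb{U}(A)}\norm{\Lambda_U(\psi^{\ast}(U))}_1\,\d U$ when $\psi^{\ast}(U)$ is a $U$-dependent maximizer, and no Fubini-type manipulation reverses $\int\sup_\psi\ge\sup_\psi\int$; closing the gap along that route would require a genuinely new ingredient (e.g.\ concentration of measure over $\mathbb{U}(A)$ combined with a net over inputs, or a post-selection-type argument), which you do not supply. Option (b) is only named, and "pushing the Haar average through the dual sup" runs into exactly the same interchange.

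For comparison, the paper's own proof closes this step directly and briefly: after the right-hand side is made $\psi$-independent via Lemma~\ref{lem:smooth-ch}, it takes the supremum over pure states $\psi_{RA'}$ (with $R\simeq A'$) and identifies $\sup_{\op{\psi}_{RA'}}\norm{\Lambda_U(\psi_{RA'})}_1$ with $\norm{\Lambda_U}_\diamond$ by the definition of the diamond norm of a Hermiticity-preserving map, with no measurable-selection or SDP machinery. You are right that, read literally, this amounts to commuting $\sup_\psi$ with the Haar integral, which is precisely the point the paper passes over quickly; but identifying the difficulty is not the same as resolving it, so as written your proposal has a concrete gap exactly at what you call the technical heart of the argument.
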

\begin{proof}
    We begin our proof by noticing that for any state $\varphi_{AR}$ and any completely positive map $\mathcal{T}_{A\to B}$, the product state $\Phi^\mathcal{T}_B\otimes\varphi_R$ can be written as
    \begin{equation}
\Phi^\mathcal{T}_B\otimes\varphi_R=\mathcal{T}_{A\to B}\circ{\mathcal{R}^\pi}_{A\to A}(\varphi_{AR}).
    \end{equation}
Furthermore, for any trace-preserving map $\n_{A'\to A}$, we have ${\mathcal{R}^\pi}_{A\to A}\circ \n_{A'\to A}={\mathcal{R}^\pi}_{A'\to A}$. We employ the (one-shot) decoupling theorem for states. Substituting $\varphi_{AR}=\n_{A'\to A}(\psi_{RA'})$ in Eq.~\eqref{eq:decoup}, where $\psi_{RA'}$ is a pure state, we get
\begin{align}
    \int_{\mathbb{U}(A)}\norm{\mathcal{T}_{A\to B}\circ\mathcal{U}_A\circ\mathcal{N}(\psi_{RA'})-\mathcal{T}_{A\to B}\circ\mathcal{R}^\pi_{A'\to A}(\psi_{RA'})}_1 dU\le
        2^{-\frac{1}{2} (S^{\varepsilon}_{\min}(A|R)_{\n(\psi)} + S^{\varepsilon}_{\min}(A|B)_{\Phi^\mathcal{T}})} + 12 \varepsilon,
\end{align}
where the integration is with respect to the Haar measure over the full unitary group $\mathbb{U}$ on $A$. The r.h.s. in the above equation is bounded as follows, for all pure states $\op{\psi_{RA'}}$ we have
\begin{align}
2^{-\frac{1}{2} (S^{\varepsilon}_{\min}(A|R)_{\n(\psi_{RA'})} + S^{\varepsilon}_{\min}(A|B)_{\Phi^\mathcal{T}})} + 12 \varepsilon &\le     \sup_{\op{\varphi}_{RA'}}2^{-\frac{1}{2} (S^{\varepsilon}_{\min}(A|R)_{\n(\varphi_{RA'})} + S^{\varepsilon}_{\min}(A|B)_{\Phi^\mathcal{T}})} + 12 \varepsilon\\
        &=
        2^{-\frac{1}{2} (\inf_{\op{\varphi}_{RA'}}S^{\varepsilon}_{\min}(A|R)_{\n(\varphi)} + S^{\varepsilon}_{\min}(A|B)_{\Phi^\mathcal{T}})} + 12 \varepsilon\\
        &\le
       2^{-\frac{1}{2}(S^\varepsilon_{\min}[\n]+S_{\min}^{\varepsilon}(A|B)_{\Phi^\mathcal{T}})}+12\varepsilon.\label{eq:apprhs}
\end{align}
Here, we used the inequality $\inf_{\op{\psi}\in\St(RA')}S^{\varepsilon}_{\min}(A|R)_{\n(\psi)} \ge S^\varepsilon_{\min}[\n]$ from Eq.~\eqref{lem:smooth-entropy}. Therefore we have the result that for all pure states $\op{\psi}_{RA'}$, we have 
\begin{align}
\int_{\mathbb{U}(A)}\norm{\mathcal{T}_{A\to B}\circ\mathcal{U}_A\circ\n_{A'\to A}(\psi_{RA'})-\mathcal{T}_{A\to B}\circ\mathcal{R}^{\pi}_{A'\to A}(\psi_{RA'})}_1 \d\! U
        \le 2^{-\frac{1}{2}\left(S^\varepsilon_{\min}[\n]+S_{\min}^{\varepsilon}(A|B)_{\Phi^\mathcal{T}}\right)}+12\varepsilon.
\end{align}
\end{proof}

\subsection{Proof of Proposition~\ref{theo:channel_entropy}}\label{app:proof_channel_entropy}
\begin{nonnumberedprop}
    The min-entropy $S_{\min}[\n]$ of a quantum channel $\n_{A'\to A}$ is
\begin{align}
     S_{\min}[\n]
    &=-\sup_{\op{\psi}_{RA'}}\sup_{\m\in\Ch(R,\bar{A})} \log(|A|F(\m\otimes\n(\psi_{RA'}),\Phi_{A\bar{A}}))\label{eq:min-sing-a}\\
& =-\sup_{\rho\in\St(A')}\sup_{\sigma\in\St(E)}\log (|A|F(\mathcal{V}^\n_{A'\to AE}(\rho_{A'}),\pi_{A}\otimes\sigma_E)),\label{eq:min-dec-a}
\end{align}
where $\mathcal{V}^\n_{A'\to AE}$ is an isometric extension channel  of $\n_{A'\to A}$.
\end{nonnumberedprop}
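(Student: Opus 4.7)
The plan is to read both identities as two operational reformulations of the same inner quantity $S_{\min}^\uparrow(A|R)_{\n(\psi)}$, and then take a supremum over inputs. My starting point is Eq.~\eqref{eq:smin-uparrow}, already established in the paper,
\begin{equation*}
S_{\min}[\n]=\inf_{\op{\psi}\in\St(RA')}S_{\min}^\uparrow(A|R)_{\n(\psi_{RA'})},
\end{equation*}
so that rewriting the infimum as $-\sup(-\,\cdot\,)$ leaves me to evaluate $-S_{\min}^\uparrow(A|R)_{\n(\psi)}$ in two complementary forms.

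For Eq.~\eqref{eq:min-sing-a}, I would invoke the K\"onig--Renner--Schaffner operational identity for the upper conditional min-entropy: for any bipartite state $\rho_{AR}$,
\begin{equation*}
2^{-S_{\min}^\uparrow(A|R)_{\rho}}=|A|\sup_{\m\in\Ch(R,\bar{A})}F\bigl(\id_A\otimes\m_{R\to\bar{A}}(\rho_{AR}),\Phi_{A\bar{A}}\bigr).
\end{equation*}
Applied with $\rho_{AR}=\n(\psi_{RA'})$, taking a logarithm gives $-S_{\min}^\uparrow(A|R)_{\n(\psi)}=\sup_{\m}\log\bigl(|A|F(\m\otimes\n(\psi_{RA'}),\Phi_{A\bar A})\bigr)$. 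Pushing the outer supremum over pure $\psi_{RA'}$ through the logarithm then yields Eq.~\eqref{eq:min-sing-a} at once.

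For Eq.~\eqref{eq:min-dec-a}, I would additionally invoke the decoding--decoupling duality: for any pure tripartite state $\Psi_{ABC}$,
\begin{equation*}
\sup_{\m\in\Ch(B,\bar{A})}F\bigl(\m(\Psi_{AB}),\Phi_{A\bar{A}}\bigr)=\sup_{\sigma\in\St(C)}F\bigl(\Psi_{AC},\pi_A\otimes\sigma_C\bigr).
\end{equation*}
This is the fidelity-level form of the standard min--max duality $S_{\min}^\uparrow(A|B)_\Psi=-S_{\max}^\uparrow(A|C)_\Psi$ on pure states, combined with the scaling $F(\cdot,\mathbbm{1}_A\otimes\sigma)=|A|F(\cdot,\pi_A\otimes\sigma)$. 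I specialize it to the pure state $\ket{\Psi}_{RAE}:=(\mathbbm{1}_R\otimes V^\n_{A'\to AE})\ket{\psi}_{RA'}$, which jointly purifies $\n(\psi_{RA'})$ on $RA$ and its $AE$-marginal $V\psi_{A'}V^\dag=\mathcal{V}^\n_{A'\to AE}(\rho_{A'})$, with $\rho_{A'}:=\psi_{A'}$. The duality then converts the inner $\sup_{\m}$ of Eq.~\eqref{eq:min-sing-a} into $\sup_{\sigma\in\St(E)}F(\mathcal{V}^\n(\rho_{A'}),\pi_A\otimes\sigma_E)$. Because this quantity depends on $\psi_{RA'}$ only through its marginal $\rho_{A'}$, and every $\rho\in\St(A')$ admits a purification on $RA'$ with $R\simeq A'$, the outer supremum over pure $\psi_{RA'}$ collapses to a supremum over $\rho\in\St(A')$, delivering Eq.~\eqref{eq:min-dec-a}.

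The main technical obstacle is the fidelity-level decoding--decoupling duality; once this is in hand, the remainder is algebraic, namely the identification of the $AE$-marginal of $\ket{\Psi}_{RAE}$ with $\mathcal{V}^\n(\rho_{A'})$ and the observation that the outer supremum only sees the $A'$-marginal of the input. Expression~\eqref{eq:min-sing-a} captures the channel's optimal singlet-preservation up to a local recovery on $R$, while expression~\eqref{eq:min-dec-a} records the dual picture of how closely the Stinespring isometry decouples its output $A$ from its environment $E$.
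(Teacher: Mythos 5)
Your proposal is correct and follows essentially the same route as the paper: the first identity via the K\"onig--Renner--Schaffner singlet-fidelity characterization of $S^{\uparrow}_{\min}(A|R)$ applied to $\n(\psi_{RA'})$ together with Eq.~\eqref{eq:smin-uparrow}, and the second via the min/max (i.e., $\alpha=\infty\leftrightarrow\alpha=\tfrac12$) duality on the pure state $\mathcal{V}^{\n}(\psi_{RA'})$, the rescaling $F(\cdot,\mathbbm{1}_A\otimes\sigma)=|A|F(\cdot,\pi_A\otimes\sigma)$, and the observation that the resulting quantity depends only on the marginal $\rho_{A'}$. Phrasing the duality at the fidelity level and chaining it to the first identity, rather than applying the entropic duality directly to $\inf_{\psi}S_{\min}(A|R)_{\n(\psi)}$ as the paper does, is only a cosmetic difference.
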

\begin{proof}
    The first equality follows from~\cite{KRS09}, where it is shown that the min-conditional entropy can be written as a singlet fidelity,
\begin{equation}
    {S}_{\min}(A|R)_{\rho}=-\sup_{\m \in \mathrm{Ch(R,\bar{A})}} \log|A|F(\m_{R\to \bar{A}}(\rho_{AR}),\Phi_{A\bar{A}}).
\end{equation}
Then, the channel min-entropy is
\begin{align}
     {S}_{\min}[\n]
    &= \inf_{\op{\psi}\in\St(RA')}{S}_{\min}(A|R)_{\n(\psi)}\\
    &=-\sup_{\op{\psi}_{RA'}}\sup_{\m \in \mathrm{Ch(R,\bar{A})}} \log|A|F(\m_{R\to \bar{A}}\otimes\n_{A'\to A}(\psi_{RA'}),\Phi_{A\bar{A}}).
\end{align}
The second expression for the dynamical min-entropy is obtained utilizing the duality relation of the conditional min-entropy of states,
        \begin{align}
            {S}_{\min}[\n]&=\inf_{\op{\psi}\in\St(RA')} {S}_{\min}(A|R)_{\n(\psi)}\\
            &= \inf_{\op{\psi}\in\St(RA')}\left[ -{S}_{\frac{1}{2}}^\uparrow(A|E)_{\mathcal{V}^\n(\psi)}\right]\\
            &= \inf_{\op{\psi}_{RA'}}\left[- \sup_{\sigma\in\St(E)}\log F(\tr_R\circ\mathcal{V}^\n(\psi_{RA'}),\mathbbm{1}_A\otimes\sigma_E)\right]\\
            &=-\sup_{\op{\psi}_{RA'}}\sup_{\sigma\in\St(E)}\log F(\tr_R\circ\mathcal{V}^\n(\psi_{RA'}),\mathbbm{1}_A\otimes\sigma_E)\\
            &=-\sup_{\rho\in\St(A')}\sup_{\sigma\in\St(E)}\log |A|\log F(\mathcal{V}^\n(\rho_{A'}),\pi_{A}\otimes\sigma_E).
        \end{align}
\end{proof}
\subsection{Proof of Theorem~\ref{thm:workcost}}\label{app:proof_workcost}
\begin{nonnumberedth}
The adversarial erasure cost $W_{\mathrm{eras}}[A|E]_{\mathcal{N}}$ of a quantum channel $\n_{A'\to A}$ is bounded as
\begin{equation}
W_{\mathrm{eras}}[A|E]_{\mathcal{N}} \leq \left(-S^\varepsilon_{\min}[\n]+\Delta\right)k_BT\ln 2, 
\end{equation}
with the probability greater than $1-\delta$, where $\delta:=\sqrt{2^{-\Delta/2}+12\varepsilon}$, for all $\delta,\varepsilon> 0$.
\end{nonnumberedth}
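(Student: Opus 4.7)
My plan is to combine the direct single-use work-cost upper bound $W \leq (\log|A|-2\log|A_1|)\,k_BT\ln 2$ from~\eqref{eq:rareq} with a carefully chosen bipartition $A=A_1\otimes A_2$, together with the channel-vs-state min-entropy inequality (Lemma~\ref{lem:smooth-ch}), the channel decoupling theorem (Theorem~\ref{thm:de-ch}), and Markov's inequality, so as to guarantee that the protocol's decoupling step succeeds with the required probability.

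First, I would choose the design bipartition by setting $\log|A_1| = \tfrac{1}{2}(\log|A|+S^\varepsilon_{\min}[\n]) - \Delta/2$. With this design, applying~\eqref{eq:rareq} to $\mathcal{V}^\n(\psi)$ for any pure $\psi_{RA'}$ would give $W_{\mathrm{eras}}(A|E)_{\mathcal{V}^\n(\psi)} \leq (-S^\varepsilon_{\min}[\n]+\Delta)\,k_BT\ln 2$, provided the approximate-purification extraction step succeeds. Taking the supremum over $\psi$ via~\eqref{eq:work-ch} would transfer this bound to the adversarial cost $W_{\mathrm{eras}}[A|E]_\n$. Along the way I would invoke Lemma~\ref{lem:smooth-ch} together with duality of conditional min-entropies~\cite{TCR10} to replace state-level min-entropies with the channel min-entropy uniformly in $\psi$, matching the dependence on $S^\varepsilon_{\min}[\n]$ required both by the design and by~\eqref{eq:rar-bound}.

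Next, I would apply Theorem~\ref{thm:de-ch} with $\mathcal{T} = \tr_{A_2}$ at this bipartition, which is engineered so that the Haar-averaged diamond-norm decoupling distance evaluates exactly to $2^{-\Delta/2}+12\varepsilon = \delta^2$, with $\delta := \sqrt{2^{-\Delta/2}+12\varepsilon}$. Markov's inequality in its square-root form, applied to the non-negative random variable that is the realized diamond-norm distance at a given $U_A$, would then yield $\Pr_U[\,2\delta'(U) \leq \delta\,] \geq 1-\delta$, where $\delta'(U)$ is the worst-case-over-pure-inputs decoupling error realized by the specific Haar-sampled unitary. On that high-probability event the approximate purification in $A_1\otimes A_2\otimes E$ is $\sqrt{\delta}$-close to a maximally entangled state, which would suffice for the extraction step to deliver the design work value; hence $W_{\mathrm{eras}}[A|E]_\n \leq (-S^\varepsilon_{\min}[\n]+\Delta)\,k_BT\ln 2$ would hold with probability at least $1-\delta$.

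The hard part will be the fixed-point calibration that makes the three inequalities compatible: the design bipartition (via~\eqref{eq:decoup_size}) relates $|A_1|$ to $\Delta$ and $S^\varepsilon_{\min}[\n]$, Theorem~\ref{thm:de-ch}'s Haar-average then has to scale as $\delta^2$, and the Markov-threshold failure probability $\delta$ must emerge as the square root of that same average. Verifying that this triple coincidence is compatible with~\eqref{eq:rar-bound}'s dependence on $\delta'$ and with the per-$\psi$ uniformity afforded by Lemma~\ref{lem:smooth-ch} is the principal technical step beyond the three previously established ingredients.
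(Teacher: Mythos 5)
Your proposal is correct in substance, but it takes a genuinely different route from the paper's own proof. The paper proves Theorem~\ref{thm:workcost} in two short steps: it starts from the per-input bound~\eqref{eq:rar-bound} inherited from~\cite{RAR+11} (where the probabilistic statement already lives), substitutes $\delta=2\delta'$ and $\Delta=-2\log(\delta-12\varepsilon)$, takes the supremum over pure inputs $\psi_{RA'}$, and invokes Lemma~\ref{lem:smooth-ch} to pass from $\inf_{\psi}S^{\varepsilon}_{\min}(A|R)_{\n(\psi)}$ to $S^{\varepsilon}_{\min}[\n]$; Theorem~\ref{thm:de-ch} is never used explicitly there. You instead fix the bipartition $A=A_1\otimes A_2$ once, calibrated by the channel min-entropy, apply Theorem~\ref{thm:de-ch} with $\mathcal{T}=\tr_{A_2}$, and use Markov's inequality; your calibration checks out, since $S_{\min}(A|A_1)_{\Phi^{\mathcal{T}}}=\log|A_2|-\log|A_1|$ makes the Haar-averaged diamond-norm bound at most (not exactly equal to) $2^{-\Delta/2}+12\varepsilon=\delta^2$, and square-root Markov then gives failure probability at most $\delta$. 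What your route buys is that the single sampled unitary decouples uniformly over all inputs, which is exactly what the adversarial (worst-case over unknown logical inputs) cost requires; the paper's per-$\psi$ argument leaves this uniformity to the closing remark that the channel-level bound holds whenever the state-level bound does. Incidentally, your calibration reproduces the stated $\delta=\sqrt{2^{-\Delta/2}+12\varepsilon}$, whereas the paper's substitution corresponds to $\delta=2^{-\Delta/2}+12\varepsilon$. The loose ends in your sketch (the integer constraint on $\log|A_1|$, and propagating the $\sqrt{\delta}$-closeness of the approximate purification through the extraction step) are the same ones the paper absorbs into~\cite{RAR+11}, so they do not constitute a gap relative to the paper's own level of rigor.
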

\begin{proof}
Employing the results of~\cite{RAR+11}, we obtained Eq.~\eqref{eq:rar-bound}. Let us substitute $\delta=\sqrt{2\delta'}$ and $\Delta=-2\log(\delta^2-12\varepsilon)$ in Eq.~\eqref{eq:rar-bound}, we get with the probability greater than $1-\delta$ that
\begin{equation}
  W(A|E)_{\mathcal{V}^\n(\psi)}\le (-S^{\varepsilon}_{\min}(A|R)_{\n(\psi)}+\Delta)k_BT\ln 2,\label{eq:appiso}
\end{equation}
where $\mathcal{V}^\n$ is the isometric dilation of the channel $\n_{A'\to A}$ and $\psi_{RA'}$ is a pure state. Taking the supremum over all pure states $\psi\in\St(RA')$ on both sides, we get
    \begin{align} 
    \sup_{\op{\psi}_{RA'}}  W(A|E)_{\mathcal{V}^\n(\psi)}
      &\leq \sup_{\op{\psi}_{RA'}}\left(-S^{\varepsilon}_{\min}(A|R)_{\mathcal{N}(\psi)}+\Delta\right)\beta^{-1}\ln 2 \nonumber \\
      &= \left(-\inf_{\op{\psi}_{RA'}}S^{\varepsilon}_{\min}(A|R)_{\mathcal{N}(\psi)}+\Delta\right)\beta^{-1}\ln 2 \nonumber \\
      &\leq\left(-S^\varepsilon_{\min}[\n]+\Delta\right)\beta^{-1}\ln 2, 
    \end{align}
where $\beta^{-1}=k_BT$, and the final inequality follows from Lemma~\ref{lem:smooth-entropy}.
That is, 
\begin{equation}
    W_{\mathrm{eras}}[A|E]_{\mathcal{N}} \le (-S^{\varepsilon}_{\min}[\n]+\Delta)k_BT\ln 2, \label{eq:appisow}
\end{equation}
holds with probability $1-\delta$, as the bound~\eqref{eq:appisow} holds whenever the bound~\eqref{eq:appiso} holds.
\end{proof}

\textit{Interpretation of failure probability:} A crucial step in work extraction from the conditional erasure task for states is the ``compression of correlations" between the memory $E$ and the system $A$ into a $l$-qubit state that is $\delta$-close to a pure state in $AE$ using local unitary operations. The size of this pure state depends on the conditional max-entropy between the system $A$ and the memory $E$, as well as on $\delta$. Explicitly, given the $n$-qubit system $A$, the $l$-qubit state is $\delta$-close to a pure state where
\begin{align}\label{eq:2}
    l\ge n-S_{\max}^{\epsilon}(A|E)_\rho+2\log(\delta^2-12\epsilon).
\end{align}
This pure state can be used to extract $lk_bT\ln2$ amount of work with a failure probability of $\delta$.

Given a channel $\n_{A'\to A}$ with the isometric extension $\mathcal{V}^\n_{A'\to AE}$ and $R$ being the reference system, the adversarial erasure cost of $A$ conditioned on $E$ is given by Eq.~\eqref{eq:appiso}. The failure probability of this erasure task is less than $\delta$, where 
\begin{align}
    \delta^2&\ge 2^{(l-n+S_{\min}^\epsilon(A|E)_{\n(\psi)})/2}+12\epsilon.
\end{align}
Optimizing over all states $\psi_{RA'}$, we can say that there exists an adversarial erasure process of the channel whose cost is bounded by Eq.~\eqref{eq:appisow} except with a probability less than $\delta$ such that
\begin{align}
    \delta^2\ge 2^{(l-n+S^\epsilon_{\min}[\n])/2}+12\epsilon.
\end{align}

\subsection{Proof of Lemma~\ref{prop:eras_prep} }\label{app:proof_eras_prep}
\begin{nonumberedlemma}
    Given a state $\rho_{AB}$, the sum of the work cost of erasing and preparing the system $A$ conditioned on the system $B$, for error $\mu\in[0,1]$, is bounded from below as
\begin{equation}
        \widetilde{W}^{\mu}_{\mathrm{prep}}(A|B)_\rho+ \widetilde{W}^{\mu}_{\mathrm{eras}}(A|B)_\rho\ge \left[\log\left(1-\frac{\mu}{1-\mu^2}\right)-2\right]k_BT\ln 2,
    \end{equation}
and for $\mu=0$, i.e., zero-error erasure and preparation costs, we have
\begin{equation}
      \widetilde{W}^{0}_{\mathrm{prep}}(A|B)_\rho+ \widetilde{W}^{0}_{\mathrm{eras}}(A|B)_\rho\ge 0.
\end{equation}
\end{nonumberedlemma}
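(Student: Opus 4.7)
The plan is to translate the claim into an entropic inequality via the identities $\widetilde{W}^\mu_{\mathrm{prep}}(A|B)_\rho = -S_{\min}^{\downarrow,\mu}(A|B)_\rho\, k_BT\ln 2$ and $\widetilde{W}^\mu_{\mathrm{eras}}(A|B)_\rho = S_H^\mu(A|B)_\rho\, k_BT\ln 2$ recalled in Eq.~\eqref{eq:prep_erase}. This reduces the lemma to showing $S_H^\mu(A|B)_\rho - S_{\min}^{\downarrow,\mu}(A|B)_\rho \geq \log(1 - \mu/(1-\mu^2)) - 2$ for $\mu \in (0,1)$, and $\geq 0$ at $\mu = 0$.

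For the zero-error case I would substitute $\sigma = \rho_B$ as a candidate in the infimum defining $S_H^0(A|B)_\rho$, giving $S_H^0(A|B)_\rho \geq -D_H^0(\rho\Vert\mathbbm{1}_A\otimes\rho_B)$. The elementary bound $D_H^0(\rho\Vert\sigma)\leq D_\infty(\rho\Vert\sigma)$---obtained by using the support projector $\Pi_\rho$ as a feasible test operator for $D_H^0$ and noting that $\rho \leq 2^{D_\infty(\rho\Vert\sigma)}\sigma$ forces $\tr(\Pi_\rho\sigma)\geq 2^{-D_\infty(\rho\Vert\sigma)}$---then yields $S_H^0(A|B)_\rho \geq -D_\infty(\rho\Vert\mathbbm{1}_A\otimes\rho_B) = S_{\min}^{\downarrow,0}(A|B)_\rho$, which is the $\mu=0$ statement.

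For general $\mu \in (0,1)$ I would let $\tilde\rho^* \in \mathcal{B}^\mu(\rho)$ be an optimizer of $-S_{\min}^{\downarrow,\mu}(A|B)_\rho = \inf_{\tilde\rho \in \mathcal{B}^\mu(\rho)} D_\infty(\tilde\rho\Vert\mathbbm{1}_A\otimes\tilde\rho_B)$, and choose $\sigma = \tilde\rho^*_B$ in the infimum defining $S_H^\mu$, producing $S_H^\mu(A|B)_\rho \geq -D_H^\mu(\rho\Vert\mathbbm{1}_A\otimes\tilde\rho^*_B)$. I would then invoke the smooth relative entropy chaining $D_H^{1-\varepsilon^2-\mu'}(\rho\Vert\sigma) \leq D_\infty^\varepsilon(\rho\Vert\sigma) - \log\frac{(\mu')^2}{4(1-\varepsilon^2)}$ from the preliminaries with $\varepsilon = \mu$ and $\mu' = 1 - \mu - \mu^2$, so that the LHS smoothing equals $\mu$, and finally use $D_\infty^\mu(\rho\Vert\mathbbm{1}_A\otimes\tilde\rho^*_B) \leq D_\infty(\tilde\rho^*\Vert\mathbbm{1}_A\otimes\tilde\rho^*_B) = -S_{\min}^{\downarrow,\mu}(A|B)_\rho$ since $\tilde\rho^* \in \mathcal{B}^\mu(\rho)$.

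The main obstacle is extracting the precise constant $\log\frac{1-\mu-\mu^2}{4(1-\mu^2)}$: the naive substitution above produces a bound with $(\mu')^2$ rather than $\mu'$ in the numerator of the logarithm. Closing this discrepancy likely requires either a more careful optimization over the free smoothing parameter $\mu'$, or a direct estimate of $D_H^\mu(\rho\Vert\mathbbm{1}_A\otimes\tilde\rho^*_B)$ in terms of $D_\infty(\tilde\rho^*\Vert\mathbbm{1}_A\otimes\tilde\rho^*_B)$ via the Fuchs--van de Graaf inequality $\tfrac{1}{2}\|\rho-\tilde\rho^*\|_1 \leq \sqrt{1-F(\rho,\tilde\rho^*)^2}$ combined with $F(\rho,\tilde\rho^*)\geq 1-\mu^2$, which is the natural source of the $1-\mu^2$ factor appearing in the target constant.
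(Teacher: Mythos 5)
Your route is essentially the paper's own: the reduction through Eq.~\eqref{eq:prep_erase}, the zero-error step via choosing $\sigma=\rho_B$ in the infimum and the elementary bound $D_H^0(\rho\Vert\sigma)\le D_{\max}(\rho\Vert\sigma)$, and, for $\mu\in(0,1)$, the hypothesis-testing-versus-smoothed-max-relative-entropy inequality applied with $\varepsilon=\mu$ and the smoothing parameter set to $1-\mu-\mu^2$ are exactly the ingredients used in Appendix~\ref{app:proof_eras_prep}. Your $\mu=0$ argument is complete and correct. For general $\mu$, your bookkeeping with an optimizer $\tilde\rho^*\in\mathcal{B}^\mu(\rho)$ and the choice $\sigma=\tilde\rho^*_B$, so that $D^\mu_{\max}(\rho\Vert\mathbbm{1}_A\otimes\tilde\rho^*_B)\le D_{\max}(\tilde\rho^*\Vert\mathbbm{1}_A\otimes\tilde\rho^*_B)=-S^{\downarrow,\mu}_{\min}(A|B)_\rho$, is in fact more careful than the paper's treatment, which fixes $\sigma=\rho_B$ and directly identifies $D^\mu_{\max}(\rho\Vert\mathbbm{1}_A\otimes\rho_B)$ with $-S^{\downarrow,\mu}_{\min}(A|B)_\rho$.

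On the obstacle you flag: your algebra is right that this substitution yields the constant $\log\bigl[(1-\mu-\mu^2)^2/(4(1-\mu^2))\bigr]$, i.e.\ an extra (negative) term $\log(1-\mu-\mu^2)$ compared with the stated $\log\bigl[(1-\mu-\mu^2)/(4(1-\mu^2))\bigr]$. You should be aware, however, that the paper's proof performs precisely the same substitution, $\delta=1-\mu-\mu^2$ into the inequality containing $\log(4/\delta^2)$, and then writes the unsquared constant; no optimization over the free parameter and no Fuchs--van de Graaf step appears there, so the square is simply dropped. There is thus no hidden idea for you to recover: either one accepts the slightly weaker constant $2\log(1-\mu-\mu^2)-\log(1-\mu^2)-2$ (which your argument proves, and which differs from the claim only where the claim is non-vacuous, i.e.\ when $1-\mu-\mu^2>0$), or one needs a variant of the $D_H$-versus-$D^\varepsilon_{\max}$ relation with $\mu'$ rather than $(\mu')^2$ in the numerator. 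Your final suggestion that Fuchs--van de Graaf is the source of the $1-\mu^2$ factor is a red herring: that factor already comes from the $4(1-\varepsilon^2)$ denominator at $\varepsilon=\mu$; the only discrepancy is the power of $1-\mu-\mu^2$, and it is present in the paper's own derivation as well.
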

\begin{proof}
    The relation between the smooth relative entropy and the hypothesis testing for a state $\rho$ and a positive semidefinite operator $\sigma$ is given as~\cite{ABJT19,RLD25}
\begin{equation}
    D_{{H}}^{1-\mu^2-\delta}(\rho\Vert\sigma)-\log\frac{4}{\delta^2}\le {D}_\infty^{\mu}(\rho\Vert\sigma)-\log\frac{1}{1-\mu^2}~,
\end{equation}
where, $\mu\in(0,1)$ and $\delta\in (0,1-\mu^2)$. Taking $\delta=1-\mu-\mu^2$, we have
\begin{align}
    {D}^\mu_\infty (\rho\Vert\sigma)-D_{{H}}^\mu (\rho\Vert\sigma)\ge \log \left[\frac{1}{4}\left(1-\frac{\mu}{1-\mu^2}\right)\right].
\end{align}
Considering that $\rho\in\St(AB)$ and $\sigma_{AB}=\mathbbm{1}_A\otimes\rho_B$, we get
\begin{align}
 \log\left[\frac{1}{4}\left(1-\frac{\mu}{1-\mu^2}\right)\right]  
 &\leq  {D}^\mu_\infty (\rho_{AB}\Vert\mathbbm{1}_A\otimes\rho_B)- D_{{H}}^\mu (\rho_{AB}\Vert\mathbbm{1}_A\otimes\rho_B)\nonumber\\ 
    &\leq {D}^\mu_\infty (\rho\Vert\mathbbm{1}_A\otimes\rho_B)-\inf_{{\varphi}\in\St(B)}D_{{H}}^\mu(\rho\Vert\mathbbm{1}_A\otimes\varphi_B )\nonumber\\
    &  = -{S}^{\downarrow,\mu}_\infty(A|B)_{\rho}+S^{\uparrow,\mu}_{{H}}(A|B)_{\rho} \nonumber\\
    & = \frac{\beta}{\ln 2} \left(\widetilde{W}^{\mu}_{\mathrm{prep}}(A|B)_\rho+ \widetilde{W}^{\mu}_{\mathrm{eras}}(A|B)_\rho \right),
\end{align}
where $\beta=(k_BT)^{-1}$ is inverse temperature.

For $\mu=0$, we get
\begin{equation}
     \widetilde{W}^{0}_{\mathrm{prep}}(A|B)_\rho+ \widetilde{W}^{0}_{\mathrm{eras}}(A|B)_\rho\ge 0,
\end{equation}
using the inequality
\begin{equation}
    D_{H}^0(\rho\Vert\sigma)= \xoverline{D}_0(\rho\Vert\sigma)\le D_\infty(\rho\Vert\sigma).
\end{equation}
\end{proof}

\subsection{Proof of Proposition~\ref{prop:eras_prep_bounds}}\label{app:eras_prep_bounds_proof}
\begin{nonnumberedprop}
  Given a quantum channel $\n_{A'\to A}$ with an isometric extension $\mathcal{V}^\n_{A'\to AE}$ and a reference $R$, the work costs of preparing and erasing a channel when a reservoir (bath) is at a fixed temperature $T$, are bounded from the above as, for $\mu\in(0,1)$
\begin{align}
    \widetilde{W}^\mu_{\mathrm{prep}}[A|R]_\n &\le -S^\mu_{\min}[\n]k_BT\ln 2,\\
\widetilde{W}^\mu_{\mathrm{eras}}[A|E]_\n&\le  -\left(S_{\min}[\n]-\log(1-\mu)\right)k_BT\ln 2.
\end{align}
The bounds saturate for the zero-error work costs, i.e., when $\mu=0$, 
\begin{equation}
      \widetilde{W}^0_{\mathrm{prep}}[A|R]_\n=\widetilde{W}^0_{\mathrm{eras}}[A|E]_\n= -S_{\min}[\n]k_BT\ln 2.
\end{equation}
\end{nonnumberedprop}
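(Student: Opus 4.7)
My plan is to prove the three claims in order: (i) the preparation upper bound, (ii) the saturation of both costs at $\mu=0$, and (iii) the erasure upper bound, which I obtain by combining (ii) with a relative-entropy inequality from the preliminaries.

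For the preparation bound, Eq.~\eqref{eq:prep_erase} gives $\widetilde{W}^\mu_{\mathrm{prep}}[A|R]_\n = -\inf_{\op\psi\in\St(RA')}S_{\min}^{\downarrow,\mu}(A|R)_{\n(\psi)}\cdot k_BT\ln 2$. The key observation is that for any $\m\in\mathcal{B}^\mu[\n]$ and any pure $\psi_{RA'}$, the channel-ball definition (via diamond norm controlling output purified distance) implies $\m(\psi)\in\mathcal{B}^\mu(\n(\psi))$, so $\m(\psi)$ is an admissible smoothing of $\n(\psi)$ and hence $S_{\min}^{\downarrow,\mu}(A|R)_{\n(\psi)}\ge S_{\min}^\downarrow(A|R)_{\m(\psi)}$. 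Taking $\inf$ over pure $\psi$ yields $S_{\min}[\m]$, and then taking $\sup$ over $\m\in\mathcal{B}^\mu[\n]$ produces $S^\mu_{\min}[\n]$, proving the bound. At $\mu=0$ no smoothing is needed, and Eq.~\eqref{eq:smin-uparrow} directly yields $\widetilde{W}^0_{\mathrm{prep}}[A|R]_\n=-S_{\min}[\n]\,k_BT\ln 2$.

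For the zero-error erasure identity, Eq.~\eqref{eq:prep_erase} combined with $D_H^0=\overline{D}_0$ yields $\widetilde{W}^0_{\mathrm{eras}}[A|E]_\n/(k_BT\ln 2)=\sup_{\rho\in\St(A')}\overline{S}_0^\uparrow(A|E)_{\mathcal{V}^\n(\rho)}$. Every $\rho_{A'}$ has a pure purification $\psi_{RA'}$ for which $\mathcal{V}^\n(\psi_{RA'})$ is pure on $RAE$ with $AE$-marginal equal to $\mathcal{V}^\n(\rho)$; since $\overline{S}_0^\uparrow(A|E)$ depends only on this marginal, the supremum over $\rho$ equals the supremum over such pure $\psi$. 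On a pure tripartite state, the duality $S_\alpha^\downarrow(A|B)_\psi=-\overline{S}_{1/\alpha}^\uparrow(A|C)_\psi$ listed in Appendix~\ref{app:entropies}, taken at $\alpha\to\infty$, reads $\overline{S}_0^\uparrow(A|E)_\varphi=-S_{\min}^\downarrow(A|R)_\varphi$. Applying this with $\varphi=\mathcal{V}^\n(\psi)$ and noting that the $RA$-marginal of $\mathcal{V}^\n(\psi)$ is $\n(\psi)$ gives $\overline{S}_0^\uparrow(A|E)_{\mathcal{V}^\n(\psi)}=-S_{\min}^\downarrow(A|R)_{\n(\psi)}$. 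Taking suprema and invoking Eq.~\eqref{eq:smin-uparrow} yields $\sup_\rho\overline{S}_0^\uparrow(A|E)_{\mathcal{V}^\n(\rho)}=-S_{\min}[\n]$, hence $\widetilde{W}^0_{\mathrm{eras}}[A|E]_\n=-S_{\min}[\n]\,k_BT\ln 2$. This is the link to Proposition~\ref{theo:channel_entropy} promised in the main text.

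For the general erasure bound, I substitute $\tau=\mathcal{V}^\n(\rho_{A'})$ and $\omega=\mathbbm{1}_A\otimes\sigma_E$ into the preliminaries inequality $D_H^\mu(\tau\Vert\omega)\ge\overline{D}_0(\tau\Vert\omega)+\log\tfrac{1}{1-\mu}$, take $\inf$ over $\sigma_E$, and negate to get the pointwise bound $S_H^\mu(A|E)_{\mathcal{V}^\n(\rho)}\le\overline{S}_0^\uparrow(A|E)_{\mathcal{V}^\n(\rho)}+\log(1-\mu)$. Supremizing over $\rho$ and using the zero-error identity produces $\widetilde{W}^\mu_{\mathrm{eras}}[A|E]_\n\le\bigl(-S_{\min}[\n]+\log(1-\mu)\bigr)k_BT\ln 2$. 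The main obstacle is the zero-error erasure identity: the pointwise inequality $\overline{S}_0^\uparrow\ge S_{1/2}^\uparrow$ (from $\overline{D}_0\le D_{1/2}$) means that the match with Proposition~\ref{theo:channel_entropy}'s $\sup_\rho S_{1/2}^\uparrow$ expression cannot be made pointwise but only after taking the supremum, and the two suprema coincide precisely because of the further identity $\inf_{\op\psi}S_{\min}^\downarrow(A|R)_{\n(\psi)}=\inf_{\op\psi}S_{\min}(A|R)_{\n(\psi)}=S_{\min}[\n]$ in Eq.~\eqref{eq:smin-uparrow}.
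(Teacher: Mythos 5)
Your proof is correct and follows essentially the same route as the paper: the preparation bound via the channel-ball-to-state-ball smoothing argument (the paper invokes its Lemma on smoothed channel entropies, whose proof is exactly your argument, while you re-derive the $\downarrow$-variant directly, which is in fact the variant needed), the erasure bound via the inequality $D_H^\mu\ge\xoverline{D}_0+\log\frac{1}{1-\mu}$ applied with $\sigma=\mathbbm{1}_A\otimes\sigma_E$ on the isometric-extension output, and saturation at $\mu=0$ via the duality $S_{\min}^\downarrow(A|R)_{\mathcal{V}^\n(\psi)}=-\xoverline{S}_0^{\uparrow}(A|E)_{\mathcal{V}^\n(\psi)}$ together with Eq.~\eqref{eq:smin-uparrow}. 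The closing remark about matching Proposition~\ref{theo:channel_entropy} only after the supremum is a harmless aside and does not affect the argument.
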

\begin{proof}
We begin by proving the first bound. Given a pure state $\psi_{RA'}$ and a channel $\n_{A'\to A}$, from Eq.~\eqref{eq:prep_erase} (\cite{JGW25}) we have, for $\mu\in[0,1)$
\begin{equation}
        \sup_{\op{\psi}_{RA'}}  \widetilde{W}^{\mu}_{\mathrm{prep}}(A|R)_{\n(\psi)}= -\inf_{\op{\psi}_{RA'}}{S}_{\infty}^{\downarrow,\mu}(A|R)_{\n(\psi)}k_BT\ln 2.
\end{equation}
Using Lemma~\ref{lem:smooth-entropy}, we get the desired bound $\widetilde{W}^{\mu}_{\mathrm{prep}}[A|R]_\n \leq -S_{\infty}^{\mu}[\n]k_BT\ln 2$. 

To prove the second bound, we make use of the following inequality from~\cite{RLD25}, for $\mu\in[0,1)$ 
\begin{equation}
    D_{H}^{\mu}(\rho\Vert\sigma)\ge\xoverline{D}_{0}(\rho\Vert\sigma)+\log\dfrac{1}{1-\mu}.
\end{equation}
The detailed steps are as follows. For $\mu\in[0,1)$, we have
\begin{align}
   \frac{1}{k_BT\ln 2}\left(\sup_{\rho\in\St(A')}\widetilde{W}_{\mathrm{eras}}^\mu(A|E)_{\mathcal{V}^\n(\rho)}\right)
& =   \sup_{\rho_{A'}}S_{H}^{\uparrow,\mu}(A|E)_{\mathcal{V}^\n(\rho)} \\
        &=   \sup_{\rho\in\St(A')}\left[-\inf_{\sigma\in\St(E)} D_{H}^{\mu}(\mathcal{V}^\n(\rho_{A'})\Vert \mathbbm{1}_A\otimes\sigma_E)\right],\label{eq:prop2a}
\end{align}
then
\begin{align}
\sup_{\rho_{A'}}S_{H}^{\uparrow,\mu}(A|E)_{\mathcal{V}^\n(\rho)}
      &\leq   \sup_{\rho\in\St(A')}\left[-\inf_{\sigma\in\St(E)} \xoverline{D}_0(\mathcal{V}^\n(\rho_{A'})\Vert \mathbbm{1}_A\otimes\sigma_E)+\log(1-\mu)\right]\\
        & =   \sup_{\rho\in\St(A')}\xoverline{S}_0^{\uparrow}(A|E)_{\mathcal{V}^\n(\rho)}+\log(1-\mu)\\
        & = - \inf_{\op{\psi}\in\St(RA')}{S}_\infty^{\uparrow}(A|R)_{\n(\psi)}+\log(1-\mu)\\
        & = -S_\infty[\n]+\log(1-\mu).
    \end{align}  
   Finally, the saturation of the bounds at $\mu=0$ is evident, since
   \begin{align}
     S_{\min}[\n] & =\inf_{\op{\psi}\in\St(RA')} {S}_{\min}^{\downarrow,0}(A|R)_{\n(\psi)}\\ &=\inf_{\op{\psi}\in\St(RA')}{S}_{\min}^{\downarrow}(A|R)_{\n(\psi)}\\
    &= \inf_{\op{\psi}\in\St(RA')}\left(-\xoverline{S}_0(A|E)_{\mathcal{V}^\n(\psi)}\right)\\
    &= -\sup_{\rho\in\St(A')}\xoverline{S}_0(A|E)_{\mathcal{V}^\n(\rho)},
   \end{align}
   where $\mathcal{V}^{\n}_{A'\to AE}$ is an isometric extension channel  of $\n_{A'\to A}$.
\end{proof}
\section{Work extraction protocol}\label{app:protocol}
In this appendix, we will review the protocol of extracting work from a pure state~\cite{RAR+11}, where it is converted into a maximally mixed state. Let us have a finite $d$-dimensional system in a pure state $\ket{\varphi_0}$ with a fully degenerate Hamiltonian $H$. Let the set of orthogonal pure states $\{\ket{\varphi_i}\}_{i=0}^{N-1}$ form the basis of the $2^d$ Hilbert space, where $N=2^d-1$. Since the Hamiltonian is fully degenerate, all these states are also the eigenstates of $H$. The protocol is given as follows:
\begin{enumerate}
    \item Increase the energy levels of the states $\{\ket{\varphi_i}\}_{1}^{N-1}$ to a very high value $E\to \infty$, where the ground state energy is taken to be $E_0=0$. We can do this by manipulating the parameters of the Hamiltonian, e.g. the strength of the magnetic field. Since these energy levels are empty, this step does not lead to a work expenditure.
    \item Attach the system to a thermal bath at temperature $T$. In this configuration, the higher energy levels are occupied with probability $p(E)=N/(N+e^{\frac{E}{k_BT}})$, which vanishes in the limit $E\to \infty$ for any finite temperature.
    \item      Lowering the energy of the states $\{\ket{\varphi_i}\}_{1}^{N-1}$, their occupancy increases. Doing this isothermally yields the energy of $dk_BT\ln 2$ joules.
\end{enumerate}
The erasure process is exactly the opposite of the above protocol, where one converts the maximally mixed state into a pure state, resulting in the expenditure of $dk_BT\ln 2$ joules of work.
    \end{appendix}
\bibliography{thermo}   
\end{document}